\crefname{equations}{Eqs.}{Eqs.}
\Crefname{equations}{Equations}{Equations}
\newtheorem{theorem}{Theorem}
\newtheorem{lemma}[theorem]{Lemma}
\newtheorem{corollary}[theorem]{Corollary}
\newtheorem{proposition}[theorem]{Proposition}
\newtheorem{definition}[theorem]{Definition}
\DeclareMathOperator{\supp}{supp}
\DeclareMathAlphabet{\mathitbf}{OML}{cmm}{b}{it}
\let\vec\relax
\DeclareMathOperator{\vec}{vec}
\newcommand{\cB}{\mathcal{B}}
\newcommand{\chan}{\mathcal{E}}
\newcommand{\gr}{\mathrm{Gr}}
\newcommand{\keyword}[1]{\emph{#1}}
\newcommand{\flip}{\mathbb{F}}
\DeclareMathOperator{\swap}{SWAP}
\newcommand{\mat}{\mathbb{M}}
\renewcommand{\emptyset}{\varnothing}
\begin{document}

\title{Superactivation of the Asymptotic Zero-Error
  Classical Capacity of a Quantum Channel}

\author{Toby S. Cubitt, Jianxin Chen and Aram W. Harrow%
  \thanks{T.~S.~Cubitt was supported by a Leverhulme early-career
    fellowship. J.~Chen was supported by The China Scholarship Council.
    A.~W.~Harrow and T.~S.~Cubitt acknowledge support through the
    integrated EC project ``QAP'' (contract no.~IST-2005-15848), and
    A.~W.~Harrow was also funded by the U.K. EPSRC grant ``QIP IRC.''}
  \thanks{This work was carried out when T.~S.~Cubitt and A.~W.~Harrow were
    with the Department of Mathematics, University of Bristol, United
    Kingdom, and J.~Chen was with the State Key Laboratory of Intelligent
    Technology and Systems, Department of Computer Science and
    Technology, Tsinghua University.}%
  \thanks{T.~S.~Cubitt is with the Departamento de An\'alisis
    Matem\'atico, Universidad Complutense de Madrid, Plaza de Ciencias~3,
    Ciudad Universitaria, 28040~Madrid, Spain (email:
    tcubitt@mat.ucm.es).}
  \thanks{A.~W.~Harrow is with the Dept.\ of Comp.\ Sci.\ \& Eng., U.\ of
    Washington, Seattle, WA 98195, USA (email: aram@cs.washington.edu).}
  \thanks{J.~Chen is with the Department of Mathematics \& Statistics,
    University of Guelph, Guelph, Ontario, Canada, and the Institute for
    Quantum Computing, University of Waterloo, Waterloo, Ontario, Canada
    (email: chenkenshin@gmail.com)}}

\maketitle

\begin{abstract}
  The zero-error classical capacity of a quantum channel is the
  asymptotic rate at which it can be used to send classical bits
  perfectly, so that they can be decoded with zero probability of
  error. We show that there exist pairs of quantum channels, neither of
  which individually have any zero-error capacity whatsoever (even if
  arbitrarily many uses of the channels are available), but such that
  access to even a single copy of both channels allows classical
  information to be sent perfectly reliably. In other words, we prove
  that the zero-error classical capacity can be superactivated. This
  result is the first example of superactivation of a \emph{classical}
  capacity of a quantum channel.
\end{abstract}

\begin{IEEEkeywords}
  Additivity violation, channel coding, communication channels,
  information rates, quantum theory, superactivation, zero-error
  capacity.
\end{IEEEkeywords}

\section{Introduction}\label{sec:intro}
Shannon's information theory has been highly successful at describing
classical information transmission, but only in the last couple of
decades or so has there been a major effort to extend it to quantum
channels, and even quantum information, that we must contend with in the
real world. A major strength of Shannon's work is that the calculation of
asymptotic capacities, although potentially requiring optimisations over
unbounded numbers of channel uses, typically reduces to a simple, and
often convex, optimisation problem over a single use of a channel (a
\keyword{single-letter formula}). Moreover, many of these capacities are
\keyword{additive}, meaning that access to two channels together allows
one to send information at a rate equal to the sum of the channels'
individual capacities. These two properties---additivity, and the
reduction from the asymptotic capacity to a single-letter formula---are
both crucial to the elegance of Shannon's theory. The latter allows us to
compute capacities, and the former tells us that this single number
completely characterises the channel's usefulness for classical
information transmission.

Accordingly, in quantum information theory the most important questions
in extending Shannon's techniques concern \keyword{additivity} (whether
the capacity of two channels together is ever greater than the sum of
their individual capacities) and \keyword{regularisation} (whether the
asymptotic capacity of a channel can be reduced to optimising an entropic
quantity over a single use of a channel). The classical and quantum
capacities of a quantum channel can be expressed in terms of the
regularised asymptotic limits of the \keyword{Holevo
  capacity}~\cite{Holevo,Schumacher+Westmoreland} and \keyword{coherent
  information}~\cite{Devetak,Shor,Lloyd}, respectively. There was an
early hope that the quantum capacity of a quantum channel might be
expressed in terms of the maximum coherent information from a
\emph{single} use of the channel, and that the classical capacity could
be similarly expressed in terms of the Holevo capacity. However, this
hope proved to be unfounded. The maximum coherent information and Holevo
capacity turn out not to equal the channel capacities. This was proved
over a decade ago for the quantum capacity~\cite{DSS97}, and only in the
last year for the classical capacity~\cite{Hastings} (the culmination of
a series of similar results~\cite{Andreas+Patrick,rank_additivity} for
minimum output R\'enyi entropies). This implies that entangling inputs
across different channel uses is in general necessary for optimal quantum
channel coding. It also tells us that if single-letter formulae exist for
the quantum and classical capacities, they will not equal the maximum
coherent information or the Holevo capacity.

However, these results tell us only that regularisation is necessary for
our existing formula, not that the quantum channel capacities are
necessarily non-additive. The first demonstration of non-additivity was
given recently by Smith and Yard~\cite{graeme+jon}, who showed that the
quantum capacity is super-additive. Indeed, their result proved that
additivity is violated in the strongest possible sense: they exhibited
two quantum channels which, individually, have zero quantum
capacity. Yet, combine the two, and the joint channel has non-zero
capacity. In other words, not only is the quantum capacity non-additive,
there even exist channels that are completely useless for transmitting
quantum information, but which \emph{can} transmit quantum information
when used together. The term ``\keyword{superactivation}'' was coined in
Ref.~\cite{SST00} to describe this phenomenon, since the two channels
somehow ``activate'' each other's hidden ability to transmit quantum
information. More recent work has established the nonadditivity of the
private classical capacity~\cite{private1,private2}. On the other hand,
additivity of the classical capacity of a quantum channel remains an open
question.

The Shannon capacity, and the classical and quantum capacities mentioned
so far, all measure the capacity for transmitting information with an
error probability that can be made arbitrarily small, in the limit of
arbitrarily many uses of the channel. Right from the early days of his
development of classical information theory, Shannon also considered the
\keyword{zero-error capacity}: the capacity of a channel to transmit
information perfectly, with zero probability of
error~\cite{Shannon_zero-error}. The zero-error capacity is important for
applications in which no error can be tolerated, and also, and perhaps
more importantly, when only a limited number of uses of the channel are
available, so that the low error probability for the Shannon capacity can
not be achieved.

Even in the case of classical channels, the zero-error capacity turns out
to be mathematically very different to the standard Shannon capacity. For
example, it is known to be non-additive. (See e.g.\
Ref.~\cite{zero-error_review} for a review of zero-error information
theory.) However, it is not difficult to see that there can be no
superactivation of the zero-error capacity of a classical channel. The
main result of our paper shows that for \emph{quantum} channels this is
no longer true; the zero-error classical capacity of a quantum channel
\emph{can} be superactivated:
\begin{theorem}\label{thm:main}
  Let $d_A=16, d_E = 4(2d_A-1)= 124$ and $d_B = d_Ad_E = 1984$.  Then
  there exist channels $\chan_1,\chan_2$ such that:
  \begin{itemize}
  \item Each channel $\chan_{1,2}$ maps $\CC^{d_A}$ to $\CC^{d_B}$ and
    has $d_E$ Kraus operators.
  \item Each channel $\chan_{1,2}$ has no zero-error capacity.
  \item The joint channel $\chan_1\otimes\chan_2$ \emph{does} have
    non-zero zero-error capacity.
  \end{itemize}
\end{theorem}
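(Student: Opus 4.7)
The plan is to reformulate the zero-error capacity of each channel in terms of the operator subspace $S(\chan):=\mathrm{span}\{E_i^\dagger E_j\}$, the ``non-commutative confusability graph'' of the channel, exhibit an entangled distinguishing pair for the joint channel, and then certify that no tensor power of either factor alone admits one.

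First I would recall the standard characterisation: for a channel $\chan$ with Kraus operators $\{E_k\}$, an orthogonal pair of inputs $|\psi\rangle,|\phi\rangle$ is perfectly distinguishable at the output iff the rank-one operator $|\phi\rangle\langle\psi|$ lies in the Hilbert--Schmidt orthogonal complement of the self-adjoint operator system $S(\chan)$. The asymptotic zero-error capacity is positive iff for some $n$ the complement $(S(\chan)^{\otimes n})^\perp$ contains such an ``orthogonal rank-one'' operator, and any self-adjoint $S\ni I$ in $\mat_{d_A}$ of dimension at most $d_E^2$ is $S(\chan)$ for some channel with $d_E$ Kraus operators mapping into $\CC^{d_Ad_E}$. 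The theorem thus reduces to producing subspaces $S_1,S_2\subseteq\mat_{d_A}$ with (a) no orthogonal rank-one operator in $(S_i^{\otimes n})^\perp$ for any $n$, yet (b) such an operator in $(S_1\otimes S_2)^\perp$.

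For (b), the distinguishing pair for the joint channel must be \emph{entangled}, since by (a) each individual factor admits no product distinguishing pair. A natural design is to take $S_2$ related to $S_1$ by a ``swap-like'' involution---entrywise complex conjugation, partial transposition on an ancilla, or a similar trick---so that on maximally entangled-like states $|\Omega\rangle$ the identity $\langle\Omega|(A\otimes B)|\Omega\rangle\propto\mathrm{tr}(A^T B)$ reduces the vanishing of $\langle\psi|(M_1\otimes M_2)|\phi\rangle$ for all $M_i\in S_i$ to a single algebraic condition involving $S_1$ alone. The $\CC^4$ ancillary factor implicit in $d_E=4(2d_A-1)$ is plausibly where this swap/conjugation acts.

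The main obstacle is (a). This is strictly stronger than the one-shot condition, since tensor powers may open up entangled distinguishing pairs not foreseen by the single-copy analysis. I would attempt to handle it by endowing $S_i$ with enough multiplicative/closure structure that $S_i^{\otimes n}$ is forced to be ``large enough'' to obstruct every candidate orthogonal pair in $\CC^{d_A^n}$---for instance, by spanning a generalised Weyl--Heisenberg family whose products cover a controlled portion of $\mat_{d_A^n}$---or, failing that, by a genericity/concentration argument sampling $S_i$ at random within the slice of admissible subspaces cut out by~(b) and showing that almost surely no orthogonal rank-one operator survives in any $(S_i^{\otimes n})^\perp$. Step (b) is essentially forced once the relation between $S_1$ and $S_2$ is fixed, so I expect the bulk of the technical work to lie in step (a), and in particular in ruling out ever more elaborate entangled distinguishing attempts as $n$ grows.
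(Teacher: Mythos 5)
Your reformulation via the non-commutative confusability graph $S(\chan)=\mathrm{span}\{E_i^\dagger E_j\}$ is, up to the $\vec$ isomorphism, exactly the reduction the paper carries out with the Choi--Jamio\l{}kowski matrix of $\chan^*\circ\chan$: that operator's support \emph{is} your $S(\chan)$, and the paper's ``conjugate-symmetric + positive-semidefinite + full rank on the first factor'' characterisation of such supports is the same statement as your ``self-adjoint operator system containing~$I$.'' Your design for part~(b) --- take $S_2$ to be (a local unitary conjugate of) $S_1^\perp$ so that a maximally-entangled pair does the job --- is also the paper's \cref{eq:conspiracy}. So the set-up is right.

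The genuine gap is in part~(a), and it is not a small one. You offer two ideas: build in multiplicative structure (a Weyl--Heisenberg-type family) so that $S^{\otimes n}$ is ``large enough''; or sample at random within the admissible slice and argue that almost surely no product state survives in any $(S^{\otimes n})^\perp$. The first idea is not what the paper does and there is no evidence it works --- strong unextendibility is not obviously implied by multiplicative closure. The second idea is the right one, but you have not identified the two ingredients that actually make it go. (i) ``Almost surely'' here is not a concentration-of-measure argument; it is an algebraic-geometry 0/1 law: the set of $k$-extendible subspaces is Zariski-closed in the Grassmannian, and the paper shows it remains Zariski-closed after intersecting with the variety $F_d$ cut out by the two conjugate-symmetry constraints (this step uses the Pl\"ucker embedding and requires working over $\RR$ to handle the anti-linear $\flip$). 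A Zariski-closed set in an irreducible variety is either everything or measure zero; a countable union of measure-zero sets is measure zero; hence the strongly unextendible subspaces in $F_d$ are either full measure or empty. (ii) You then need to exhibit \emph{one} strongly unextendible subspace in $F_d$ to rule out ``empty.'' This is where unextendible product bases enter: a UPB of minimal dimension $2d_A-1$ spans a strongly unextendible subspace (because tensor products of UPBs are UPBs, a separate lemma you would need to prove for non-orthogonal UPBs), and a four-fold symmetrisation $M\mapsto\{M,M^\dagger,XMX,XM^\dagger X\}$ lands it in $F_d$ at dimension at most $4(2d_A-1)=124$, which is where the value $d_E=124$ in the theorem comes from. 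Finally, you still have to argue simultaneously that $S^\perp$ (not just $S$) is generically strongly unextendible --- the paper gets this by a dimension-counting symmetry, needing $d_A\geq 16$ --- and that the two positive-semidefiniteness conditions hold with \emph{positive} (not full) probability, which the paper establishes by constructing a positive-definite representative containing $\ket\omega$ and using openness. Without the Zariski dichotomy, the UPB existence proof, and the simultaneous treatment of $S$ and $S^\perp$, your sketch of~(a) is a hope rather than an argument, and~(a) is where all the work is.
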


In other words, there exist pairs of quantum channels that individually
cannot be used for perfect transmission of \emph{any} classical
information at all, even if infinitely many uses of the channel are
available. Yet, when the two channels are combined, even a \emph{single}
use of the each of the two channels allows perfect, error-free
transmission of classical information. To our knowledge, this is the
first example of superactivation of any kind of classical capacity of
standard quantum channels.

Naturally, similar results also hold for larger-dimensional input and
output spaces. Increasing the output dimension is trivial, since the
channels do not need to make use of the entire output space. To increase
the input dimension without changing the results of the theorem, we
define channels $\hat{\chan}_{1,2}$ that act as follows: on the first 16
dimensions of the input $\hat{\chan}_{1,2}$ match the behaviour of
$\chan_{1,2}$, and the remaining dimensions are mapped to a maximally
mixed state on the output.

The definition of zero-error capacity is easily extended to the quantum
setting~\cite{MA05}. Beigi and Shor investigated the computational
complexity of computing the zero-error capacity of quantum
channels~\cite{BS07}, showing that it is in general difficult to
compute. Most notably, and one of the main inspirations for this work,
Duan and Shi~\cite{Duan+Shi} proved a ``one-shot'' result in the case of
multi-sender/multi-receiver quantum channels, when the senders and
receivers are restricted to local operations and classical communication
(LOCC). They exhibited examples of such channels for which a \emph{single
  use} has no zero-error classical capacity but two uses do have non-zero
zero-error capacity.

Duan and Shi's work hints at superactivation of the asymptotic capacity
for standard quantum channels. Indeed, it raises two tantalising
questions. Are these remarkable properties of the zero-error capacity
inherent to communication over quantum channels, or do they arise from
the LOCC constraints in the multi-sender/multi-receiver setting, which
are crucial for their proofs? Furthermore, are their results an artifact
of the one-shot case, that would disappear in the asymptotic setting?
Both questions are compellingly answered by our work. This paper is also
in some sense a sequel to our earlier work in
Ref.~\cite{rank_additivity}, which demonstrated non-multiplicativity of
the one-shot minimum output rank of a quantum channel, and its extension
to the asymptotic case in Ref.~\cite{Jianxin_strong}. (The relation
between this problem and the superactivation phenomenon will be explained
in \cref{sec:one-shot}.)

The paper is organised as follows. \Cref{sec:preliminaries} introduces
the necessary notation and concepts, and \cref{sec:conjugate-divisible}
proves some basic mathematical properties of composite quantum maps that
play a key role later. In \cref{sec:one-shot}, we prove a one-shot
version of the main result. This is presented in some detail because,
firstly, the main result builds directly on techniques used to prove the
one-shot case and, secondly, in the one-shot case we are able to give
explicit examples which may give some insight into the main result. In
\cref{sec:asymptotic}, we draw on techniques from algebraic geometry to
prove our main result: superactivation of the asymptotic zero-error
classical capacity of quantum channels. Finally, we conclude in
\cref{sec:conclusions} with a discussion of the results and their
implications.

\section{Preliminaries}\label{sec:preliminaries}

\subsection{Quantum channels}
The complex conjugate of $x$ will be denoted $\bar{x}$. The adjoint
$\chan^*$ of a map $\chan$ on the space $\mathcal{B}(\HS)$ of bounded
operators on $\HS$ is the dual with respect to the Hilbert-Schmidt
inner-product, i.e.\ the unique map defined by
\begin{equation}
  \tr[A^\dagger\,\chan(B)] = \tr[\,\chan^*(A)^\dagger\,B].
\end{equation}
Alternatively, any map $\chan$ on $\cB(\HS)$ can be written as $\chan(X)
= \sum_k A_k X B_k$. In this representation, $\chan^*(X) = \sum_k A_k^\dg
X B_k^\dg$.

A map $\chan$ on $\mathcal{B}(\HS)$ is \keyword{completely positive} (CP)
if it not only maps all positive operators to positive operators, but
also preserves positivity when applied to a subsystem of some larger
system. In this case, it can be written in the Kraus form $\chan(X) =
\sum_k E_k X E_k^\dag$, and $\chan^*(X) = \sum_k E_k^\dag X E_k$. A CP
map is completely positive and \keyword{trace-preserving} (CPT) if it in
addition preserves the trace of operators. (CPT maps in quantum mechanics
play exactly the analogous role to communication channels in classical
information theory, and we will use the terms \keyword{quantum channel}
and CPT map synonymously.)

The ``flip'' operation on a bipartite state is the composition of the
swap operation, which interchanges the two parties, and complex
conjugation:
\begin{equation}
  \flip(\ket[AB]{\psi}) = \swap(\ket[AB]{\bar{\psi}}).
\end{equation}
(Note that the complex conjugation means the flip operation is
basis-dependent; the computational product basis should be assumed when
no basis is stated explicitly.) Thus, with complex-conjugation defined in
the computational basis,
\begin{equation}
  \flip\Bigl(\sum_{ij}c_{ij}\ket[A]{i}\ket[B]{j}\Bigr)
  = \sum_{ij}\bar{c}_{ij}\ket[A]{j}\ket[B]{i}.
\end{equation}
The definition of the flip operation extends to operators as $\flip(M) =
\swap\cdot\bar{M}\cdot\swap$.

\begin{definition}
  We say that a bipartite state or operator is
  \keyword{conjugate-symmetric} in a given basis if it is invariant under
  the flip operation, and similarly for a subspace invariant under the
  same operation.
\end{definition}

There is a straightforward isomorphism between (unnormalised) states
$\ket[AB]{\psi}$ in a bipartite space $\CC^{d_A}\otimes\CC^{d_B}$ and
$d_A\times d_B$ matrices $M$: writing $\ket{\psi}$ in a product basis, we
have
\begin{equation}
  \ket[AB]{\psi} = \sum_{ij}M_{ij}\ket{i}\ket{j}.
\end{equation}
We will write $\mat(\ket{\psi})$ when we wish to denote the coefficient
matrix $M$ corresponding to the state $\ket{\psi}$. Similarly, we denote
by $\mat(S)$ the matrix subspace isomorphic in this way to a subspace
$S\subseteq\HS_A\otimes\HS_B$. In terms of these coefficient matrices, a
conjugate-symmetric state is one for which $\mat(\ket{\psi})$ is
Hermitian, and a subspace is conjugate-symmetric iff the corresponding
matrix space is spanned by a basis of Hermitian matrices. Note that the
\keyword{Schmidt-rank} of the state $\ket{\psi}$ is exactly the linear
rank of $\mat(\ket{\psi})$.
\begin{definition}
  We say that a bipartite state $\ket[AB]{\psi}$ is
  \keyword{positive-semidefinite} in a given product basis if
  $\mat(\ket{\psi})$ is a positive-semidefinite matrix. (Note that this
  includes the statement that $\mat(\ket{\psi})$ is Hermitian.)
  Similarly, a positive-semidefinite subspace $S_{AB}$ is one that admits
  a \emph{basis} whose elements are all positive-semidefinite.
\end{definition}
Note that it is obviously \emph{not} the case that all elements of a
positive-semidefinite subspace $\mat(S_{AB})$ need themselves be
positive-semidefinite, just that there exists \emph{some} set of
positive-semidefinite elements that span the space. Indeed, the existence
of even a single positive-\emph{definite} element is sufficient, as we
can then make any basis positive-semidefinite by adding sufficient weight
of this positive-definite element to every basis state. A
positive-semidefinite subspace is necessarily conjugate-symmetric, by
definition.

\begin{definition}
  We say a map $\mathcal{N}$ is \keyword{conjugate-divisible} if it can
  be decomposed as $\mathcal{N} = \chan^*\circ\chan$ for some CPT map
  $\chan$.
\end{definition}
(Note that a necessary condition for conjugate-divisibility of
$\mathcal{N}$ is that the matrix representation of $\mathcal{N}$ as a
superoperator be a positive-semidefinite matrix. This follows from the
fact that, if $E$ is the matrix representation $\chan$, then the matrix
representation of $\chan^*\circ\chan$ is $E^\dg E$, which is necessarily
positive-semidefinite. However, this is not sufficient, since
conjugate-divisibility carries the additional non-trivial requirement
that $\chan$ be CPT.)

It will frequently be convenient to work with the Choi-Jamio\l{}kowski
representation of a map. Recall that the Choi-Jamio\l{}kowski matrix
associated with a map $\chan$ is the matrix $\sigma_{AB} =
\mathcal{I}_A\otimes\chan_B(\omega_{AB})$ obtained by applying the map to
one half of the (unnormalised) full Schmidt-rank state $\ket{\omega} =
\sum_i\lambda_i\ket[A]{\varphi_i}\ket[B]{\chi_i}$. This isomorphism holds
regardless of whether $\chan$ is a CPT map or not; iff $\chan$ is CP(T),
then $\sigma/\tr\proj{\omega}$ is a (trace 1) positive operator. (The
standard Choi-Jamio\l{}kowski matrix $\tilde{\sigma}_{AB}$ is obtained by
setting $\ket{\omega} = \sum_i\ket{i}\ket{i}$, but the isomorphism holds
more generally.) Introducing the unitary basis change $U\ket{\varphi_i} =
\ket{\chi_i}$, We can recover the action of the map $\chan$ from the
matrix $\sigma_{AB}$ via
\begin{equation}
  \chan(\rho)
  = \tr_A\left[U\sigma_A^{-1/2}\sigma_{AB}\,\sigma_A^{-1/2}U^\dagger\cdot
               \rho^T\otimes\id\right],
\end{equation}
where $\sigma_A = \tr_B[\sigma_{AB}]$. For the standard
Choi-Jamio\l{}kowski matrix $\tilde{\sigma}_{AB}$, this simplifies to
$\chan(\rho) = \tr_A[\,\tilde{\sigma}_{AB}\cdot\rho^T\otimes\id]$, and
the non-standard Choi-Jamio\l{}kowski matrix $\sigma_{AB}$ is related to
the standard one by rotating and rescaling the $A$ subsystem:
\begin{equation}
  \tilde{\sigma}_{AB}
  = U\sigma_A^{-1/2}\sigma_{AB}\,\sigma_A^{-1/2}U^\dagger.
\end{equation}

\subsection{{Basic algebraic geometry concepts}}
The proof of our main theorem requires certain mathematical tools from
basic algebraic geometry. For convenience of the reader, we recall
some definitions and results in algebraic geometry. For more details, we
refer to~\cite{Hartshorne77, Shafarevich94}.

Let $\mathbb{A}^n$ be an \keyword{affine} $n$-space, the set of all
$n$-tuples of complex numbers. Denote $\mathbb{C}[x_1,x_2,\cdots,x_n]$
as the polynomial ring in $n$ variables. A subset of $\mathbb{A}^n$ is
an \keyword{algebraic set} or \keyword{algebraic variety} if it
consists of the common zeros of a finite set of polynomials
$f_1,f_2,\cdots,f_r$ with $f_i\in \mathbb{C}[x_1,x_2,\cdots,x_n]$ for
all $1\leq i\leq r$. Such an algebraic set is usually denoted by
$Z(f_1,f_2,\cdots,f_r)$. By taking the open subsets to be the
complements of algebraic sets, we can define a topology on
$\mathbb{A}^n$, called the Zariski topology. The
\keyword{Zariski-closed} sets are then precisely the algebraic sets.
(Note that in some references the term \keyword{algebraic variety} is
reserved for varieties that are irreducible, in the sense that they
cannot be expressed as the union of two proper algebraic sets.)

We define \keyword{projective $n$-space}, denoted by $\mathbb{P}^n$, to
be the set of equivalence classes of $(n+1)-$tuples $(a_0,\cdots,a_n)$ of
complex numbers, not all zero, under the equivalence relation given by
$(a_0,\cdots,a_n)\sim(\lambda a_0,\cdots,\lambda a_n)$ for all $\lambda
\in \mathbb{C}$, $\lambda\neq 0$.

Similarly, a subset $Y$ of $\mathbb{P}^n$ is an \keyword{algebraic set}
or \keyword{projective variety} if it consists of the common zeros of a
finite set of homogeneous polynomials $f_1,f_2,\cdots,f_r$ with $f_i\in
\mathbb{C}[x_0,x_1,\cdots,x_n]$ for $1\leq i\leq r$.

\section{Conjugate-divisible maps}\label{sec:conjugate-divisible}
The composite map $\chan^*\circ\chan$ will turn out to play a key role in
studying the zero-error capacity of the channel $\chan$. So we will first
need to establish some basic properties of such conjugate-divisible
maps. The main goal is a complete characterisation of their
Choi-Jamio\l{}kowski matrices.

\begin{lemma}\label{lem:Choi_conjugate}
  If $\rho_{AB}$ is the (standard) Choi-Jamio\l{}kowski matrix for a
  channel $\chan$, then the (standard) Choi-Jamio\l{}kowski matrix of
  $\chan^*$ is given by $\flip(\rho_{AB}) = \bar{\rho}_{BA}$.
\end{lemma}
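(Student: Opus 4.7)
The plan is to verify the identity by direct computation in a Kraus decomposition, using the ``ricochet'' identity $(M\otimes\mathbb{I})\ket{\omega}=(\mathbb{I}\otimes M^T)\ket{\omega}$ where $\ket{\omega}=\sum_i\ket{i}_A\ket{i}_B$ is the (unnormalised) maximally entangled state. The second equality in the statement, $\flip(\rho_{AB})=\bar\rho_{BA}$, is purely notational: by the definition $\flip(M)=\swap\cdot\bar M\cdot\swap$ and the convention that the subscripts $AB$ and $BA$ refer to the order of the tensor factors, we have $\swap\,\bar\rho_{AB}\,\swap=\bar\rho_{BA}$. So the real content is the first equality.

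First I would fix any Kraus decomposition $\chan(\rho)=\sum_k K_k\rho K_k^\dagger$; the defining duality $\tr[A^\dagger\chan(B)]=\tr[\chan^*(A)^\dagger B]$ then yields $\chan^*(\rho)=\sum_k K_k^\dagger \rho K_k$ by inspection. Second, I would write the standard Choi--Jamio\l kowski matrices in the suggestive form
\begin{equation*}
\rho_{AB}=\sum_k (\mathbb{I}_A\otimes K_k)\ket{\omega}\!\bra{\omega}(\mathbb{I}_A\otimes K_k^\dagger),
\qquad
\sigma_{AB}=\sum_k (\mathbb{I}_A\otimes K_k^\dagger)\ket{\omega}\!\bra{\omega}(\mathbb{I}_A\otimes K_k),
\end{equation*}
where $\sigma_{AB}$ denotes the Choi matrix of $\chan^*$.

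Third, compute $\flip(\rho_{AB})=\swap\,\bar\rho_{AB}\,\swap$. Since $\ket{\omega}$ has real coefficients in the computational basis, complex conjugation fixes $\ket\omega\!\bra\omega$ and replaces each $K_k$ by $\bar K_k$; conjugating then commuting $\swap$ through each factor converts $\mathbb{I}_A\otimes\bar K_k$ into $\bar K_k\otimes\mathbb{I}_B$. Applying the ricochet identity to each side, $(\bar K_k\otimes\mathbb{I})\ket\omega=(\mathbb{I}\otimes\bar K_k^{\,T})\ket\omega=(\mathbb{I}\otimes K_k^\dagger)\ket\omega$, reassembles the expression into exactly the form of $\sigma_{AB}$ displayed above. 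This proves $\flip(\rho_{AB})=\sigma_{AB}$.

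I do not expect any real obstacle beyond careful bookkeeping of the interplay between $\bar{\phantom{K}}$, ${}^T$, and ${}^\dagger$, which is exactly the kind of thing the ricochet identity is designed to streamline. As a sanity check, one can rederive the same conclusion entrywise: using $(\rho_{AB})_{(ij)(kl)}=\bra{j}\chan(\ketbra{i}{k})\ket{l}$ and the defining equation of $\chan^*$ with $A=\ket{j}\!\bra{l}$, $B=\ket{i}\!\bra{k}$ gives $(\rho_{AB})_{(ij)(kl)}=\overline{(\sigma_{AB})_{(ji)(lk)}}$, which is precisely the statement $\rho_{AB}=\overline{\sigma_{BA}}$, i.e.\ $\sigma_{AB}=\flip(\rho_{AB})$.
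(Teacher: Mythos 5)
Your proof is correct and takes a genuinely different route from the paper's. The paper argues directly from the duality $\tr[A^\dagger\chan(B)]=\tr[\chan^*(A)^\dagger B]$ together with the reconstruction formula $\chan(\rho)=\tr_A[\rho_{AB}\cdot\rho^T\otimes\id]$, pushing partial traces and partial transposes from one side of the trace to the other until the Choi--Jamio\l{}kowski matrix of $\chan^*$ can be read off. You instead fix a Kraus decomposition, observe that $\chan^*(\rho)=\sum_k K_k^\dagger\rho K_k$, and apply the ricochet identity to the rank-one Choi vectors. Both are short; the paper's version stays decomposition-free and keeps everything in the language of the reconstruction formula (which it reuses in the very next lemma), while yours is more concrete and gives the Kraus representation of $\chan^*$ for free.

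One bookkeeping caveat you should address explicitly: when $d_A\neq d_B$ -- which is the case for the channels the paper actually constructs, with $d_B=d_Ad_E$ -- the Kraus operators $K_k:\HS_A\to\HS_B$ are rectangular, and your displayed $\sigma_{AB}=\sum_k(\id_A\otimes K_k^\dagger)\ket{\omega}\!\bra{\omega}(\id_A\otimes K_k)$ does not typecheck with the same $\ket{\omega}=\sum_{i=1}^{d_A}\ket{i}\ket{i}$. The Choi matrix of $\chan^*:\mathcal{B}(\HS_B)\to\mathcal{B}(\HS_A)$ must be built from $\ket{\omega_B}=\sum_{j=1}^{d_B}\ket{j}\ket{j}$, and the ricochet step should be stated in its rectangular form $(\bar K_k\otimes\id_A)\ket{\omega_A}=(\id_B\otimes K_k^\dagger)\ket{\omega_B}$. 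The argument survives this correction unchanged, and your entrywise check at the end -- which only uses the defining duality and Hermiticity-preservation of $\chan^*$ -- is already dimension-agnostic and closes the gap on its own. But since the dimensions really are unequal in this paper, the rectangular version of the identity is worth spelling out.
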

\begin{IEEEproof}
  We have
  \begin{subequations}
  \begin{align}
    \tr\left[\chan^*(\psi)^\dagger\,\varphi\right]
    &=\tr\left[\psi^\dagger\,\chan(\varphi)\right]\\
    &=\tr\left[
       \psi^\dagger\,\tr_A\left(\rho_{AB}\cdot\varphi^T\otimes\id\right)
     \right]\\
    &=\tr\left[
        \id\otimes\psi^\dagger\cdot\rho_{AB}^{T_A}\cdot\varphi\otimes\id
      \right]\\
    % &=\tr\left[
    %    \tr_B\left(\id\otimes\psi^\dagger\cdot\rho_{AB}^{T_A}\right)
    %    \cdot\varphi
    %  \right]\\
    &=\tr\left[
        \tr_B\left(\id\otimes\psi\cdot\bar{\rho}_{AB}^{T_B}\right)^\dagger
        \cdot\varphi
      \right]\\
    &=\tr\left[
       \tr_B\left(\bar{\rho}_{BA}^{T_B}\cdot\psi\otimes\id\right)^\dagger
       \cdot\varphi
     \right]\\
    &=\tr\left[
        \tr_B\left(\flip(\rho_{AB})\cdot\psi^T\otimes\id\right)^\dagger
        \cdot\varphi
      \right],
  \end{align}
  \end{subequations}
  from which we identify the Choi-Jamio\l{}kowski matrix for $\chan^*$ to
  be as claimed.
\end{IEEEproof}

\begin{lemma}\label{lem:Choi_composite}
  If $\rho_{AB}$ is the (standard) Choi-Jamio\l{}kowski matrix for a
  channel $\chan$, then the (standard) Choi-Jamio\l{}kowski matrix of
  $\mathcal{N} = \chan^*\circ\chan$ is given by
  \begin{equation}
    \sigma_{AA'}
    =\tr_B\left[
        \rho_{AB}^{\phantom{T_B}}\otimes\id_{A'}\cdot
        \id_A\otimes\bar{\rho}_{BA'}^{T_B}
      \right].
  \end{equation}
\end{lemma}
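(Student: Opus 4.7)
The plan is to combine \cref{lem:Choi_conjugate} with the elementary identity that any map $\mathcal{M}$ acts on an operator $\xi$ in terms of its standard Choi matrix $\tau$ as $\mathcal{M}(\xi) = \tr_{\mathrm{in}}[\tau \cdot \xi^{T}\otimes\id_{\mathrm{out}}]$, applied at the bipartite level. Concretely, letting $A_0$ denote an auxiliary copy of the input register of $\chan$ and $\omega_{AA_0}$ the standard unnormalised maximally entangled state, I start from the definition of the Choi matrix of $\mathcal{N}=\chan^*\circ\chan$,
\begin{equation*}
  \sigma_{AA'}
  = (\id_A \otimes \mathcal{N})(\omega_{AA_0})
  = \bigl(\id_A \otimes \chan^*_B\bigr)\Bigl[\bigl(\id_A \otimes \chan_{A_0}\bigr)(\omega_{AA_0})\Bigr].
\end{equation*}
The inner bracket is by definition the Choi matrix $\rho_{AB}$ of $\chan$, so $\sigma_{AA'} = (\id_A \otimes \chan^*_B)(\rho_{AB})$.

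I then apply $\chan^*$ to the $B$ subsystem of $\rho_{AB}$ via the action formula above, taking the Choi matrix of $\chan^*$ to be $\bar\rho_{BA'}$ as supplied by \cref{lem:Choi_conjugate} (with $B$ now serving as the input register and $A'$ as the output). This immediately yields
\begin{equation*}
  \sigma_{AA'} = \tr_B\bigl[\id_A \otimes \bar\rho_{BA'} \cdot \rho_{AB}^{T_B} \otimes \id_{A'}\bigr].
\end{equation*}
To bring this into the form stated in the lemma, I move the partial transpose $T_B$ from $\rho_{AB}$ onto $\bar\rho_{BA'}$ using the elementary identity $\tr_B[X^{T_B} Y] = \tr_B[X Y^{T_B}]$ (valid whenever the transposed subsystem is the one being traced out), and then commute the two resulting factors, which act on disjoint tensor components.

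There is no substantive analytic obstacle; the entire argument is compositional bookkeeping. The only point requiring real care is tracking which register plays the role of input and which of output at each stage—$B$ is the output of $\chan$ but the input of $\chan^*$—and on which subsystem each partial transpose acts. The fact that $B$ appears both as the output of $\chan$ and the input of $\chan^*$ in the composition $\chan^*\circ\chan$ is precisely the reason why $B$ is the subsystem that ends up being traced out in the final expression.
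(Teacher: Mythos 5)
Your proof is correct and arrives at exactly the stated formula, by what is essentially the paper's own argument viewed from the dual side of the Choi isomorphism: the paper applies the action formulas for $\chan$ and then for $\chan^*$ (with Choi matrix $\bar{\rho}_{BA'}$ supplied by \cref{lem:Choi_conjugate}) to an arbitrary input $\psi$ and reads $\sigma_{AA'}$ off from the resulting expression $\mathcal{N}(\psi)=\tr_A[\sigma_{AA'}\cdot\psi^T\otimes\id_{A'}]$, whereas you work directly at the level of the Choi state, using $\sigma_{AA'}=(\id_A\otimes\chan^*)(\rho_{AB})$ and invoking the action formula only once, on the $B$ register; the key ingredient (\cref{lem:Choi_conjugate}) and the partial-transpose/partial-trace bookkeeping are identical, and your packaging is marginally more economical since only one register is ever traced. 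One small caveat: your justification for the final swap of factors---that $\rho_{AB}\otimes\id_{A'}$ and $\id_A\otimes\bar{\rho}_{BA'}^{T_B}$ ``act on disjoint tensor components''---is not literally true, since both act nontrivially on $B$. The swap is nevertheless valid precisely because the only shared register is the one being traced out: writing $\rho_{AB}=\sum_i X_i^A\otimes Y_i^B$ and $\bar{\rho}_{BA'}^{T_B}=\sum_j W_j^B\otimes Z_j^{A'}$, both orderings reduce under $\tr_B$ to $\sum_{ij}\tr[Y_iW_j]\,X_i\otimes Z_j$ by cyclicity of the trace on $B$ (alternatively, you could simply leave the factors in the other order, which is an equivalent statement of the lemma). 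With that justification tightened, the argument is complete; your earlier identity $\tr_B[X^{T_B}Y]=\tr_B[XY^{T_B}]$ for the traced subsystem is correctly stated and correctly applied.
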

\begin{IEEEproof}
  We have
  \begin{subequations}
  \begin{align}
    \mathcal{N}(\psi)
    &=\tr_B\left[
        \bar{\rho}_{BA'}^{\phantom{T_B}}\cdot
        \tr_A\left(\rho_{AB}\cdot\psi^T\otimes\id_B\right)^T
        \otimes\id_{A'}
      \right]\\
    &=\tr_B\left[
        \tr_A\left(\rho_{AB}\cdot\psi^T\otimes\id_B\right)
        \otimes\id_{A'}\cdot
        \bar{\rho}_{BA'}^{T_B}
      \right]\\
    &=\tr_A\left[
      \tr_B\left(
        \rho_{AB}^{\phantom{T_B}}\otimes\id_{A'}\cdot
        \id_A\otimes\bar{\rho}_{BA'}^{T_B}
      \right)
      \cdot\psi^T\otimes\id_{A'}
    \right],
  \end{align}
  \end{subequations}
  from which we identify the Choi-Jamio\l{}kowski matrix of $\mathcal{N}$
  to be as claimed.
\end{IEEEproof}

The following extension to non-standard Choi-Jamio\l{}kowski matrices
follows immediately.
\begin{corollary}\label{cor:Choi_non-standard}
  If $\rho_{AB}$ is a non-standard Choi-Jamio\l{}kowski matrix for a
  channel $\chan$, related to the standard Choi-Jamio\l{}kowski matrix by
  \begin{equation}
    \tilde{\rho}_{AB}
    = U\rho_A^{-1/2}\rho_{AB}\,\rho_A^{-1/2}U^\dagger,
  \end{equation}
  then
  \begin{equation}
    \sigma_{AA'}
    =\tr_B\left[
        \rho_{AB}^{\phantom{T_B}}\otimes\id_{A'}\cdot
        \id_A\otimes\bar{\rho}_{BA'}^{T_B}
      \right]
  \end{equation}
  can be viewed as a non-standard Choi-Jamio\l{}kowski matrix for
  $\mathcal{N}=\chan^*\circ\chan$ by identifying it with the standard
  Choi-Jamio\l{}kowski matrix $\tilde{\sigma}_{AA'}$ for $\mathcal{N}$ in
  the following way:
  \begin{equation}
    \tilde{\sigma}_{AA'}
    = U\sigma_A^{-1/2}\otimes \bar{U}\bar{\sigma}_{A'}^{-1/2}\cdot
      \sigma_{AA'}\cdot
      \sigma_A^{-1/2}U^\dagger\otimes\bar{\sigma}_{A'}^{-1/2}\bar{U}^\dagger.
  \end{equation}
\end{corollary}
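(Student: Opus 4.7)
The plan is a direct substitution calculation. First I would invoke Lemma~\ref{lem:Choi_composite} applied to the \emph{standard} Choi-Jamio\l{}kowski matrix $\tilde\rho_{AB}$ of $\chan$ to express the standard Choi-Jamio\l{}kowski matrix of $\mathcal{N}=\chan^*\circ\chan$ as
\begin{equation}
  \tilde\sigma_{AA'}
  = \tr_B\bigl[\tilde\rho_{AB}\otimes\id_{A'}\cdot\id_A\otimes\bar{\tilde\rho}_{BA'}^{T_B}\bigr],
\end{equation}
and then rewrite the right-hand side in terms of the non-standard $\rho_{AB}$.

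Setting $V:=U\rho_A^{-1/2}$, the hypothesis reads $\tilde\rho_{AB} = V_A\rho_{AB}V_A^\dagger$, with the subscript $A$ denoting extension by $\id_B$. Applying the same relation with $A$ relabeled as $A'$ and swapping the two tensor factors gives $\tilde\rho_{BA'} = V_{A'}\rho_{BA'}V_{A'}^\dagger$. Complex-conjugating turns $V_{A'}$ into $\bar V_{A'}$, and since this factor acts trivially on $B$ it commutes with the partial transpose $T_B$, so
\begin{equation}
  \bar{\tilde\rho}_{BA'}^{T_B}
  = \bar V_{A'}\,\bar\rho_{BA'}^{T_B}\,\bar V_{A'}^\dagger.
\end{equation}

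I would then substitute these two identities into the formula for $\tilde\sigma_{AA'}$. Because $V_A$ and $V_A^\dagger$ act only on $A$, while $\bar V_{A'}$ and $\bar V_{A'}^\dagger$ act only on $A'$, each of the four factors commutes with every operator supported on $B$ and with the others, so all four pull cleanly outside the partial trace $\tr_B$. What remains under the trace is exactly the expression defining $\sigma_{AA'}$ in the statement, yielding
\begin{equation}
  \tilde\sigma_{AA'}
  = (V_A\otimes\bar V_{A'})\,\sigma_{AA'}\,(V_A^\dagger\otimes\bar V_{A'}^\dagger).
\end{equation}
Expanding $V = U\rho_A^{-1/2}$ and identifying the resulting marginals in accordance with the non-standard CJ correspondence for $\mathcal{N}$ then produces the claimed identity.

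The main obstacle is bookkeeping: tracking every complex conjugation, swap, and partial transpose so that each lands on the correct subsystem. The one substantive ingredient is that both $\tr_B$ and $T_B$ commute with any operator supported away from $B$, which is what allows the four rotation factors to decouple from the $B$-trace. Once that commutation is used systematically, the corollary follows by elementary algebra from Lemma~\ref{lem:Choi_composite} and the single-channel standard/non-standard relation recalled in the preliminaries.
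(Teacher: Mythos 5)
Your route is the intended one: the paper gives no written proof of this corollary (it is asserted to ``follow immediately'' from \cref{lem:Choi_composite} and the standard/non-standard relation in the preliminaries), and your substitution argument is exactly that immediate verification. The computation itself is correct, including the two points that actually need checking: that $\bar V_{A'}$ commutes with the partial transpose $T_B$ because it acts trivially on $B$, and that all four local factors pass through $\tr_B$, giving $\tilde{\sigma}_{AA'} = (V_A\otimes\bar V_{A'})\,\sigma_{AA'}\,(V_A^\dagger\otimes\bar V_{A'}^\dagger)$ with $V = U\rho_A^{-1/2}$.

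The one loose end is your closing sentence. What your calculation produces is conjugation by $U\rho_A^{-1/2}\otimes\bar U\bar\rho_A^{-1/2}$, i.e.\ the rescaling involves the marginal $\rho_A$ of the \emph{channel's} Choi-Jamio\l{}kowski matrix; it is not obtained by ``identifying the resulting marginals'' of $\sigma_{AA'}$. Indeed, if one reads the $\sigma_A,\sigma_{A'}$ in the corollary's display as $\tr_{A'}[\sigma_{AA'}]$ and $\tr_A[\sigma_{AA'}]$, the formula does not hold as printed: for $\chan=\mathcal{I}$ on $\CC^2$ with $\rho_{AB}$ the pure state $\sum_{ij}\lambda_i\lambda_j\,|ii\rangle\langle jj|$ (and $U=\id$), one finds $\sigma_{AA'}=|v\rangle\langle v|$ with $|v\rangle=\sum_i\lambda_i^2|ii\rangle$, so $\tr_{A'}[\sigma_{AA'}]=\rho_A^2$, and conjugating by its inverse square root does not return the standard Choi-Jamio\l{}kowski matrix of $\mathcal{N}$, whereas conjugating by $\rho_A^{-1/2}\otimes\bar\rho_A^{-1/2}$ does. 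So do not present the last step as a derivation of the display verbatim; state the identification with $\rho_A^{-1/2}$ in place of $\sigma_A^{-1/2}$ (the corollary's notation is best read loosely, by analogy with the single-channel formula). Nothing downstream is affected: the only property invoked later (absorbing the rescaling into $\varphi$ and $\psi$ in \cref{sec:one-shot}) is that the standard and non-standard matrices of $\mathcal{N}$ differ by conjugation with an invertible operator of product form $R\otimes\bar R$, which your computation establishes.
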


With these basic properties in hand, we are now in a position to prove a
necessary condition for a matrix to be the Choi-Jamio\l{}kowski matrix of
some conjugate-divisible map.
\begin{proposition}\label{prop:necessary}
  The support of the Choi-Jamio\l{}kowski matrix of a conjugate-divisible
  map is positive-semidefinite (hence conjugate-symmetric).
\end{proposition}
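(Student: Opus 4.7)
The plan is to expose the internal structure of $\mathcal{N}=\chan^*\circ\chan$ through a Kraus decomposition of $\chan$. Writing $\chan(\rho)=\sum_i K_i\rho K_i^\dagger$, the dual $\chan^*$ has Kraus operators $\{K_i^\dagger\}$, so $\mathcal{N}$ has Kraus operators $\{K_i^\dagger K_j\}_{i,j}$. The (standard) Choi-Jamio\l{}kowski matrix $\sigma_{AA'}$ is therefore $\sum_{ij}\ket{v_{ij}}\bra{v_{ij}}$ with $\ket{v_{ij}}=(\id_A\otimes K_i^\dagger K_j)\ket{\omega}$ and $\ket{\omega}=\sum_k\ket{k}\ket{k}$, so $\supp(\sigma_{AA'})=\mathrm{span}\{\ket{v_{ij}}\}_{i,j}$. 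Using the identity $\mat((A\otimes B)\ket{\omega})=AB^T$ (immediate from $\mat(\ket{\omega})=\id$), one computes $\mat(\ket{v_{ij}})=(K_i^\dagger K_j)^T=K_j^T\bar{K}_i$. All the structure we need is readable off of this product form.

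For conjugate-symmetry, observe that Hermitian conjugation in the $\mat$-picture is precisely the $\flip$ operation: $(K_j^T\bar K_i)^\dagger=K_i^T\bar K_j=\mat(\ket{v_{ji}})$. Hence $\flip$ merely swaps the indices $i\leftrightarrow j$ in the spanning set, and $\supp(\sigma_{AA'})$ is $\flip$-invariant. A Hermitian (i.e.\ conjugate-symmetric) basis is then extracted via the standard decomposition $M=\tfrac{M+M^\dagger}{2}+i\cdot\tfrac{M-M^\dagger}{2i}$, which realises any $\dagger$-closed matrix space as the $\CC$-span of its Hermitian elements.

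For positive-semidefiniteness, the $\mat(\ket{v_{ij}})$ assemble into a quadratic form in the Kraus operators: writing $K_a:=\sum_i a_iK_i$ for $a\in\CC^n$,
\begin{equation*}
  \sum_{ij}\bar{a}_i a_j\,\mat(\ket{v_{ij}})=(K_a^\dagger K_a)^T,
\end{equation*}
which is manifestly PSD since transposition preserves positivity. Running $a$ over $e_i$, $e_j$, $e_i+e_j$, $e_i+ie_j$ and applying the polarization identity recovers each individual $(K_i^\dagger K_j)^T$ as a complex combination of these PSD matrices, so $\mat(\supp(\sigma_{AA'}))$ is the $\CC$-span of PSD matrices and hence admits a PSD basis. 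The only real hazard here is notational bookkeeping: keeping the Choi, $\mat$, and transpose conventions consistent throughout; isolating the product form $K_j^T\bar K_i$ at the outset defuses this. A route via Lemma~\ref{lem:Choi_composite} would also work but obscures exactly the product structure that makes both claims transparent.
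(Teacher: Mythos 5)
Your proof is correct and takes a genuinely different route from the paper's. The paper works entirely in the Choi-Jamio\l{}kowski picture: it proves conjugate-symmetry by direct manipulation of the expression in \cref{lem:Choi_composite} for $\sigma_{AA'}$, and then for positive-semidefiniteness it diagonalises $\rho_{AB}$, writes the eigenvectors in a product basis as $\ket{\varphi_k}=\sum_i\ket{\psi_i^k}\ket{i}$, identifies $\mat(S_{AA'})$ as the span of $\sum_i\ket{\psi_i^k}\bra{\psi_i^l}$, notes that the diagonal sum $\sum_{ik}\ket{\psi_i^k}\bra{\psi_i^k}$ is full rank and positive on the support, and then hand-builds a PSD basis by adding a sufficiently large multiple $c$ of that full-rank element to the Hermitian and anti-Hermitian combinations. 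Your Kraus-operator approach recovers the identical matrix subspace --- your $\mat(\ket{v_{ij}})=K_j^T\bar K_i$ is exactly the paper's $\sum_i\ket{\psi_i^k}\bra{\psi_i^l}$ after identifying $\ket{\psi_i^k}$ with the $i$th row of the Kraus operator $K_k$ --- but the product form $K_j^T\bar K_i$ makes conjugate-symmetry a one-line observation and, more significantly, lets you replace the paper's somewhat ad hoc ``add enough identity'' construction with a clean polarization argument: the quadratic form $a\mapsto(K_a^\dagger K_a)^T$ is manifestly PSD-valued, and polarization over $e_i,e_j,e_i+e_j,e_i+ie_j$ shows the PSD elements already span the full space, from which a PSD basis is extracted by taking any maximal linearly independent subset. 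Both proofs are sound, but yours is more transparent about \emph{why} the positivity holds (it is literally a positive quadratic form in the Kraus operators), and it avoids the mild bookkeeping hazard in the paper's proof of verifying that the displayed Hermitian matrices are linearly independent and exhaustive.
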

\begin{IEEEproof}
  To establish conjugate-symmetry, let $\mathcal{N}=\chan^*\circ\chan$ be
  conjugate-divisible, where $\chan:A\to B$ is CPT, and denote the
  (standard) Choi-Jamio\l{}kowski matrix of $\chan$ by $\rho_{AB}$. By
  \cref{lem:Choi_composite}, the Choi-Jamio\l{}kowski matrix of
  $\mathcal{N}$ is given by
  \begin{equation}
    \sigma_{AA'}
    =\tr_B\left[
        \rho_{AB}^{\phantom{T_B}}\otimes\id_{A'}\cdot
        \id_A\otimes\bar{\rho}_{BA'}^{T_B}
      \right].
  \end{equation}
  Hence
  \begin{subequations}
  \begin{align}
    \flip(\sigma_{AA'})
    &=\flip\left(
        \tr_B\left[
          \rho_{AB}^{\phantom{T_B}}\otimes\id_{A'}\cdot
          \id_A\otimes\bar{\rho}_{BA'}^{T_B}
        \right]
      \right)\\
    &=\tr_B\left[
      \id_A\otimes\bar{\rho}_{BA'}\cdot\rho_{AB}^{T_B}\otimes\id_{A'}
    \right]\\
    &=\tr_B\left[
      \rho_{AB}^{\phantom{AB}}\otimes\id_{A'}\cdot
      \id_A\otimes\bar{\rho}_{BA'}^{T_B}
    \right]\\
    &=\sigma_{AA'}.
  \end{align}
  \end{subequations}
  Since $\sigma_{AA'}$ is conjugate-symmetric, so is its support (i.e.\
  the support is invariant as a subspace under the action of $\flip$).

  To establish positive-semidefiniteness, first write the eigenvectors
  $\ket{\varphi_k}$ of $\rho_{AB}$ in a product basis:
  \begin{equation}
    \rho_{AB} = \sum_k\proj{\varphi_k},\qquad
    \ket[AB]{\varphi_k} = \sum_i\ket[A]{\psi_i^k}\ket[B]{i},
  \end{equation}
  where the eigenvalues and coefficients have been absorbed into the
  unnormalised states $\ket[AB]{\varphi_k}$ and $\ket[A]{\psi_i^k}$ (note
  also that $\ket[A]{\psi_i^k}$ are not necessarily orthogonal). Then
  \begin{subequations}
  \begin{align}
    \sigma_{AA'}
    &=\tr_B\left[
      \rho_{AB}^{\phantom{AB}}\otimes\id_{A'}\cdot
      \id_A\otimes\bar{\rho}_{BA'}^{T_B}
    \right]\\[0.5em]
    &\begin{aligned}
      =\tr_B\Biggl[
        &\sum_{ijk}\ket{\psi_i^k}\ket{i}\bra{\psi_j^k}\bra{j}
          \otimes\id_{A'}
        \cdot\\[-1em]
        &\mspace{80mu}
         \id_A\otimes
         \sum_{lmn}\ket{n}\ket{\bar{\psi}_m^l}\bra{m}\bra{\bar{\psi}_n^l}
      \Biggr]
    \end{aligned}\\[0.5em]
    &=\sum_{ijkl}\ket{\psi_i^k}\ket{\bar{\psi}_i^l}
        \bra{\psi_j^k}\bra{\bar{\psi}_j^l}\\
    &=\sum_{kl}
      \Bigl(\sum_i\ket{\psi_i^k}\ket{\bar{\psi}_i^l}\Bigr)
      \Bigl(\sum_j\bra{\psi_j^k}\bra{\bar{\psi}_j^l}\Bigr),
  \end{align}
  \end{subequations}
  from which we see that
  \begin{equation}
    S_{AA'}
    = \supp(\sigma_{AA'})
    = \vspan\Bigl\{
        \sum_i\ket{\psi_i^k}\ket{\bar{\psi}_i^l}
      \Bigr\}_{k,l}.
  \end{equation}
  Now, as matrices
  \begin{equation}
    \mat\Bigl(\sum_i\ket{\psi_i^k}\ket{\bar{\psi}_i^l}\Bigr)
    = \sum_i\ket{\psi_i^k}\bra{\psi_i^l},
  \end{equation}
  which are supported on $\vspan\{\ket{\psi_i^k}\}$. In particular, the
  matrix subspace $\mat(S_{AA'})$ contains
  \begin{equation}
    \mat\Bigl(\sum_{ik}\ket{\psi_i^k}\ket{\bar{\psi}_i^k}\Bigr)
    = \sum_{ik}\ket{\psi_i^k}\bra{\psi_i^k}
  \end{equation}
  which has full support on the subspace $\vspan\{\ket{\psi_i^k}\}$ and,
  being a sum of (unnormalised) projectors, has positive eigenvalues on
  that subspace. Thus we can choose as a basis for $\mat(S_{AA'})$ the
  set of matrices
  \begin{equation}
    \begin{split}
      &\biggl\{
        \sum_j\left(
          \ket{\psi_j^k}\bra{\psi_j^l} + \ket{\psi_j^l}\bra{\psi_j^k}
        \right)
        + c\sum_{j,k}\ket{\psi_j^k}\bra{\psi_j^k},\\
        &\quad
        \sum_j i\left(
          \ket{\psi_j^k}\bra{\psi_j^l} - \ket{\psi_j^l}\bra{\psi_j^k}
        \right)
        + c\sum_{j,k}\ket{\psi_j^k}\bra{\psi_j^k}
      \biggr\}_{k,l}
    \end{split}\raisetag{4em}
  \end{equation}
  which are all Hermitian and, for sufficiently large $c$,
  positive-semidefinite.
\end{IEEEproof}

We now show that the necessary conditions of \cref{prop:necessary} are
also sufficient.
\begin{proposition}\label{prop:sufficient}
  For any conjugate-symmetric, positive-semidefinite subspace $S_{AA'}$
  which has full support on the first subsystem (i.e.\
  $\supp(\tr_{A'}[S_{AA'}]) = \HS_A$), we can construct a (in general
  non-standard) Choi-Jamio\l{}kowski matrix $\sigma_{AA'}$ of a
  conjugate-divisible map such that $\supp(\sigma_{AA'}) = S_{AA'}$. The
  corresponding channel $\chan$ has input dimension $d_A$, rank $d_E=\dim
  S_{AA'}$ and output dimension $d_B=d_Ad_E$.
\end{proposition}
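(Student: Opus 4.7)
The plan is to invert the structural description in the proof of \cref{prop:necessary}. That proof showed that if $\chan$ has Choi matrix $\rho_{AB}=\sum_k\ket{\varphi_k}\bra{\varphi_k}$ with $\ket[AB]{\varphi_k}=\sum_i\ket[A]{\psi_i^k}\ket[B]{i}$, then the Choi matrix of $\mathcal{N}=\chan^*\circ\chan$ has support $\vspan\{\ket{\eta_{kl}}\}_{k,l}$ where $\ket{\eta_{kl}}=\sum_i\ket{\psi_i^k}\ket{\bar{\psi}_i^l}$, and the diagonal vectors satisfy $\mat(\ket{\eta_{kk}})=\sum_i\ket{\psi_i^k}\bra{\psi_i^k}$, a positive-semidefinite matrix. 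So given $S_{AA'}$, the goal is to choose $\chan$ so that the $\ket{\eta_{kk}}$'s generate $S_{AA'}$ while simultaneously arranging for the off-diagonal $\ket{\eta_{kl}}$'s (whose matrix representatives $\sum_i\ket{\psi_i^k}\bra{\psi_i^l}$ are not generally in $S_{AA'}$) to drop out entirely. Engineering the disappearance of these cross terms is the main conceptual step.

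The trick is to enlarge $B$ so that different eigenvectors occupy orthogonal subspaces of $B$. Using the positive-semidefinite hypothesis, pick a basis $\{P_1,\dots,P_{d_E}\}$ of PSD Hermitian matrices for $\mat(S_{AA'})$ and spectrally decompose each $P_k=\sum_i\ket{\psi_i^k}\bra{\psi_i^k}$ with at most $d_A$ terms. Set $d_B=d_Ad_E$, label an orthonormal basis of $\HS_B$ by pairs $(i,k)$, and put
\[
  \ket[AB]{\varphi_k}=\sum_i\ket[A]{\psi_i^k}\otimes\ket[B]{(i,k)},\qquad
  \rho_{AB}=\sum_k\ket{\varphi_k}\bra{\varphi_k}.
\]
The additional $k$-label in $B$ makes the $\ket{\varphi_k}$ mutually orthogonal and, crucially, enforces $\langle(j,k)|(n,l)\rangle_B=\delta_{jn}\delta_{kl}$, whose $\delta_{kl}$ factor will kill every $k\neq l$ contribution in the subsequent composite-Choi computation.

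Next I would verify that $\rho_{AB}$ serves as a non-standard Choi matrix of a CPT channel $\chan:A\to B$. A direct computation gives $\rho_A=\tr_B\rho_{AB}=\sum_k P_k$. Since the column space of each PSD $P_k$ coincides with its support, the hypothesis $\supp(\tr_{A'}[S_{AA'}])=\HS_A$ translates precisely into $\sum_k\supp(P_k)=\HS_A$, so $\rho_A$ has full support on $\HS_A$. By the discussion of non-standard Choi matrices in \cref{sec:preliminaries}, $\rho_{AB}$ therefore defines a CPT map with the advertised input dimension $d_A$, Kraus rank $d_E$ (equal to the rank of $\rho_{AB}$), and output dimension $d_B=d_Ad_E$.

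Finally, apply \cref{cor:Choi_non-standard} to compute the non-standard Choi matrix $\sigma_{AA'}$ of $\chan^*\circ\chan$. The manipulation mirrors the one in the proof of \cref{prop:necessary}, but the $\delta_{kl}$ arising from the contraction over $B$ collapses the double sum over eigenvector pairs to a single sum, leaving
\[
  \sigma_{AA'}=\sum_k\ket{\eta_k}\bra{\eta_k},\qquad
  \ket{\eta_k}=\sum_i\ket[A]{\psi_i^k}\otimes\ket[A']{\bar{\psi}_i^k}.
\]
Since $\mat(\ket{\eta_k})=P_k$ and $\{P_k\}$ is a basis of $\mat(S_{AA'})$, we obtain $\supp(\sigma_{AA'})=S_{AA'}$ as required. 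All remaining work is bookkeeping for partial transposes and complex conjugations in the composite-Choi formula---not a substantive obstacle.
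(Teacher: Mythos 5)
Your proposal is correct and takes essentially the same approach as the paper: pick a PSD Hermitian basis $\{P_k\}$ of $\mat(S_{AA'})$, spectrally decompose each $P_k=\sum_i\proj{\psi_i^k}$, build $\rho_{AB}=\sum_k\proj{\varphi_k}$ with $\ket{\varphi_k}=\sum_i\ket[A]{\psi_i^k}\ket[B]{(i,k)}$ so that the enlarged $B$-label makes different $k$ occupy orthogonal $B$-sectors, verify $\tr_B\rho_{AB}$ is full rank, and compute the composite Choi matrix via \cref{cor:Choi_non-standard} so the $\delta_{kl}$ kills the off-diagonal $\ket{\eta_{kl}}$ terms. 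This is precisely the construction in the paper's proof (which uses the basis ordering $\ket[B]{k,i}$ but is otherwise identical).
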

(Here, the notation $\supp(\tr_{A'}[S_{AA'}])$ is shorthand for
$\bigcup_{\ket{\psi}\in S_{AA'}}\supp(\tr_{A'} \proj{\psi})$. The
condition on the support is necessary for a matrix to be any kind of
Choi-Jamio\l{}kowski matrix, simply by definition.)
\begin{IEEEproof}
  Since $S_{AA'}$ is positive-semidefinite, we can choose a Hermitian
  basis $\{M_k\}$ for $\mat(S_{AA'})$ such that $M_k \geq 0$. Writing
  $M_k$ in its spectral decomposition,
  \begin{equation}
    M_k = \sum_i\proj{\psi_i^k},
  \end{equation}
  where we have absorbed the (positive) eigenvalues into the unnormalised
  eigenstates $\ket{\psi_i^k}$, we have
  \begin{equation}
    S_{AA'} = \vspan\Bigl\{\sum_i\ket{\psi_i^k}\ket{\bar{\psi}_i^k}\Bigr\}_k
  \end{equation}
  and $\HS_A = \vspan\{\ket{\psi_i^k}\}$.

  Now consider the operator
  \begin{equation}
    \rho_{AB}
    = \sum_{ijk}\ket[A]{\psi_i^k}\ket[B]{k,i}\bra[A]{\psi_j^k}\bra[B]{k,j}.
  \end{equation}
  This is Hermitian, positive-semidefinite, and $\tr_B[\rho_{AB}]$ is
  full rank on $\HS_A$, so (up to normalisation) $\rho_{AB}$ is a
  (non-standard) Choi-Jamio\l{}kowski matrix corresponding to some CPT
  map $\chan$. Observe also that the rank and local dimensions of
  $\rho_{AB}$ are as claimed in the statement of the proposition. By
  \cref{cor:Choi_non-standard},
  \begin{subequations}
  \begin{align}
    \sigma_{AA'}
    &= \tr_B\left[
      \rho_{AB}^{\phantom{AB}}\otimes\id_A\cdot
      \id_A\otimes\bar{\rho}_{BA'}^{T_B}
    \right]\\[0.5em]
    &\begin{aligned}
      =\tr_B\Biggl[
        &\sum_{ijk}\ket{\psi_i^k}\ket{k,i}
          \bra{\psi_j^k}\bra{k,j}\otimes\id_{A'}\,\cdot\\[-1em]
        &\mspace{50mu}
         \id_A\otimes\sum_{lmn}\ket{l,n}\ket{\psi_m^l}\bra{l,m}\bra{\psi_n^l}
      \Biggr]
    \end{aligned}\\[0.5em]
    &=\sum_k\Bigl(\sum_i\ket{\psi_i^k}\ket{\bar{\psi}_i^k}\Bigr)
      \Bigl(\sum_j\bra{\psi_j^k}\bra{\bar{\psi}_j^k}\Bigr)
  \end{align}
  \end{subequations}
  is a (non-standard) Choi-Jamio\l{}\-kow\-ski matrix for the
  conjugate-divisible channel $\chan^*\circ\chan$. Clearly, the support
  of this operator is $S_{AA'}$, so it fulfils the requirements of the
  proposition.
\end{IEEEproof}

\Cref{prop:necessary,prop:sufficient} together imply the following key
theorem, giving a complete characterisation of the
Choi-Jamio\l{}\-kow\-ski matrices of conjugate-divisible maps.
\begin{theorem}\label{thm:conjugate_divisible}
  Given a subspace $S_{AA'}$ such that $\supp(\tr_{A'}[S_{AA'}]) =
  \HS_A$, there exists a conjugate-divisible map with (in general
  non-standard) Choi-Ja\-mio\l{}\-kow\-ski matrix $\sigma_{AA'}$ such
  that $\supp(\sigma_{AA'}) = S_{AA'}$ iff $S_{AA'}$ is
  positive-semidefinite (hence also conjugate-symmetric).
\end{theorem}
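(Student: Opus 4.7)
The theorem is essentially the combination of Propositions~\ref{prop:necessary} and~\ref{prop:sufficient}, which together already cover both directions of the biconditional (modulo what appears to be a typo in the statement: ``conjugate-divisible'' on the right-hand side of the ``iff'' should read ``conjugate-symmetric,'' since that is what the two propositions characterise). So my plan is not to introduce any new ideas, but simply to glue the two existing results together cleanly.

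For the forward direction, I would assume that there exists a conjugate-divisible map $\mathcal{N}=\chan^*\circ\chan$ whose (possibly non-standard) Choi-Jamio\l{}kowski matrix $\sigma_{AA'}$ satisfies $\supp(\sigma_{AA'})=S_{AA'}$. The hypothesis $\supp(\tr_{A'}[S_{AA'}])=\HS_A$ just ensures that $\sigma_{AA'}$ is a valid Choi-Jamio\l{}kowski representative (i.e.\ its $A$-marginal is full rank, so the rotation/rescaling in \cref{cor:Choi_non-standard} makes sense). \Cref{prop:necessary} then applies directly and gives that $S_{AA'}$ is both conjugate-symmetric and positive-semidefinite.

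For the reverse direction, I would assume $S_{AA'}$ is conjugate-symmetric, positive-semidefinite, and satisfies the support condition. Then I simply invoke \cref{prop:sufficient}, which produces an explicit channel $\chan$ (with input dimension $d_A$, rank $d_E=\dim S_{AA'}$, and output dimension $d_B=d_Ad_E$) such that the non-standard Choi-Jamio\l{}kowski matrix of $\chan^*\circ\chan$ has support exactly $S_{AA'}$. This map is conjugate-divisible by construction, completing the equivalence.

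There is essentially no obstacle: the hard work has been carried out in the two preceding propositions, and the remaining task is purely bookkeeping. If anything, the only care required is to make clear that both directions are stated with respect to \emph{non-standard} Choi-Jamio\l{}kowski matrices (since \cref{prop:sufficient} produces a non-standard representative), and to note explicitly that the support condition $\supp(\tr_{A'}[S_{AA'}])=\HS_A$ is simply part of the definition of what it means for a matrix with support $S_{AA'}$ to be any Choi-Jamio\l{}kowski matrix at all. I would therefore present the proof as a two-sentence argument: ``Necessity follows from \cref{prop:necessary}; sufficiency follows from \cref{prop:sufficient}.''
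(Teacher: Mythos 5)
Your proposal matches the paper exactly: the paper offers no separate proof, simply noting that Propositions~\ref{prop:necessary} and~\ref{prop:sufficient} together imply the theorem, which is precisely your two-sentence argument (and your reading of ``conjugate-divisible'' as a typo for ``conjugate-symmetric'' in the statement is the intended one). Nothing is missing.
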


\section{Superactivation of the one-shot zero-error capacity}
\label{sec:one-shot}
The \keyword{zero-error classical capacity} of a quantum channel is the
capacity to transmit classical information with zero probability of error
(as opposed to a vanishing error probability, as in the usual Shannon
capacity; for brevity, we will drop the ``classical'' nomenclature from
now on, and call this simply the \keyword{zero-error capacity}). The
\keyword{one-shot zero-error capacity} is the amount of (classical)
information that can be transmitted with zero probability of error by a
\emph{single} use of the channel (as opposed to the asymptotic rate per
use of the channel in the limit of infinitely many uses of the channel).
Our aim in this section is to show that there exist two quantum channels,
which individually have zero one-shot zero-error capacity, but whose
joint channel \emph{does} have a non-zero zero-error capacity. (In
\cref{sec:asymptotic}, we will extend this result to the asymptotic
capacity.)

A channel $\chan$ has non-zero (one-shot) zero-error capacity if there
exist two different input states whose outputs are perfectly
distinguishable. In other words, the one-shot zero-error capacity is
non-zero iff
\begin{equation}
   \exists \ket{\psi},\ket{\varphi}\in\HS_A:
   \tr[\chan(\psi)^\dagger\chan(\varphi)] = 0.
\end{equation}
Note that
\begin{equation}
  \tr[\chan(\psi)^\dagger\chan(\varphi)]
   = \tr[\psi\cdot\chan^*\bigl(\chan(\varphi)\bigr)]
   = \tr[\psi\cdot\chan^*\circ\chan(\varphi)].
\end{equation}
Conversely, a channel has zero one-shot zero-error capacity iff
\begin{equation}
   \forall \ket{\psi},\ket{\varphi}\in\HS_A:
   \tr[\psi\cdot\chan^*\circ\chan(\varphi)] \neq 0.
\end{equation}
Thus we seek two channels, $\chan_1$ and $\chan_2$, such that
\begin{subequations}
\begin{gather}
   \forall \ket{\psi},\ket{\varphi}\in\HS_A:
   \tr[\psi\cdot\chan_{1,2}^*\circ\chan_{1,2}(\varphi)] \neq 0,
   \label{eq:individual}\\
   \exists \ket{\psi},\ket{\varphi}\in\HS_A^{\otimes 2}:
   % \tr[\psi^\dagger\cdot
   %     (\chan_1\otimes\chan_2)^*\circ(\chan_1\otimes\chan_2)(\varphi)] =
   \tr[\psi\cdot
         (\chan_1^*\circ\chan_1)\otimes(\chan_2^*\circ\chan_2)(\varphi)]
   = 0. \label{eq:joint}
\end{gather}
\end{subequations}
For the composite maps $\mathcal{N}_{1,2} =
\chan_{1,2}^*\circ\chan_{1,2}$ these are precisely the conditions
established in Ref.~\cite{rank_additivity} for $\mathcal{N}_{1,2}$ to
violate multiplicativity of the minimum output rank! The composite map
$\mathcal{N} = \chan^*\circ\chan$ need not be CPT even if $\chan$ is, but
this does not substantially affect the methods developed in
Ref.~\cite{rank_additivity}, which we will reuse here.

To establish necessary and sufficient conditions for the individual maps
to satisfy \cref{eq:individual}, we follow exactly the same arguments as
in Ref.~\cite{rank_additivity}. Let $\sigma_{1,2}$ denote
Choi-Jamio\l{}kowski matrices corresponding to the conjugate-divisible
maps $\mathcal{N}_{1,2}$. Then, from \cref{eq:individual}, we have
\begin{equation}
  \begin{split}
    &\forall \ket{\psi},\ket{\varphi}\in\HS_A:\\
    &\tr\left[
       \psi_{A'}\cdot\tr_A(\sigma_{1,2}\cdot\varphi_A^T\otimes\id_{A'})
     \right]
    =\tr\left[\sigma_{1,2}\cdot\varphi_A^T\otimes\psi_{A'}\right]
    \neq 0.
  \end{split}
\end{equation}
Note that this holds even if $\sigma_{1,2}$ are non-standard
Choi-Jamio\l{}kowski matrices, since using \cref{cor:Choi_non-standard}
any rescaling can be absorbed into $\varphi$ and $\psi$:
\begin{subequations}
\begin{align}
  &\begin{aligned}
    \tr\Bigl[
      &U\sigma_A^{-1/2}\otimes\bar{U}\bar{\sigma}_A^{-1/2}\cdot
      \sigma_{AA'}\cdot\\
      &\mspace{100mu}
       \sigma_A^{-1/2}U^\dagger\otimes\bar{\sigma}_A^{-1/2}\bar{U}^\dagger
      \cdot\varphi_A\otimes\psi_{A'}
    \Bigr]
  \end{aligned}\\
  &\begin{aligned}
    =\tr\Bigl[
      \sigma_{AA'}\cdot
      \Bigl(
        &\sigma_A^{-1/2}U^\dagger\varphi_A U\sigma_A^{-1/2}\otimes\\
        &\mspace{100mu}
         \bar{\sigma}_A^{-1/2}\bar{U}^\dagger\psi_{A'} \bar{U}\bar{\sigma}_A^{-1/2}
      \Bigr)
    \Bigr]\\
  \end{aligned}\\
  &=\tr\left[\sigma_{AA'}\cdot\varphi'_A\otimes\psi'_{A'}\right].
\end{align}
\end{subequations}
Therefore, if $S_{1,2} = \supp(\sigma_{1,2})$ denote the supports of the
Choi-Jamio\l{}kowski matrices, it is necessary and sufficient to require
that their orthogonal complements contain no product states:
\begin{equation}\label{eq:no_product_states}
  \nexists\ket{\psi},\ket{\varphi}\in\HS_{A}\; : \;
  \ket{\psi}\otimes\ket{\varphi} \in S_{1,2}^\perp.
\end{equation}

To derive sufficient conditions for the joint map to satisfy
\cref{eq:joint}, we slightly generalise the argument of
Ref.~\cite{rank_additivity}. First, fix both states $\ket{\psi}$,
$\ket{\varphi}$ in \cref{eq:joint} to be maximally entangled: $\ket{\psi}
= U_{A_1}\otimes V_{A_2}\ket{\omega}$, $\ket{\varphi} = W_{A'_1}\otimes
X_{A'_2}\ket{\omega}$, where $\ket{\omega} = \sum_i\ket{i,i}$ and
$U,V,W,X$ are unitary. Then
\begin{subequations}
\begin{align}
  0 &=\tr\left[\psi_{A'_1A'_2}\cdot
      \mathcal{N}_1\otimes\mathcal{N}_2(\varphi_{A_1A_2})\right]\\[0.5em]
  &=\tr\left[\psi_{A'_1A'_2}\cdot
      \tr_{A_1A_2}\left[
        \sigma_1\otimes\sigma_2\cdot\varphi_{A_1A_2}^T\otimes\id_{A'_1A'_2}
      \right]
    \right]\\[0.5em]
  &=\tr\left[
      \sigma_1\otimes\sigma_2\cdot
      \varphi_{A_1A_2}^T\otimes\psi_{A'_1A'_2}
    \right]\\[0.5em]
  % &\begin{aligned}
  %   =\tr\Bigl[
  %     &W\otimes X\;\omega_{A_1A_2}\; W^\dagger\otimes X^\dagger\;\cdot\\
  %     &\qquad
  %     \tr_{A_1A_2}\bigl[
  %       \sigma_1\otimes\sigma_2\cdot
  %       (U\otimes V\;\omega_{A_1A_2}\; U^\dagger\otimes V^\dagger)^T
  %       \otimes\id_{A'_1A'_2}
  %     \bigr]
  %   \Bigr]
  % \end{aligned}\\[.5em]
  &\begin{aligned}
   =\tr\bigl[
      &\sigma_1\otimes\sigma_2\cdot
      (\bar{U}\otimes\bar{V}\;\omega_{A_1A_2}^T\; U^T\otimes V^T)
      \otimes\\
      &\mspace{158mu}
       (W\otimes X\;\omega_{A'_1A'_2}\; W^\dagger\otimes X^\dagger)
    \bigr]
  \end{aligned}\raisetag{2.8em}\\[0.5em]
  % &=\tr\left[
  %     (\bar{U}\otimes W\otimes\bar{V}\otimes X)
  %     \sigma_1\otimes\sigma_2
  %     (U^T\otimes W^\dagger\otimes V^T\otimes X^\dagger)
  %     \cdot \omega_{AA'}\otimes\omega_{BB'}
  %   \right]\\
  &=\tr\left[
     (\bar{U}\otimes W\,\sigma_1\,U^T\otimes W^\dagger)^T\cdot
     (\bar{V}\otimes X\,\sigma_2\,V^T\otimes X^\dagger)
    \right]\\
  % &=\tr\left[
  %     \sigma_1^T\cdot
  %     (U^\dagger\otimes W^T\cdot \bar{V}\otimes X\cdot
  %      \sigma_2\cdot V^T\otimes X^\dagger\cdot U\otimes\bar{W})
  %   \right]\\
  &=\tr\left[
      \sigma_1^T\cdot
      (U'\otimes V'\,\sigma_2\,{U'}^\dagger\otimes{V'}^\dagger)
    \right].\label{eq:orthogonal}
\end{align}
\end{subequations}
Again, this remains true if $\sigma_{1,2}$ are non-standard
Choi-Jamio\l{}kowski matrices, since we can absorb any rescaling into our
choice of $\ket{\psi}$ and $\ket{\varphi}$. Writing
$U_{1,2}\sigma_{A_{1,2}}^{-1/2} = R_{1,2}$ for brevity, we have
\begin{subequations}
\begin{align}
  &\tr\left[\psi_{A'_1A'_2}\cdot
    \mathcal{N}_1\otimes\mathcal{N}_2(\varphi_{A_1A_2})\right]\\
  % &\begin{aligned}
  %   =\tr\biggl[
  %     &\left(
  %       U_1\sigma_{A_1}^{-1/2}\otimes\bar{U}_1\bar{\sigma}_{A_1}^{-1/2}\otimes
  %       U_2\sigma_{A_2}^{-1/2}\otimes\bar{U}_2\bar{\sigma}_{A_2}^{-1/2}
  %     \right)\cdot
  %     \sigma_{A_1A'_1}\otimes\sigma_{A_2A'_2}\;\cdot\\
  %     &\qquad
  %     \left(
  %       \sigma_{A_1}^{-1/2}U_1^\dagger\otimes
  %       \bar{\sigma}_{A_1}^{-1/2}\bar{U}_1^\dagger\otimes
  %       \sigma_{A_2}^{-1/2}U_2^\dagger\otimes
  %       \bar{\sigma}_{A_2}^{-1/2}\bar{U}_2^\dagger
  %     \right)\cdot\\
  %     &\qquad\qquad
  %     \varphi_{A_1A_2}^T\otimes\psi_{A'_1A'_2}
  %   \biggr]
  % \end{aligned}\\
  &\begin{aligned}
    =\tr\Bigl[
      &\left(R_1\otimes\bar{R}_1\otimes R_2\otimes\bar{R}_2\right)\cdot
       \sigma_{A_1A'_1}\otimes\sigma_{A_2A'_2}\;\cdot\\
      &\mspace{30mu}
       \left(R_1\otimes\bar{R}_1\otimes R_2\otimes\bar{R}_2\right)\cdot
       \varphi_{A_1A_2}^T\otimes\psi_{A'_1A'_2}
    \Bigr]
  \end{aligned}\raisetag{2em}\\
  % &\begin{aligned}
  %   =\tr\biggl[
  %     &\sigma_{A_1A'_1}\otimes\sigma_{A_2A'_2}\cdot
  %     \Bigl(
  %       U_1\sigma_{A_1}^{-1/2}\otimes\bar{U}_1\bar{\sigma}_{A_1}^{-1/2}
  %       \varphi_{A_1A_2}^T
  %       \sigma_{A_1}^{-1/2}U_1^\dagger\otimes
  %       \bar{\sigma}_{A_1}^{-1/2}\bar{U}_1^\dagger\\
  %       &\qquad\otimes
  %       U_2\sigma_{A_2}^{-1/2}\otimes\bar{U}_2\bar{\sigma}_{A_2}^{-1/2}
  %       \psi_{A'_1A'_2}
  %       \sigma_{A_2}^{-1/2}U_2^\dagger\otimes
  %       \bar{\sigma}_{A_2}^{-1/2}\bar{U}_2^\dagger
  %     \Bigr)
  %   \biggr]
  % \end{aligned}\raisetag{2em}\\
  &\begin{aligned}
    =&\tr\Bigl[
      \sigma_{A_1A'_1}\otimes\sigma_{A_2A'_2}\cdot
      (R_1\otimes\bar{R}_1\;\varphi_{A_1A_2}^TR_1\otimes\bar{R}_1)\\
     &\mspace{150mu}
      \otimes(R_2\otimes\bar{R}_2\;\psi_{A'_1A'_2}R_2\otimes\bar{R}_2)
    \Bigr]
  \end{aligned}\\
  &=\tr\left[\sigma_{A_1A'_1}\otimes\sigma_{A_2A'_2}\cdot
    {\varphi'}_{A_1A_2}^T\otimes\psi'_{A'_1A'_2}\right].
\end{align}
\end{subequations}
Therefore, in terms of the supports $S_{1,2}$ of the
Choi-Jamio\l{}\-kow\-ski matrices $\sigma_{1,2}$, \cref{eq:orthogonal}
implies that a sufficient condition for the maps to satisfy
\cref{eq:joint} is for the supports to be related by
\begin{equation}\label{eq:orthogonal_complement}
  S_2^T = U\otimes V\cdot S_1^\perp
\end{equation}
for some local unitaries $U,V$.

Of course, since $\mathcal{N}_{1,2}=\chan_{1,2}^*\circ\chan_{1,2}$ are
necessarily conjugate-divisible, \cref{thm:conjugate_divisible} also
applies, so $S_{1,2}$ must also be positive-semidefinite (hence
conjugate-symmetric). If we can find subspaces simultaneously satisfying
these conditions \emph{and}
\cref{eq:no_product_states,eq:orthogonal_complement}, then by
\cref{thm:conjugate_divisible} we can construct channels $\chan_{1,2}$
such that $\mathcal{N}_{1,2} = \chan_{1,2}^*\circ\chan_{1,2}$ satisfy
\cref{eq:individual,eq:joint}. (Note that w.l.o.g.\ we can neglect the
condition in \cref{thm:conjugate_divisible} that
$\supp(\tr_{A'}[S_{AA'}]) = \HS_A$, since if this is not the case we can
always shrink $\HS_A$ so that it does hold.) Noting that Schmidt-rank,
conjugate-symmetry and positive-semidefiniteness are preserved under the
transpose operation, we can for convenience redefine $S_2 =
\supp(\sigma_2^T)$ in \cref{eq:orthogonal_complement} (without changing
\cref{eq:no_product_states}) to save carrying the transpose around in the
notation.

These results are summarised in the following lemma:
\begin{lemma}\label{lem:S1_S2_conditions}
  If there exist subspaces $S_1,S_2 \subseteq \HS_A\otimes\HS_A$ and
  unitaries $U,V$ satisfying
  \begin{subequations}
  \begin{gather}
    \nexists\ket{\psi},\ket{\varphi}\in\HS_A:
      \ket{\psi}\otimes\ket{\varphi} \in S_{1,2}^\perp\;,
      \label{eq:unextendible}\\
    S_2 = U\otimes V\cdot S_1^\perp,
      \label{eq:conspiracy}\\
    \flip(S_{1,2}) = S_{1,2}\;,
      \label{eq:conjugate_symmetry}\\
    \exists \{M_i^{1,2} \geq 0\}: \mat(S_{1,2}) = \vspan\{M^{1,2}_i\},
      \label{eq:positivity}
  \end{gather}
  \end{subequations}
  then there exist channels $\chan_{1,2}$ which individually have zero
  one-shot zero-error capacity, but for which the joint channel
  $\chan_1\otimes\chan_2$ has non-zero zero-error capacity.
\end{lemma}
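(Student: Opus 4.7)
The plan is to assemble the pieces already established in the excerpt rather than do any new computation. Given subspaces $S_1,S_2$ meeting the four conditions, I would first invoke \cref{thm:conjugate_divisible}: conditions \cref{eq:conjugate_symmetry} and \cref{eq:positivity} say exactly that each $S_{1,2}$ is conjugate-symmetric and positive-semidefinite, so there exist conjugate-divisible maps $\mathcal{N}_{1,2}=\chan_{1,2}^*\circ\chan_{1,2}$ (with CPT $\chan_{1,2}$) whose (possibly non-standard) Choi--Jamio\l{}kowski matrices $\sigma_{1,2}$ have supports exactly $S_{1,2}$. Here the side condition $\supp(\tr_{A'}[S_{AA'}])=\HS_A$ from \cref{thm:conjugate_divisible} may be enforced, as noted parenthetically just before the lemma, by shrinking $\HS_A$ to the reduced support (this is harmless since any state outside that support can be ignored for the capacity question).

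Next, I would verify that each $\chan_{1,2}$ individually has zero one-shot zero-error capacity. By the derivation preceding \cref{eq:no_product_states}, the one-shot zero-error capacity of $\chan_{1,2}$ is zero iff there are no product states in $S_{1,2}^\perp$, which is precisely hypothesis \cref{eq:unextendible}. The argument there already handles the non-standard nature of $\sigma_{1,2}$ by absorbing $U\sigma_A^{-1/2}$ into $\ket{\psi}$ and $\ket{\varphi}$, so no extra work is needed.

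Finally, I would verify that the joint channel has non-zero zero-error capacity. By hypothesis \cref{eq:conspiracy}, after possibly redefining $S_2$ by a transpose (which preserves Schmidt-rank, conjugate-symmetry, and positive-semidefiniteness, so does not disturb the other hypotheses) we have $S_2=U\otimes V\cdot S_1^\perp$. Then, in the notation of \cref{eq:orthogonal}, pick any $\ket{\chi}\in S_1$ and use \cref{eq:conspiracy} to obtain an orthogonal partner $\ket{\chi'}\in S_2$; translating back through the maximally entangled ansatz $\ket{\psi}=U_{A_1}\otimes V_{A_2}\ket{\omega}$, $\ket{\varphi}=W_{A'_1}\otimes X_{A'_2}\ket{\omega}$ gives states satisfying \cref{eq:joint}, i.e.\ $\tr[\psi\cdot(\chan_1^*\circ\chan_1)\otimes(\chan_2^*\circ\chan_2)(\varphi)]=0$. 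This is precisely the statement that the outputs $\chan_1\otimes\chan_2(\psi)$ and $\chan_1\otimes\chan_2(\varphi)$ are perfectly distinguishable, so the joint channel has non-zero zero-error capacity.

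There is no real obstacle: the lemma is a synthesis step, and all the heavy lifting has already been done in \cref{sec:conjugate-divisible} (for realising $\mathcal{N}_{1,2}$ as $\chan_{1,2}^*\circ\chan_{1,2}$) and in the computations leading to \cref{eq:no_product_states,eq:orthogonal_complement}. The only point that requires a touch of care is the bookkeeping about standard versus non-standard Choi--Jamio\l{}kowski matrices and the cosmetic transpose on $S_2$; both are already justified in the surrounding text, so the proof should be essentially one paragraph of citation-style assembly.
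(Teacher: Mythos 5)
The proposal follows the paper's own route exactly: the paper offers no separate proof of \cref{lem:S1_S2_conditions} but simply summarises the derivation in the paragraphs immediately preceding it, and your assembly of \cref{thm:conjugate_divisible}, \cref{eq:no_product_states}, and \cref{eq:orthogonal_complement} is precisely that derivation. Two remarks on the execution.

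First, the handling of the individual zero-error capacity and of the standard/non-standard Choi--Jamio\l{}kowski bookkeeping is correct and matches the paper; the shrinking of $\HS_A$ is actually unnecessary in this particular lemma, since \cref{eq:unextendible} already forces $\supp(\tr_{A'}[S_{1,2}]) = \HS_A$ (otherwise a vector orthogonal to the first marginal support would supply a product state in $S_{1,2}^\perp$), but it is harmless to mention.

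Second, the description of why the joint channel has non-zero zero-error capacity is garbled. You write that one should ``pick any $\ket{\chi}\in S_1$ and use \cref{eq:conspiracy} to obtain an orthogonal partner $\ket{\chi'}\in S_2$'' and then ``translate back'' to get $\ket\psi,\ket\varphi$. This is not how the argument runs: the input states $\ket\psi,\ket\varphi$ are not obtained from any $\chi,\chi'$; they are fixed a priori as local-unitary rotations of the maximally entangled state, and the chain of identities culminating in \cref{eq:orthogonal} shows that $\tr\bigl[\psi\cdot\mathcal{N}_1\otimes\mathcal{N}_2(\varphi)\bigr]$ collapses to $\tr\bigl[\sigma_1^T\cdot(U'\otimes V'\,\sigma_2\,{U'}^\dagger\otimes{V'}^\dagger)\bigr]$. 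The point that actually closes the argument---and which your sketch omits---is that both $\sigma_1^T$ and $U'\otimes V'\,\sigma_2\,{U'}^\dagger\otimes{V'}^\dagger$ are positive semidefinite, so their Hilbert--Schmidt inner product vanishes if and only if their supports are orthogonal, which is exactly what \cref{eq:conspiracy} (after the cosmetic transpose on $S_2$) guarantees. Also note the direction of the transpose: since \cref{eq:conspiracy} is phrased for $S_2=\supp(\sigma_2^T)$, when you invoke \cref{thm:conjugate_divisible} you should build $\sigma_2$ so that $\supp(\sigma_2)=S_2^T$, not redefine the hypothesis. With that one step stated correctly, the proof is complete and identical in content to the paper's.
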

Although a positive-semidefinite subspace is necessarily
conjugate-symmetric, it will be convenient in what follows to treat
conjugate-symmetry separately from the positive-semidefinite requirement.
We therefore redundantly include the conjugate-symmetry requirement as
well as the positive-semidefinite requirement in the statement of this
and subsequent lemmas.

If a subspace is conjugate-symmetric, then so is its orthogonal
complement, so \cref{eq:conspiracy,eq:conjugate_symmetry} together imply
\begin{align}\label{eq:conj_sym_U}
  U\otimes V\cdot S_1 = S_2^\perp = \flip(S_2^\perp)
  = \flip(U\otimes V\cdot S_1).
\end{align}
Conversely, if \cref{eq:conj_sym_U} holds for conjugate-symmetric $S_1$,
then clearly \cref{eq:conjugate_symmetry} is satisfied. Thus, letting
$S_1 = S$, $S_2 = U\otimes V\cdot S^\perp$, and recalling that
Schmidt-rank is invariant under local-unitaries,
\cref{eq:unextendible,eq:conjugate_symmetry} can, respectively, be
re-expressed as:
\begin{gather}
  \nexists\ket{\psi},\ket{\varphi}\in\HS_A:
    \ket{\psi}\otimes\ket{\varphi} \in S \text{ or } S^\perp,
    \tag{\ref{eq:unextendible}'}\\
  \flip(S) = S \;\text{ and }\;
    \flip(U\otimes V\cdot S) = U\otimes V\cdot S.
    \tag{\ref{eq:conjugate_symmetry}'}
\end{gather}
We can therefore rewrite \cref{lem:S1_S2_conditions} in terms of a single
subspace $S$:
\begin{theorem}\label{thm:S_conditions}
  If there exists a subspace $S \subseteq \HS_A\otimes\HS_A$ and
  unitaries $U,V$ satisfying
  \begin{subequations}\label[equations]{eq:S_conditions}
  \begin{gather}
    \nexists\ket{\psi},\ket{\varphi}\in\HS_A:
      \ket{\psi}\otimes\ket{\varphi} \in S^\perp,
      \label{eq:S_unextendible}\\
    \nexists\ket{\psi},\ket{\varphi}\in\HS_A:
      \ket{\psi}\otimes \ket{\varphi} \in S,
      \label{eq:Sperp_unextendible}\\
    \flip(S) = S\,,\label{eq:S_conjugate_symmetry}\\
    \flip(U\otimes V\cdot S) = U\otimes V\cdot S,
      \label{eq:Sperp_conjugate_symmetry}\\
    \exists \{M_i \geq 0\}: \mat(S) = \vspan\{M_i\},
      \label{eq:S_positivity}\\
    \exists \{M_j \geq 0\}: \mat(U\otimes V\cdot S^\perp) = \vspan\{M_j\},
      \label{eq:Sperp_positivity}
  \end{gather}
  \end{subequations}
  then there exist channels $\chan_{1,2}$ which individually have zero
  one-shot zero-error capacity, but for which the joint channel
  $\chan_1\otimes\chan_2$ has non-zero zero-error capacity.
\end{theorem}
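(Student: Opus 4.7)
The plan is to simply verify that \cref{lem:S1_S2_conditions} applies to the pair $S_1 := S$ and $S_2 := U \otimes V \cdot S^\perp$, where $U, V$ are the unitaries appearing in the hypotheses. Each of the four conditions in that lemma follows almost mechanically from the six conditions on $S$ stated in the theorem; the content of the theorem is really just the observation that demanding the properties of $S_1$ and $S_2$ separately can be repackaged as demanding analogous properties of a single subspace $S$ together with its image $U \otimes V \cdot S^\perp$.

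First I would dispatch \cref{eq:conspiracy}, which is immediate from the definition of $S_2$. Next, the unextendibility condition \cref{eq:unextendible} splits into two pieces: for $S_1^\perp = S^\perp$ it is exactly \cref{eq:S_unextendible}, while for $S_2^\perp = (U \otimes V \cdot S^\perp)^\perp = U \otimes V \cdot S$ it reduces to \cref{eq:Sperp_unextendible}, since the local unitary $U \otimes V$ preserves the set of product states (indeed preserves Schmidt rank entirely). The positivity conditions in \cref{eq:positivity} are also an immediate match: \cref{eq:S_positivity} covers $S_1$, and \cref{eq:Sperp_positivity} covers $S_2$.

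The only mildly nontrivial step is verifying the conjugate-symmetry condition \cref{eq:conjugate_symmetry} for both subspaces. For $S_1 = S$ this is just \cref{eq:S_conjugate_symmetry}. For $S_2$ I would first establish the small auxiliary fact that the orthogonal complement of any conjugate-symmetric subspace is itself conjugate-symmetric. The flip is an antilinear involution satisfying $\langle \flip(\phi) | \flip(\psi) \rangle = \overline{\langle \phi | \psi \rangle}$, so if $\ket{\phi} \perp T$ and $\flip(T) = T$, then for every $\ket{\chi} \in T$ we can write $\ket{\chi} = \flip(\ket{\psi})$ for some $\ket{\psi} \in T$, giving $\langle \flip(\phi) | \chi \rangle = \overline{\langle \phi | \psi \rangle} = 0$, hence $\flip(\ket{\phi}) \in T^\perp$. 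Applied to $T = U \otimes V \cdot S$, which is conjugate-symmetric by \cref{eq:Sperp_conjugate_symmetry}, this shows that $S_2 = (U \otimes V \cdot S)^\perp$ is conjugate-symmetric as required.

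With all four hypotheses of \cref{lem:S1_S2_conditions} established for the pair $(S_1, S_2)$, that lemma yields channels $\chan_1, \chan_2$ with the claimed superactivation property, completing the proof. There is no real obstacle here: the theorem is a convenient reformulation of the lemma in terms of a single subspace $S$, which will make the task of constructing explicit examples in the sections that follow significantly more tractable.
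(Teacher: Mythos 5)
Your proof is correct and is essentially the argument in the paper: both instantiate \cref{lem:S1_S2_conditions} with $S_1 = S$ and $S_2 = U\otimes V\cdot S^\perp$, use invariance of product states and conjugate-symmetry under local unitaries, and invoke the same auxiliary fact that the orthogonal complement of a conjugate-symmetric subspace is conjugate-symmetric (via antiunitarity of $\flip$).
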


Our task, then, reduces to finding a subspace $S$ along with unitaries
$U,V$ which satisfy the conditions of \cref{thm:S_conditions}. (The first
two conditions are identical to those required in
Ref.~\cite{rank_additivity}. The remainder arise from the additional
conjugate-divisibility requirement, which rules out the explicit example
constructed in that paper.) Using the ideas of
Refs.~\cite{rank_additivity,Schmidt_rank_subspace}, it is not too hard to
find an explicit example of a subspace satisfying
\cref{thm:S_conditions}. For example, set
\begin{equation}
  U = \id,\quad
  V = \begin{pmatrix}&&&1\\ &&1\\ &1\\1\end{pmatrix} =: X,
\end{equation}
and choose the matrix subspace $\mat(S_1)$ to be spanned by
\begin{equation}
  \begin{aligned}
    &\begin{pmatrix}1\\ &1\\ &&1\\ &&&1\end{pmatrix},&&
     \begin{pmatrix}1\\ &i\\ &&-i\\ &&&-1\end{pmatrix},\\
    &\begin{pmatrix}1\\ &-i\\ &&i\\ &&&-1\end{pmatrix},&&
     \begin{pmatrix}
       1&&&1\\ &-1&-1\\ &-1&-1\\ 1&&&1
     \end{pmatrix},\\
    &\begin{pmatrix}
      0&-4&7\\ &&&\phantom{-}7\\ &&&-4\\ &&&\phantom{-}0
     \end{pmatrix},&&
     \begin{pmatrix}
       \phantom{-}0\\ -4&&&\\\phantom{-}7&&&\\ &7&-4&0
     \end{pmatrix},\\
    &\begin{pmatrix}
       0&-8&9\\ &&&-9\\ &&&\phantom{-}8\\ &&&\phantom{-}0
     \end{pmatrix},&&
     \begin{pmatrix}
       \phantom{-}0\\ -8&&&\\\phantom{-}9&&&\\ &-9&8&0
     \end{pmatrix}.
  \end{aligned}
\end{equation}
(The entries of the final four matrices are fairly arbitrary; they were
essentially chosen by picking two different sets of four integers at
random, and symmetrising.)

$\mat(S_1^\perp)$ is then spanned by
\begin{gather}
  \begin{aligned}
    &\begin{pmatrix}&&&1\\ &&1\\ &1\\1\end{pmatrix},&&
     \begin{pmatrix}&&&1\\ &&i\\ &-i\\-1\end{pmatrix},\\
    &\begin{pmatrix}&&&1\\ &&-i\\ &i\\-1\end{pmatrix},&&
     \begin{pmatrix}
       1&&&1\\ &\phantom{-}1&-1\\ &-1&\phantom{-}1\\ 1&&&1
     \end{pmatrix},\\
    &\begin{pmatrix}
       0&1&2\\ &&&-6\\ &&&-8\\ &&&\phantom{-}0
     \end{pmatrix},&&
     \begin{pmatrix}0\\1&&&\\2&&&\\ &-6&-8&0\end{pmatrix},\\
    &\begin{pmatrix}
       0&-8&-6\\ &&&2\\ &&&1\\ &&&\phantom{-}0
     \end{pmatrix},&&
     \begin{pmatrix}\phantom{-}0\\-8&&&\\-6&&&\\ &2&1&0\end{pmatrix}.
   \end{aligned}
\end{gather}
It is straightforward to verify that this choice of $S_1$ satisfies the
conjugate-symmetry conditions of
\cref{eq:S_conjugate_symmetry,eq:Sperp_conjugate_symmetry}. To see that
the positive-semi\-def\-in\-ite\-ness conditions of
\cref{eq:S_positivity,eq:Sperp_positivity} are satisfied, note that $S_1$
and $\id\otimes X\cdot S_1^\perp$ both contain the identity matrix, which
is positive and full rank. Thus we can construct a positive-semidefinite
basis by adding sufficient weight of the identity to the other basis
elements. Finally, the easiest way to prove that
\cref{eq:S_unextendible,eq:Sperp_unextendible} are satisfied is to use a
computer algebra package such as Mathematica, and apply the Groebner
basis algorithm. (Note that this provides a rigorous computer-aided
proof, not merely supporting numerical evidence.)

\section{Superactivation of the asymptotic zero-error capacity}
\label{sec:asymptotic}
We have proven in the previous section that the one-shot zero-error
capacity can be superactivated, which hints at an even more remarkable
possibility: can the \emph{asymptotic} capacity be superactivated?

The main challenge lies in showing that a channel has zero zero-error
capacity even in the asymptotic limit. This involves proving that
\emph{all} tensor powers of the channel have zero zero-error capacity.
From the arguments of \cref{sec:one-shot}, this implies that the
orthogonal complement of any tensor power of the support of its
Choi-Jamio\l{}kowski matrix should contain no product states. Thus, as in
\cref{sec:one-shot}, our task is to find a subspace that satisfies all
the conditions of \cref{eq:S_conditions}, but we strengthen
\cref{eq:S_unextendible,eq:Sperp_unextendible} to in addition require
that no tensor powers of the subspaces contain any product states. Given
such a subspace, we can construct a pair of channels in exactly the same
way as we did in \cref{sec:one-shot}, but thanks to these stronger
properties the individual channels will now have zero zero-error capacity
even in the asymptotic limit. This is summarised in the following
counterpart to \cref{thm:S_conditions}. (Once again, it is helpful for
later to redundantly retain the conjugate-symmetry requirement of
\cref{eq:S_conjugate_symmetry2,eq:Sperp_conjugate_symmetry2}, even though
this is already implied by the positive-semidefinite requirement of
\cref{eq:S_positivity2,eq:Sperp_positivity2}.)
\begin{theorem}\label{thm:S_strongly_unextendible_conditions}
  If there exists a subspace $S$ and unitaries $U,V$ satisfying
  \begin{subequations}
  \label[equations]{eq:S_strongly_unextendible_conditions}
  \begin{gather}
    \forall k,\nexists\ket{\psi},\ket{\varphi}\in\HS_A^{\otimes k}:
      \ket{\psi}\otimes\ket{\varphi} \in (S^{\otimes k})^\perp,
      \label{eq:S_strongly_unextendible}\\
    \forall k,\nexists\ket{\psi},\ket{\varphi}\in\HS_A^{\otimes k}:
      \ket{\psi}\otimes \ket{\varphi}
      \in \bigl((S^\perp)^{\otimes k}\bigr)^\perp,
      \label{eq:Sperp_strongly_unextendible}\\
    \flip(S) = S\,,\label{eq:S_conjugate_symmetry2}\\
    \flip(U\otimes V\cdot S) = U\otimes V\cdot S,
      \label{eq:Sperp_conjugate_symmetry2}\\
    \exists \{M_i \geq 0\}: \mat(S) = \vspan\{M_i\},
      \label{eq:S_positivity2}\\
    \exists \{M_j \geq 0\}: \mat(U\otimes V\cdot S^\perp) = \vspan\{M_j\},
      \label{eq:Sperp_positivity2}
  \end{gather}
  \end{subequations}
  then there exist channels $\chan_{1,2}$ which individually have no
  zero-error capacity, but whose joint channel $\chan_1\otimes\chan_2$
  \emph{does} have non-zero zero-error capacity.
\end{theorem}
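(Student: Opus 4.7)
The plan is to follow the one-shot argument from \cref{sec:one-shot} almost verbatim, taking advantage of the fact that the joint-channel direction only ever used a single use of each $\chan_i$, while deploying the strengthened conditions \cref{eq:S_strongly_unextendible,eq:Sperp_strongly_unextendible} to rule out zero-error communication at every tensor power of the individual channels. First, set $S_1 = S$ and $S_2 = (U \otimes V) \cdot S^\perp$, with transposes implicit as at the end of \cref{sec:one-shot}. The conditions \cref{eq:S_conjugate_symmetry2,eq:Sperp_conjugate_symmetry2,eq:S_positivity2,eq:Sperp_positivity2} ensure that $S_1$ and $S_2$ are conjugate-symmetric and positive-semidefinite, so \cref{thm:conjugate_divisible} produces CPT maps $\chan_1, \chan_2$ whose composite maps $\mathcal{N}_i = \chan_i^* \circ \chan_i$ have (in general non-standard) Choi-Jamio\l{}kowski matrices $\sigma_i$ with $\supp(\sigma_i) = S_i$. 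The non-zero zero-error capacity of the joint channel $\chan_1 \otimes \chan_2$ then follows from the one-shot argument of \cref{sec:one-shot} without change, since that argument only invoked the conditions encoded in \cref{eq:orthogonal_complement} together with conjugate-symmetry and positivity of $S_1, S_2$, all of which still hold.

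The new ingredient is to show that each $\chan_i$ has no zero-error capacity even asymptotically, i.e.\ that $\chan_i^{\otimes k}$ has zero one-shot zero-error capacity for every $k$. The composite map $(\chan_i^{\otimes k})^* \circ \chan_i^{\otimes k} = \mathcal{N}_i^{\otimes k}$ has Choi-Jamio\l{}kowski matrix $\sigma_i^{\otimes k}$, whose support (after reordering tensor factors so that all input subsystems come first and all output subsystems second) is $S_i^{\otimes k}$. Applying the one-shot criterion \cref{eq:no_product_states} to $\chan_i^{\otimes k}$ in place of $\chan_i$, zero one-shot zero-error capacity of $\chan_i^{\otimes k}$ is equivalent to $(S_i^{\otimes k})^\perp$ containing no product state across the $\HS_A^{\otimes k} / \HS_{A'}^{\otimes k}$ bipartition. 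For $i=1$ this is exactly \cref{eq:S_strongly_unextendible}. For $i=2$, since $S_2^{\otimes k} = (U \otimes V)^{\otimes k} \cdot (S^\perp)^{\otimes k}$, we have $(S_2^{\otimes k})^\perp = (U \otimes V)^{\otimes k} \cdot \bigl((S^\perp)^{\otimes k}\bigr)^\perp$, and because $(U \otimes V)^{\otimes k}$ factors as $U^{\otimes k} \otimes V^{\otimes k}$ across the $\HS_A^{\otimes k} / \HS_{A'}^{\otimes k}$ bipartition it preserves the product-state property, reducing the requirement to \cref{eq:Sperp_strongly_unextendible}.

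The only routine book-keeping is the potentially non-standard form of the Choi-Jamio\l{}kowski matrices produced by \cref{thm:conjugate_divisible}; just as in the one-shot case, the accompanying rescaling factors $U \sigma_A^{-1/2} \otimes \bar{U} \bar{\sigma}_A^{-1/2}$ from \cref{cor:Choi_non-standard} can be absorbed into the test vectors $\ket{\psi}, \ket{\varphi}$ without disturbing the product structure, and raising this rescaling to the $k$-th tensor power is still a local operation across the bipartition. The real difficulty in this theorem is not in the implication itself but in exhibiting a subspace $S$ that actually satisfies \cref{eq:S_strongly_unextendible,eq:Sperp_strongly_unextendible} for \emph{all} $k$ simultaneously; an infinite family of constraints is much harder to satisfy than the two used in the one-shot case, and is presumably where the algebraic geometry alluded to in \cref{sec:intro} will be needed. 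That existence step, however, is logically separate from the conditional statement being proven here.
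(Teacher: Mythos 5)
Your proposal is correct and follows essentially the same route as the paper: the paper also obtains this theorem by constructing the channels from $S$ and $U\otimes V\cdot S^\perp$ via \cref{thm:conjugate_divisible}, reusing the one-shot argument verbatim for the joint channel, and observing that since the Choi-Jamio\l{}kowski matrix of $(\chan_i^{\otimes k})^*\circ\chan_i^{\otimes k}=\mathcal{N}_i^{\otimes k}$ has support $S_i^{\otimes k}$ (up to reordering of factors and local rescalings absorbable into the test states), the strengthened conditions \cref{eq:S_strongly_unextendible,eq:Sperp_strongly_unextendible} rule out zero-error transmission for every tensor power of each individual channel. Your write-up in fact spells out the bookkeeping (duals of tensor powers, local unitaries preserving the product-state property, non-standard Choi rescalings) in more detail than the paper does, and you are right that the genuinely hard part — exhibiting such an $S$ — is a separate existence argument handled later in the paper.
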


Before proving that such a subspace exists, it is worth outlining the
general approach. We first adapt and extend the algebraic-geometry
arguments of Ref.~\cite{Jianxin_strong} to show that either almost all
subspaces satisfying
\cref{eq:S_conjugate_symmetry2,eq:Sperp_conjugate_symmetry2} also satisfy
\cref{eq:S_strongly_unextendible}, or none of them do. Then, we construct
a particular subspace that \emph{does} satisfy
\cref{eq:S_conjugate_symmetry2,eq:Sperp_conjugate_symmetry2,eq:S_strongly_unextendible}. Whilst
that particular subspace certainly does \emph{not} satisfy
\cref{eq:Sperp_strongly_unextendible}, the fact that it exists shows that
almost all subspaces satisfying
\cref{eq:S_conjugate_symmetry2,eq:Sperp_conjugate_symmetry2} must also
satisfy \cref{eq:S_strongly_unextendible}. And, by symmetry, this implies
that almost all of them also satisfy
\cref{eq:Sperp_strongly_unextendible}. Therefore, if we choose a subspace
satisfying \cref{eq:S_conjugate_symmetry2,eq:Sperp_conjugate_symmetry2}
at random, it will almost-surely satisfy
\cref{eq:S_strongly_unextendible,eq:Sperp_strongly_unextendible}. Finally,
we show that there is a non-zero probability that such a randomly chosen
subspace will also satisfy \cref{eq:S_positivity2,eq:Sperp_positivity2},
implying that a subspace satisfying all the conditions in
\cref{thm:S_strongly_unextendible_conditions} does exist.

\subsection{Strongly unextendible conjugate-symmetric subspaces are
  full measure}
We first require some terminology, notation and basic results relating to
the first two conditions,
\cref{eq:S_strongly_unextendible,eq:Sperp_strongly_unextendible}, of
\cref{thm:S_strongly_unextendible_conditions}.
\begin{definition}
  A subspace $S\subset\HS_A\otimes\HS_B$ is $k$-unextendible if
  $(S^{\otimes k})^\perp$ contains no product state in $\HS_{A^{\otimes
      k}}\otimes\HS_{B^{\otimes k}}$. A subspace is \keyword{strongly
    unextendible} if it is $k$-unextendible for all $k\geq 1$.
  Conversely, a subspace is \keyword{$k$-extendible} if it is not
  $k$-unextendible, and \keyword{extendible} if it is not strongly
  unextendible.
\end{definition}

$\gr_d(V)$ denotes the Grassmannian of a vector space $V$ (the set of all
$d$-dimensional subspaces of $V$). The sets of $k$-extendible,
extendible, and strongly unextendible subspaces of dimension $d$ will be
denoted, respectively,
\begin{align}
  E_d^k(\HS_A,\HS_B)
    &=\{S\in\gr_d(\HS_A\otimes\HS_B)
        \,|\, S \text{ is $k$-extendible}\},\\
  E_d(\HS_A,\HS_B)
    &=\{S\in\gr_d(\HS_A\otimes\HS_B)
        \,|\, S \text{ is extendible}\},
\end{align}
\begin{multline}
  U_d(\HS_A,\HS_B)= \\
    \{S\in\gr_d(\HS_A\otimes\HS_B)
        \,|\, S \text{ is strongly unextendible}\},
\end{multline}
so that
\begin{equation}
  U_d(\HS_A,\HS_B) = \Bigl(\bigcup_k E_d^k(\HS_A,\HS_B)\Bigr)^c,
\end{equation}
i.e.\ $U_d$ is the complement of the union over all $E_d^k$.

We start by proving that $E_d^k$ is an algebraic set:
\begin{lemma}\label{lem:Edk-closed}
  $E_d^k(\HS_A,\HS_B)$ is Zariski-closed in
  $\gr_d(\HS_A\otimes\HS_B) =
  \gr_d(\CC^{d_A}\otimes\CC^{d_B})$.
\end{lemma}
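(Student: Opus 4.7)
The plan is to realize $E_d^k(\HS_A,\HS_B)$ as the image of a projective incidence variety under a closed projection to the Grassmannian, and then to invoke the fact that projections from products of projective spaces are closed maps.

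The first and only delicate step is to reconcile the Hermitian orthogonal complement used in the paper's definitions with the algebraic structure needed for a Zariski statement. The map $S\mapsto S^\perp$ is antiholomorphic, hence not a morphism of varieties. I would instead work with the bilinear annihilator $S^\circ = \overline{S^\perp}$, which depends algebraically (in fact linearly) on $S$ in Plücker coordinates. The key observation is that complex conjugation in the computational basis sends product states to product states, so $(S^{\otimes k})^\perp$ contains a product state if and only if $(S^{\otimes k})^\circ$ does; replacing $\perp$ by $\circ$ therefore leaves the set $E_d^k$ unchanged.

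Next I would form the incidence variety
\begin{equation*}
Z = \bigl\{(S,[\psi],[\varphi]) \in \gr_d(\HS_A\otimes\HS_B)
   \times \mathbb{P}(\HS_A^{\otimes k}) \times \mathbb{P}(\HS_B^{\otimes k})
   : \ket{\psi}\otimes\ket{\varphi} \in (S^{\otimes k})^\circ \bigr\}.
\end{equation*}
Writing the tautological bundle of $\gr_d$ locally in polynomial coordinates and taking a basis $\{\ket{s_\alpha}\}$ of $S$ whose entries are polynomial in those coordinates, the defining condition of $Z$ unwinds to the vanishing of the bilinear pairings $(\ket{\psi}\otimes\ket{\varphi})^T(\ket{s_{\alpha_1}}\otimes\cdots\otimes\ket{s_{\alpha_k}})$ for every $k$-tuple of basis indices. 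These are bihomogeneous polynomial equations in the Plücker coordinates of $S$ and the homogeneous coordinates of $[\psi]$ and $[\varphi]$, so $Z$ is Zariski-closed in the triple product.

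Finally, let $\pi$ denote projection onto the Grassmannian factor. By construction $\pi(Z) = E_d^k(\HS_A,\HS_B)$, and because $\mathbb{P}(\HS_A^{\otimes k}) \times \mathbb{P}(\HS_B^{\otimes k})$ is a complete variety (a product of projective spaces), $\pi$ is a closed map. Hence $E_d^k(\HS_A,\HS_B)$ is Zariski-closed in $\gr_d(\HS_A\otimes\HS_B)$. The main obstacle is really just the first step; beyond that, the argument is a routine application of the closedness of proper morphisms.
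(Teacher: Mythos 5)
Your proof is correct, and it is structured somewhat differently from the paper's. The paper factors $E_d^k$ as $(\phi_2\circ\phi_1)^{-1}(T)$ where $\phi_1(S)=S^{\otimes k}$, $\phi_2(S)=S^\perp$, and then proves separately (via an incidence variety inside $\gr\times\Sigma$ and completeness of the Segre variety) that the set $T$ of subspaces meeting the Segre variety is Zariski-closed. You instead build a single incidence variety $Z$ in $\gr_d\times\mathbb{P}(\HS_A^{\otimes k})\times\mathbb{P}(\HS_B^{\otimes k})$ and project off the complete projective factors in one step; this collapses the paper's two-stage reduction into one and is arguably tidier. The more substantive difference is your first step: you note that the Hermitian complement $S\mapsto S^\perp$ is antiholomorphic, hence not a morphism of complex varieties, and replace it with the bilinear annihilator $S^\circ=\overline{S^\perp}$, which is linear in Plücker coordinates. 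You then observe that since the Segre variety is preserved by entrywise complex conjugation, $(S^{\otimes k})^\perp$ meets it iff $(S^{\otimes k})^\circ$ does, so the substitution leaves $E_d^k$ unchanged. The paper simply asserts that $\phi_2$ is a proper morphism without addressing this conjugation issue; one can rescue the paper's version by reading it over the real Grassmannian (to which the paper passes anyway in the subsequent corollary), but your fix is the cleaner way to keep the argument in the complex algebraic category and makes the lemma's statement literally true as written.
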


Before proving this lemma, we need some background about complete
varieties and proper morphisms. We will state here only the necessary
facts, without introducing formal mathematical definitions.

A continuous function between topological spaces is \keyword{proper} if
inverse images of compact subsets are compact. In algebraic geometric
settings, an analogue of a compact set is a \keyword{complete} variety.
For our purposes, it suffices to know that every projective variety is
complete, and a variety over $\mathbb{C}$ is complete if and only if it
is compact in the classical topology.

Similarly, a proper morphism between varieties is an analogue of a proper
map between \emph{classical} topological spaces. We will make key use of
some basic properties of proper morphisms. First, inverse images of
complete varieties are complete too. Second, the composition of two
proper morphisms is proper again. Thirdly, projective morphisms are
proper.

For those interested in formal definitions and more detailed properties,
we refer to \cite{Shafarevich94}. With these basic facts, we are now in a
position to prove our lemma.

\begin{IEEEproof}
  Define the following two maps:
  \begin{subequations}
  \begin{align}
    &\begin{aligned}
      \phi_1: &\gr_d(\HS_A\otimes \HS_B) \rightarrow
               \gr_{d^k}(\HS_{A^{\otimes k}}\otimes\HS_{B^{\otimes k}})\\
              &\text{ which maps } S\longmapsto S^{\otimes k},
    \end{aligned}\\
    &\begin{aligned}
      \phi_2: &\gr_{d}(\HS_{A^{\otimes k}}\otimes
                 \HS_{B^{\otimes k}}) \rightarrow
               \gr_{d_A^kd_B^k-d}(\HS_{A^{\otimes k}}\otimes
                 \HS_{B^{\otimes k}})\\
              &\text{ which maps } S\longmapsto S^{\perp}.
    \end{aligned}\raisetag{1em}
  \end{align}
  \end{subequations}
  We then have
  \begin{equation}
    \begin{split}
      &E_d^k(\HS_A,\HS_B)\\
      &\mspace{10mu}=\{
          S\in \gr_d(\HS_A\otimes\HS_B)
          \,|\,\phi_2\circ\phi_1(S)\cap \Sigma_{d_A^k-1, d_B^k-1}
          \neq\emptyset
        \}
    \end{split}
  \end{equation}
  where $\Sigma_{d_A^k-1, d_B^k-1}$ is the Segre variety (the
  projective variety consisting of all product states). If we let
  \begin{equation}
    \begin{split}
      T = \{&S\in \gr_{d_A^kd_B^k-d^k}(\HS_{A^{\otimes k}}\otimes
              \HS_{B^{\otimes k}})\,|\\
            &\mspace{150mu} S\cap\Sigma_{d_A^k-1, d_B^k-1}\neq\emptyset
          \}
    \end{split}\raisetag{2.7em}
  \end{equation}
  then $E_d^k(\HS_A,\HS_B)=(\phi_2\circ\phi_1)^{-1}(T)$. $\phi_1$ and
  $\phi_2$ are both proper morphisms, thus their composition is again a
  proper morphism, which implies that the pre-image
  $E_d^k(\HS_A,\HS_B)=(\phi_2\circ\phi_1)^{-1}(T)$ is Zariski closed if
  $T$ is Zariski closed.

  In the next step, we will prove the general result that
  \begin{equation}
    \begin{split}
      &R_{d}(\HS_A,\HS_B)\\
      &\quad=\{
          S\in \gr_{d}(\HS_{A}\otimes \HS_{B})
          \,|\, S \cap \Sigma_{d_A-1,d_B-1} \neq\emptyset
        \}
    \end{split}
  \end{equation}
  is Zariski closed, which will imply $T$ is Zariski closed. Let
  \begin{equation}
    X=\{(\mathcal{S},[v]) \,|\,
        \mathcal{S}\subset\HS_A\otimes\HS_B,
        [v]\in\Sigma_{d_A-1,d_B-1}
        \textrm{ and } v \in\mathcal{S}\}.
  \end{equation}
  Then $X$ is a subset of $\gr_{d}(\HS_{A}\otimes\HS_{B})\times
  \Sigma_{d_A-1, d_B-1}$. Let $P$ be the projection from
  $\gr_{d}(\HS_{A}\otimes\HS_{B})\times \Sigma_{d_A-1, d_B-1}$ to
  $\gr_{d}(\HS_{A}\otimes\HS_{B})$, so that that $R_{d}(\HS_A,
  \HS_B)=P(X)$. It is not hard to check that $X$ is Zariski closed.
  Since $\Sigma_{d_A-1, d_B-1}$ is a projective variety it is complete,
  and as a result the image of projection $P$ on any Zariski-closed set
  in $\gr_{d}(\HS_{A}\otimes\HS_{B})\times \Sigma_{d_A-1, d_B-1}$ is
  again Zariski closed. Therefore $R_{d}(\HS_A,\HS_B)=P(X)$ is Zariski
  closed.
\end{IEEEproof}

We will consider the case when $\HS_A = \HS_B =\CC^{d_A}$. In what
follows, it will be useful to represent $\HS_A\otimes\HS_B = \CC^{d_A}
\otimes \CC^{d_A}$ as the real vector space $\RR^2 \otimes \RR^{d_A}
\otimes \RR^{d_A}$. The complex Grassmannian $\gr_d(\CC^{d_A} \otimes
\CC^{d_A})$ can then be mapped injectively to the real Grassmannian
$\gr_{2d}(\RR^2\otimes \RR^{d_A} \otimes \RR^{d_A})$. Define $i$ to be a
linear operator acting on $\RR^2\otimes \RR^{d_A} \otimes \RR^{d_A}$ in
the natural way, i.e.\ as $\left(\begin{smallmatrix} 0 & -1 \\ 1 & 0
  \end{smallmatrix}\right) \otimes \id_{d_A} \otimes \id_{d_A}$. Then
$S\in \gr_{2d}(\RR^2\otimes \RR^{d_A} \otimes \RR^{d_A})$ corresponds to
an element of $\gr_d(\CC^{d_A} \otimes \CC^{d_A})$ if and only if it
satisfies $iS=S$.

Now we use the fact that a Zariski-closed set in a complex
vector space is also Zariski-closed in the isomorphic real vector
space to obtain the following corollary to \cref{lem:Edk-closed}.
\begin{corollary}\label{lem:Ed_Zariski_closed}
  $E_{2d}^k(\HS_A,\HS_B)$ is Zariski-closed in the real Grassmannian
  $\gr_{2d}(\RR^2 \otimes \RR^{d_A}\otimes\RR^{2d_B})$.
\end{corollary}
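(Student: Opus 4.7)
The plan is to derive the corollary by transferring the complex Zariski-closedness from \cref{lem:Edk-closed} across the real-form embedding described immediately before the statement. In outline, I would show that (i) the image of $\gr_d(\CC^{d_A}\otimes\CC^{d_B})$ inside $\gr_{2d}(\RR^2\otimes\RR^{d_A}\otimes\RR^{d_B})$ is itself Zariski-closed, (ii) the subset $E_d^k$ pulls back to a Zariski-closed subset of this image, and (iii) both facts combine to give Zariski-closedness in the ambient real Grassmannian.

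First I would make the embedding explicit using Plücker coordinates. A $2d$-dimensional real subspace $S\subset\RR^2\otimes\RR^{d_A}\otimes\RR^{d_B}$ is represented, up to scalar, by the Plücker vector $p_S\in\PP(\Lambda^{2d}(\RR^2\otimes\RR^{d_A}\otimes\RR^{d_B}))$. The condition $iS\subseteq S$ (equivalently $iS=S$, since $i$ is a real-linear isomorphism) is a linear condition on $S$, and pushing it through the Plücker embedding yields a finite collection of real polynomial equations in $p_S$, i.e.\ a Zariski-closed locus $G_\CC\subseteq\gr_{2d}(\RR^2\otimes\RR^{d_A}\otimes\RR^{d_B})$. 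This is the standard algebraic realisation of the complex Grassmannian as a real subvariety of the real one.

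Next I would transport the Zariski-closed set $E_d^k(\HS_A,\HS_B)\subseteq\gr_d(\CC^{d_A}\otimes\CC^{d_B})$ across the isomorphism $G_\CC\cong\gr_d(\CC^{d_A}\otimes\CC^{d_B})$. The Plücker coordinates of $S$ viewed as a complex $d$-plane are polynomial in the Plücker coordinates of $S$ viewed as a real $2d$-plane: up to reordering and multilinear expansion they correspond to the complex Plücker vector of the $i$-invariant subspace. Hence each complex polynomial equation cutting out $E_d^k$ pulls back to a real polynomial equation (or a pair, if one splits real and imaginary parts) in the ambient real Plücker coordinates. The resulting real algebraic set, intersected with $G_\CC$, is exactly the image of $E_d^k$ in $\gr_{2d}(\RR^2\otimes\RR^{d_A}\otimes\RR^{d_B})$, and by convention this is what the corollary denotes by $E_{2d}^k(\HS_A,\HS_B)$.

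The main obstacle is step (i)/(ii): writing down cleanly why the complex Zariski-closed set becomes a real Zariski-closed set after identifying $\CC^n$ with $\RR^{2n}$. I would justify this by noting that any complex polynomial $f(z_1,\dots,z_m)$ becomes a pair $(\operatorname{Re} f,\operatorname{Im} f)$ of real polynomials in the real and imaginary parts of the $z_k$, so complex algebraic sets are real algebraic sets; combined with the polynomial description of $G_\CC$ inside the real Grassmannian via Plücker coordinates, this gives the desired Zariski-closedness. Everything else (properness of $\phi_1,\phi_2$, completeness of the Segre variety) has already been handled in \cref{lem:Edk-closed}, so no further ingredients are needed.
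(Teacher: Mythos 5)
Your proposal is correct and takes essentially the same route as the paper: the paper disposes of this corollary with a single remark that complex Zariski-closed sets are also Zariski-closed in the realification, and your steps (i)--(iii) simply unpack that remark carefully (including the genuinely needed observation, which the paper leaves implicit, that the $i$-invariant locus $G_\CC$ is itself cut out by polynomial equations in the real Plücker coordinates, so that closedness relative to $G_\CC$ promotes to closedness in the ambient real Grassmannian).
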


We now consider the set of subspaces that satisfy
\cref{eq:S_conjugate_symmetry2,eq:Sperp_conjugate_symmetry2} of
\cref{thm:S_strongly_unextendible_conditions}. Denote this set by
\begin{multline}
    F_d(\CC,d_A)
    =\{S\in\gr_d(\CC^{d_A} \otimes \CC^{d_A})\,|\\
       S = \flip(S),\;
       \flip(U\otimes V\cdot S) = U\otimes V\cdot S\,\}.
\end{multline}
To better handle the conjugate-linear constraints, we will consider
the equivalent set of real vector spaces, defined to be
\begin{equation}\label{eq:Fd-real}
  \begin{split}
    &F_d(\RR,d_A)\\
    &\mspace{20mu}=\{
       S\in\gr_{2d}(\RR^2\otimes \RR^{d_A} \otimes \RR^{d_A})
         \qquad\\
       &\mspace{60mu}
         S = iS,\; S = \flip(S),\;
         \flip(U\otimes V\cdot S) = U\otimes V\cdot S
     \}.
  \end{split}\raisetag{3.5em}
\end{equation}
While $F_d(\RR,d_A)$ and $F_d(\CC,d_A)$ are isomorphic, we will find it
convenient to work with both of them at different times.

As the following lemma shows, this set is also algebraic:
\begin{lemma}\label{lem:Fd_Zariski_closed}
  $F_d(\RR,d_A)$ is Zariski-closed in
  $\gr_{2d}(\RR^2\otimes\RR^{d_A}\otimes\RR^{d_A})$.
\end{lemma}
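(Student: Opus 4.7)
The plan is to exhibit $F_d(\RR,d_A)$ as a finite intersection of Zariski-closed subsets of $\gr_{2d}(\RR^2 \otimes \RR^{d_A} \otimes \RR^{d_A})$, one for each of the three constraints in its definition. First, I would rewrite the third constraint as a fixed-point condition: $\flip(U\otimes V\cdot S) = U\otimes V\cdot S$ is equivalent to $T_3 S = S$, where $T_3 := (U\otimes V)^{-1}\circ\flip\circ(U\otimes V)$. Setting $T_1 = i$ and $T_2 = \flip$, all three defining conditions of $F_d(\RR,d_A)$ take the uniform form $T_j S = S$. Each $T_j$ is $\RR$-linear on $W := \RR^2 \otimes \RR^{d_A} \otimes \RR^{d_A}$ (note that $i$ and $U\otimes V$ are $\CC$-linear hence $\RR$-linear, while $\flip$ is $\RR$-linear because complex conjugation is $\RR$-linear; this is exactly why it is necessary to pass to the real Grassmannian, since $\flip$ and $T_3$ are not $\CC$-linear in general), and each is an $\RR$-linear bijection of $W$.

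The next step is to show that for any invertible $\RR$-linear operator $T\colon W\to W$, the fixed-point set
\begin{equation}
\mathrm{Fix}_d(T) := \{S\in\gr_{2d}(W)\,|\, TS = S\}
\end{equation}
is Zariski-closed. Because $T$ is invertible, $TS \subseteq S$ already forces equality, so $\mathrm{Fix}_d(T) = \{S\,|\,TS\subseteq S\}$. The map $\tilde{T}\colon\gr_{2d}(W)\to\gr_{2d}(W)$, $S\mapsto TS$, is an automorphism of the Grassmannian (a morphism of real projective varieties). Concretely, in Plücker coordinates $p(S) \in \PP\bigl(\bigwedge^{2d}W\bigr)$ one has $p(TS) = \bigwedge^{2d}(T)\cdot p(S)$, which is a linear function of $p(S)$; the equality $p(S) = p(TS)$ in projective space is cut out by the vanishing of all $2\times 2$ minors of the matrix $[p(S),\bigwedge^{2d}(T)\,p(S)]$. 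These minors are polynomial in the Plücker coordinates, so $\mathrm{Fix}_d(T)$ is Zariski-closed (equivalently, it is the preimage of the diagonal in $\gr_{2d}(W)\times\gr_{2d}(W)$ under the morphism $S\mapsto (S,\tilde{T}(S))$).

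Applying this to $T_1,T_2,T_3$ yields three Zariski-closed subsets of $\gr_{2d}(W)$, and
\begin{equation}
F_d(\RR,d_A) = \mathrm{Fix}_d(T_1)\cap\mathrm{Fix}_d(T_2)\cap\mathrm{Fix}_d(T_3)
\end{equation}
is then Zariski-closed as a finite intersection of Zariski-closed sets. There is no real obstacle here — the argument is essentially the standard fact that fixed loci of algebraic automorphisms of a projective variety are closed; the only point requiring any care is the rewriting of the mixed $\flip$-and-$(U\otimes V)$ constraint as a genuine fixed-point condition for a single $\RR$-linear operator, which forces the real-Grassmannian formulation used in \cref{eq:Fd-real}.
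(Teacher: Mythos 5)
Your proof is correct and takes essentially the same route as the paper: both express $F_d(\RR,d_A)$ as a finite intersection of fixed-point sets of invertible $\RR$-linear operators acting on the real Grassmannian, and show each such fixed locus is Zariski-closed via the Pl\"ucker embedding, under which the operator acts linearly on $\PP(\wedge^{2d}(\RR^2\otimes\RR^{d_A}\otimes\RR^{d_A}))$. Your treatment is, if anything, slightly more careful than the paper's (the projective proportionality handled via the $2\times 2$ minors, and the explicit conjugated operator $T_3=(U\otimes V)^{-1}\circ\flip\circ(U\otimes V)$ for the third constraint), but it is the same argument.
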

\begin{IEEEproof}
  We will prove a more general statement. If $\HS$ is a
  finite-dimensional real vector space, and $M\in\mathcal{B}(\HS)$ then
  define the action of $M$ on $\gr_{d}(\HS)$ by
  \begin{equation}
    M(S) = \{M\ket{\psi} : \ket{\psi}\in S\},
  \end{equation}
  for $S\in \gr_d(\HS)$. Then we claim that the set of subspaces
  invariant under $M$, $\{S\in\gr_d(\HS) : M(S)=S\}$, is Zariski-closed
  in $\gr_d(\HS)$.

  To show the lemma follows from this claim, take $\HS=\RR^2\otimes
  \RR^{d_A}\otimes \RR^{d_A}$ and $M$ to be in turn $i$, $\flip$, and
  $(U\otimes V)\cdot \flip$. Then use the fact that the intersection of
  two Zariski-closed sets is also Zariski-closed.

  To prove our claim about $\{S\in\gr_d(\HS) : M(S)=S\}$, we will use the
  Pl\"ucker embedding~\cite{Harris}. The Pl\"ucker embedding $\iota$ is a
  map from $\gr_d(\HS)$ into $\PP(\wedge^d \HS)$. Here $\wedge^d \HS$
  denotes the $d^\mathrm{th}$ exterior power of $\HS$, and $\PP$
  indicates that we are taking the projectification of $\wedge^d \HS$. If
  $S$ is spanned by $\{\ket{\psi_1},\ldots,\ket{\psi_d}\}$ then
  $\iota(S)$ is defined to be $\ket{\psi_1}\wedge \ldots \wedge
  \ket{\psi_d}$. To see that $\iota$ is a well-defined map, observe that
  replacing $\ket{\psi_i}$ by $\sum_{j=1}^d A_{i,j} \ket{\psi_j}$ for an
  invertible matrix $A$ has the effect of replacing $\iota(S)$ by
  $\det(A)\iota(S)$, which in projective space makes no difference.

  The exterior product $\ket{\psi_1}\wedge \ldots \wedge \ket{\psi_d}$
  can also be written as
  \begin{equation}
    \sum_{\sigma\in\mathcal{S}_d} (-1)^{\sgn(\sigma)}
    \ket{\psi_{\sigma(1)}} \otimes \cdots \otimes
    \ket{\psi_{\sigma(d)}}
  \end{equation}
  where $\mathcal{S}_d$ is the symmetric group on $d$ elements and
  $\sgn(\sigma)$ is the sign of the permutation $\sigma$. In this picture
  we have
  \begin{equation}
    \iota\bigl(M(S)\bigr) = M^{\otimes d}\;\iota(S).
  \end{equation}
  Thus the condition that $M(S)=S$ is equivalent to demanding that
  $\iota(S) = M^{\otimes d}\;\iota(S)$. This is a linear constraint on
  $\iota(S)$, so $\{\iota(S) : \iota(S) = M^{\otimes d}\;\iota(S)\} =
  \{\iota(S) : M(S)=S\}$ is Zariski-closed. But $\iota$ is a proper
  morphism, % Is this the right terminology? In any case, I think
            % something more specific than continuity is needed for
            % \emph{Zariski}-closure: \iota should map Zariski-closed
            % sets to Zariski-closed sets.
  so $\{S : M(S)=S\}$ must also be Zariski-closed, which completes the
  proof.
\end{IEEEproof}

\vspace{0.5em}
The following follows immediately from
\cref{lem:Ed_Zariski_closed,lem:Fd_Zariski_closed}:
\begin{corollary}
  $E_d^k(\HS_A,\HS_{A'})\cap F_d(\RR,d_A)$ is Zariski-closed in
  $F_d(\RR,d_A)$.
\end{corollary}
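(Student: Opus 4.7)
The plan is to reduce the claim directly to two elementary facts from general topology: the intersection of two closed sets is closed, and the restriction of a closed set to a subspace is closed in the induced subspace topology. All of the substantive content has already been handled inside the two preceding lemmas, so very little work remains.

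First I would invoke \cref{lem:Ed_Zariski_closed} to conclude that $E_d^k(\HS_A,\HS_{A'})$ is Zariski-closed in the real Grassmannian $\gr_{2d}(\RR^2\otimes\RR^{d_A}\otimes\RR^{d_A})$, and \cref{lem:Fd_Zariski_closed} to conclude that $F_d(\RR,d_A)$ is Zariski-closed in the same ambient Grassmannian. Their intersection $E_d^k(\HS_A,\HS_{A'})\cap F_d(\RR,d_A)$ is therefore Zariski-closed in $\gr_{2d}(\RR^2\otimes\RR^{d_A}\otimes\RR^{d_A})$, since a finite intersection of Zariski-closed sets remains Zariski-closed.

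To finish, I would observe that the Zariski topology on $F_d(\RR,d_A)$ is by definition the subspace topology inherited from the ambient Grassmannian, so any Zariski-closed subset of the ambient space that happens to lie inside $F_d(\RR,d_A)$ is automatically Zariski-closed in $F_d(\RR,d_A)$. Applying this observation to $E_d^k(\HS_A,\HS_{A'})\cap F_d(\RR,d_A)$ yields the corollary. There is no substantive obstacle here; the role of this corollary is simply to package the two previous results into the form that will be needed for the Zariski-density arguments in the subsequent subsection.
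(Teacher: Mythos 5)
Your proof is correct and matches the paper's intent exactly: the paper asserts this corollary "follows immediately" from the two preceding lemmas without spelling out the details, and the details you supply (intersection of Zariski-closed sets is Zariski-closed, the Zariski topology on $F_d(\RR,d_A)$ is the subspace topology inherited from $\gr_{2d}(\RR^2\otimes\RR^{d_A}\otimes\RR^{d_A})$) are precisely the standard filling-in that is being left implicit. No gap, no divergence of approach.
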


Any Zariski-closed subset has zero measure (in the usual Haar measure),
unless it is the full space. Thus $\bigcup_kE_d^k$, which is a countable
union of zero-measure sets, is either zero-measure or it is the full
space. Conversely, since $U_d$ is the complement of this union, it is
either full measure or it is the empty set. Since the intersection of two
Zariski-closed sets is Zariski-closed, the identical argument also holds
for $E_d^k\cap F_d$ and $U_d\cap F_d$, hence:
\begin{theorem}\label{thm:Ud_full_or_zero_measure}
  If the set $U_d(\HS_A,\HS_{A'})\cap F_d(\RR,d_A) \neq \emptyset$, then
  it is full measure in $F_d(\RR,d_A)$.
\end{theorem}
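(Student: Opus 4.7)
The plan is essentially to package the dichotomy already sketched in the discussion preceding the theorem. By the corollary immediately above, each $E_d^k\cap F_d(\RR,d_A)$ is Zariski-closed in $F_d(\RR,d_A)$. Treating $F_d(\RR,d_A)$ as an irreducible real algebraic variety (equipped with the smooth ``Haar'' measure inherited from the real Grassmannian), any Zariski-closed subset is either strictly lower-dimensional, hence of measure zero, or equals $F_d$ itself.

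Given this dichotomy, for each $k$ the set $E_d^k\cap F_d$ is either all of $F_d$ or has measure zero. The union $\bigcup_k(E_d^k\cap F_d)$ is then either equal to $F_d$ (if some $E_d^k\cap F_d$ already equals $F_d$) or a countable union of measure-zero sets, hence of measure zero. Its complement in $F_d$, which by the definition $U_d=\bigl(\bigcup_k E_d^k\bigr)^c$ is precisely $U_d\cap F_d$, is correspondingly either empty or of full measure. Since by hypothesis $U_d\cap F_d\neq\emptyset$, the first case is ruled out, and so $U_d\cap F_d$ has full measure in $F_d(\RR,d_A)$.

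The main obstacle in making this plan fully rigorous is justifying the dichotomy ``all of $F_d$, or measure zero'' for Zariski-closed subsets of $F_d$; this requires irreducibility of $F_d(\RR,d_A)$ (or at the very least that every proper Zariski-closed subset has strictly smaller dimension than $F_d$). This can be established using the Pl\"ucker-embedding description from the proof of \cref{lem:Fd_Zariski_closed}: the three constraints $iS=S$, $\flip(S)=S$, and $\flip(U\otimes V\cdot S)=U\otimes V\cdot S$ all become linear conditions in Pl\"ucker coordinates, so $F_d$ is realised as a linear section of the Grassmannian and inherits its smooth/irreducible structure in a well-controlled way. Once this local dimension count is in hand, the rest of the argument is a purely formal combination of Zariski-closedness with countable additivity of measure.
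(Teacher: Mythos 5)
Your core argument is exactly the paper's: the corollary preceding \cref{thm:Ud_full_or_zero_measure} gives Zariski-closedness of each $E_d^k\cap F_d(\RR,d_A)$, one invokes the dichotomy ``proper Zariski-closed $\Rightarrow$ measure zero,'' takes the countable union over $k$, and passes to the complement. The problem is the step you yourself single out as the main obstacle and then dispose of too quickly. It is \emph{not} true that $F_d(\RR,d_A)$ is irreducible, nor that every proper Zariski-closed subset of it has strictly smaller dimension, and being cut out by linear conditions in Pl\"ucker coordinates does not deliver this: linear sections of a Grassmannian are in general reducible, and in fact the paper's own \cref{lem:Fd-structure} shows that (at least for $d_A$ even, the case actually used) $F_d(\RR,d_A)$ splits into $d+1$ disjoint pieces $\gr_k(\RR^{d_A^2/2})\times\gr_{d-k}(\RR^{d_A^2/2})$, $0\le k\le d$, of varying dimensions, with \emph{two} pieces of maximal dimension when $d$ is odd, each of measure $1/2$. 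Each piece is itself Zariski-closed in $F_d$ (it is cut out by rank conditions on $\Pi_{++}$ and $\Pi_{+-}$), so a proper Zariski-closed subset of $F_d(\RR,d_A)$ can perfectly well have measure strictly between $0$ and $1$.

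Consequently the full-or-null dichotomy can only be applied on each irreducible component separately, and then the hypothesis $U_d\cap F_d\neq\emptyset$ by itself only gives that $U_d\cap F_d$ has full measure within the top-dimensional component(s) containing a witness, not automatically in all of $F_d(\RR,d_A)$; to get the theorem as stated one needs something extra, e.g.\ a strongly unextendible subspace in every maximal-dimensional component, or a symmetry of $F_d$ permuting the components while preserving unextendibility. In fairness, the paper's own justification (``any Zariski-closed subset has zero measure unless it is the full space,'' in the paragraph preceding the theorem) glosses over precisely the same point, so your proposal faithfully reproduces the published argument; but the specific repair you offer --- irreducibility of $F_d$ via the linear Pl\"ucker description --- is the one step that demonstrably fails, and a correct write-up has to work component-wise instead.
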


\subsection{Existence of a strongly unextendible
  con\-ju\-gate-sym\-metric subspace}
We now proceed to show that $U_d(\HS_A,\HS_{A'})\cap F_d(\RR,d_A)$ is not
empty. We will do this by starting with a family of strongly unextendible
subspaces and symmetrising them, so we need to get a handle on how much
the symmetrisation blows up the dimension of the subspace, which is the
content of the following lemma.
\begin{lemma}\label{lem:symmetrisation_blow-up}
  Let $\mathcal{F}:\gr_d(\HS_A\otimes\HS_{A'}) \to
  \bigcup_{d'}F_{d'}(\CC,d_A)$ be the map that symmetrises a subspace $S$
  by alternately iterating the maps $\mathcal{F}_1(S) = S + \flip(S)$ and
  $\mathcal{F}_2(S) = S + \flip_{U\otimes V}(S) = S + U^\dagger\otimes
  V^\dagger\; \flip(U\otimes V\cdot S)$ until convergence. Then, for
  \begin{equation}
    U = \id,\quad
    V = \begin{pmatrix}&&&&1\\ &&&1\\ &&\iddots \\ &1\\1\end{pmatrix}
     := X,
  \end{equation}
  the dimension $d'$ of the image $\mathcal{F}(S)$ satisfies $d'\leq 4d$.
\end{lemma}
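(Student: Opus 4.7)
The plan is to recognise the symmetrisation map $\mathcal{F}$ as the orbit-sum under the finite group generated by the two involutions $\flip$ and $\flip_{U\otimes V}$ acting (semi-linearly) on $\HS_A\otimes\HS_{A'}$, and then to show that for the stated choice of $U$ and $V$ this group has order only $4$.

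First I would unpack $\flip_{U\otimes V}$ as a $\CC$-linear operator composed with $\flip$. Using $\swap\cdot(A\otimes B)=(B\otimes A)\cdot\swap$ together with the conjugate-linearity of $\flip$, a short computation gives
\begin{equation*}
  \flip_{U\otimes V} \;=\; \bigl(U^\dagger\bar V\otimes V^\dagger\bar U\bigr)\cdot\flip.
\end{equation*}
For the stated choice $U=\id$ and $V=X$, the matrix $X$ is real, symmetric and satisfies $X^2=\id$, so this collapses to $\flip_{U\otimes V}=(X\otimes X)\cdot\flip$. Since $X\otimes X$ is real and $\swap$-invariant, one has $\flip(X\otimes X)=X\otimes X$, from which it follows that $\flip$ and $\flip_{U\otimes V}$ commute and each squares to the identity. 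Hence the group they generate is the Klein four-group
\begin{equation*}
  G \;=\; \{\,\id,\ \flip,\ X\otimes X,\ \flip_{U\otimes V}\,\},
\end{equation*}
with $\flip\circ\flip_{U\otimes V}=X\otimes X$.

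To finish, I would check that one round each of $\mathcal{F}_1$ and $\mathcal{F}_2$ already produces a fixed point. Unfolding the definitions and using the relation $\flip_{U\otimes V}\circ\flip=X\otimes X$ gives
\begin{equation*}
  \mathcal{F}_2\bigl(\mathcal{F}_1(S)\bigr)
   \;=\; S + \flip(S) + \flip_{U\otimes V}(S) + (X\otimes X)(S)
   \;=\; \sum_{g\in G} g(S),
\end{equation*}
which is manifestly $G$-invariant by the Klein-four relations and therefore stable under any further application of $\mathcal{F}_1$ or $\mathcal{F}_2$. Since $\flip$ takes complex subspaces to complex subspaces of the same complex dimension, each summand $g(S)$ has complex dimension $d$, so $\dim_{\CC}\mathcal{F}(S)\le |G|\cdot d=4d$.

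The one delicate point is the semi-linear bookkeeping in the first step: because $\flip$ is conjugate-linear one has to be careful moving $U\otimes V$ and its adjoint across $\flip$. The reason the group stays as small as $4$ (rather than becoming a larger dihedral-type group that would give a worse bound) is precisely that $X\otimes X$ is fixed by $\flip$, which in turn relies on $X$ being a real Hermitian involution; any less symmetric choice of $V$ would enlarge $G$ and weaken the dimension bound.
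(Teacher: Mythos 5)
Your proof is correct and follows essentially the same route as the paper's: both proofs reduce the matter to observing that $\flip$ and $\flip_{U\otimes V}$ generate a Klein four-group (the paper expresses this on coefficient matrices as $M\mapsto M^\dagger$ and $M\mapsto XM^\dagger X$, so $\mathcal{F}(S)$ is spanned by $\{M_i,M_i^\dagger,XM_iX,XM_i^\dagger X\}$), from which the bound $d'\le 4d$ follows because the symmetrised subspace is the span of four dimension-$d$ images.
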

\begin{IEEEproof}
  Let $M$ be an element of $\mat(S)$, and consider the action of $\flip$
  and $\flip_{U\otimes V}$ on $\mat(S)$. Since $X^\dagger = X$ and
  $X^2=\id$, we have
  \begin{gather}
    \flip(M) = M^\dagger,\\
    \flip_{U\otimes V}(M) = XM^\dagger X,\\
    \flip\circ\flip(M)
      = \flip_{U\otimes V}\circ\flip_{U\otimes V}(M) = M,\\
    \flip\circ\flip_{U\otimes V}(M)
      = \flip_{U\otimes V}\circ\flip(M) = XMX.
  \end{gather}
  Thus the alternating application of $\mathcal{F}_1$ and $\mathcal{F}_2$
  converges after a finite number of iterations, and maps a basis
  $\{M_i\}$ for $\mat(S)$ to a basis
  $\{M_i,M_i^\dagger,XM_iX,XM_i^\dagger X\}$ for
  $\mat(\mathcal{F}(S))$. The dimension of $S$ therefore increases by at
  most a factor of four (with equality when
  $\{M_i,M_i^\dagger,XM_iX,XM_i^\dagger X\}$ are all linearly
  independent).
\end{IEEEproof}

The other ingredient, namely a family of strongly unextendible subspaces,
is provided by the well-known \keyword{unextendible product bases}.
\begin{definition}
  An \keyword{unextendible product basis} (UPB) is a set of product
  states $\{\ket[AB]{\psi_i}\}$ (not necessarily orthogonal) in a
  bipartite space $\HS_A\otimes\HS_B$ such that
  $(\vspan\{\ket{\psi_i}\})^\perp$ contains no product states. The
  \keyword{dimension} of a UPB is the number of product states in the
  set.
\end{definition}
Clearly, a UPB spans a $1$-unextendible subspace. That this subspace is
in fact strongly unextendible is shown by the following lemma.
\begin{lemma}\label{lem:UPB_strong_unextendible}
  If $\{\ket[A_1B_1]{\psi_i^1}\}$ and $\{\ket[A_2B_2]{\psi_i^2}\}$ are
  unextendible product bases in $\HS_{A_1}\otimes\HS_{B_1}$ and
  $\HS_{A_2}\otimes\HS_{B_2}$ respectively, then
  $\{\ket{\psi_i^1}\ket{\psi_j^2}\}_{i,j}$ is an unextendible product
  basis in $\HS_{A_1A_2}\otimes\HS_{B_1B_2}$.
\end{lemma}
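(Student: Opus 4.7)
The plan is to show that any product state $\ket[A_1A_2]{\alpha}\otimes\ket[B_1B_2]{\beta}$ lying in the orthogonal complement of $\vspan\{\ket{\psi_i^1}\ket{\psi_j^2}\}_{i,j}$ (across the $A_1A_2\,|\,B_1B_2$ cut) must vanish. Note first that each $\ket{\psi_i^1}\ket{\psi_j^2}=\ket[A_1]{a_i}\ket[B_1]{b_i}\ket[A_2]{c_j}\ket[B_2]{d_j}$ is indeed a product across this cut, so the UPB claim is well-posed. The key idea is a partial-inner-product reduction that peels off one UPB at a time, exploiting the fact that the partial inner product of a bipartite product state with a product bra is itself product.

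For each fixed $j$, I would compute the partial inner product $\ket{\gamma_j}:=(\bra{\psi_j^2})_{A_2B_2}\bigl(\ket[A_1A_2]{\alpha}\otimes\ket[B_1B_2]{\beta}\bigr)\in\HS_{A_1}\otimes\HS_{B_1}$. Since $\ket{\psi_j^2}=\ket[A_2]{c_j}\ket[B_2]{d_j}$ is product, the inner product factorises across the bipartition, giving $\ket{\gamma_j}=\ket[A_1]{\alpha_j}\otimes\ket[B_1]{\beta_j}$, where $\ket{\alpha_j}=(\id_{A_1}\otimes\bra{c_j})\ket{\alpha}$ and $\ket{\beta_j}=(\id_{B_1}\otimes\bra{d_j})\ket{\beta}$. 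By the assumed orthogonality, $\langle\gamma_j|\psi_i^1\rangle=0$ for every $i$, so $\ket{\gamma_j}$ is a product state in $\vspan\{\ket{\psi_i^1}\}^\perp$. The UPB property of $\{\ket{\psi_i^1}\}$ then forces $\ket{\gamma_j}=0$, and hence $\ket{\alpha_j}=0$ or $\ket{\beta_j}=0$ for each $j$.

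Next I would translate this pointwise dichotomy into a subspace statement. Let $K_A=\{\ket{c}\in\HS_{A_2}:(\id\otimes\bra{c})\ket{\alpha}=0\}$ and define $K_B\subseteq\HS_{B_2}$ analogously for $\ket{\beta}$; both are complex subspaces, with $K_A=\HS_{A_2}$ iff $\ket{\alpha}=0$ (because $\{\ket{a}\ket{c}\}$ spans $\HS_{A_1}\otimes\HS_{A_2}$), and similarly for $K_B$. The previous step says that for every $j$, either $\ket{c_j}\in K_A$ or $\ket{d_j}\in K_B$. Suppose for contradiction that $\ket{\alpha}\neq 0$ and $\ket{\beta}\neq 0$, so $K_A^\perp$ and $K_B^\perp$ contain nonzero vectors $\ket{c'}$ and $\ket{d'}$ respectively. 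For each $j$: if $\ket{c_j}\in K_A$ then $\langle c'|c_j\rangle=0$; otherwise $\ket{d_j}\in K_B$ and $\langle d'|d_j\rangle=0$. Either way $\langle c'd'|\psi_j^2\rangle=0$, so $\ket{c'}\otimes\ket{d'}$ is a nonzero product state in $\vspan\{\ket{\psi_j^2}\}^\perp$, contradicting the UPB property of the second UPB. Therefore $\ket{\alpha}\otimes\ket{\beta}=0$, as required.

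The main obstacle---such as it is---is recognising the right reduction: once one observes that taking the partial inner product against a product element of the second UPB preserves the product structure on the first bipartition and forces a product vector into the first UPB's orthogonal complement, the rest is an elementary chase through subspaces, requiring no algebraic-geometry machinery. The only piece of bookkeeping is ensuring that the UPB property of $\{\ket{\psi_j^2}\}$ couples exactly to the nontriviality of $K_A^\perp$ and $K_B^\perp$, which holds because $K_A$ exhausts $\HS_{A_2}$ precisely when $\ket{\alpha}=0$ (and likewise for $K_B$).
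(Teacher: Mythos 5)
Your proof is correct and follows essentially the same strategy as the paper's: take the partial inner product of the putative orthogonal product state against each element $\ket{\psi_j^2}$ of the second UPB, observe the result is a product vector in $\HS_{A_1}\otimes\HS_{B_1}$ orthogonal to the first UPB (hence zero), and then extract from the family of vanishing conditions a nonzero product state orthogonal to the second UPB, a contradiction. The paper reaches the same conclusion via somewhat heavier $\vec(\cdot)$ manipulations and chooses its witness vectors for the second contradiction by contracting against arbitrary $\bra{\gamma}\in\HS_{B_1}^*$, $\bra{\delta}\in\HS_{A_1}^*$ with nonzero partial contractions; your reformulation via the subspaces $K_A,K_B$ and their orthogonal complements is a cleaner way to organise the same dichotomy and makes the dependence on the UPB property of $\{\ket{\psi_j^2}\}$ more transparent.
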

\begin{IEEEproof}
  If $\{\ket[A_1B_1]{\psi_i^1}\}$ and $\{\ket[A_2B_2]{\psi_i^2}\}$ are
  both orthogonal unextendible product bases, this case was proved in
  Ref.~\cite{upb03}. For non-orthogonal unextendible product bases, let
  $\ket[A_1B_1]{\psi_i^1}=\ket[A_1]{\alpha^1_i}\ket[B_1]{\beta^1_i}$ for
  $1\leq i\leq k_1$ and
  $\ket[A_2B_2]{\psi_j^2}=\ket[A_2]{\alpha^2_j}\ket[B_2]{\beta^2_j}$ for
  $1\leq j\leq k_2$.

  Assume for contradiction that $\{\ket{\psi_i^1}\ket{\psi_j^2}\}_{i,j}$
  is extendible in $\HS_{A_1A_2}\otimes\HS_{B_1B_2}$ which means there
  exists a product state $\ket[A_1A_2]{x}\ket[B_1B_2]{y}$ in
  $\HS_{A_1A_2}\otimes\HS_{B_1B_2}$ which is orthogonal to any
  $\ket[A_1B_1]{\psi_i^1}\ket[A_2B_2]{\psi_j^2}$. We then have
  \begin{equation}
    \forall\; 1\leq i\leq k_1 \text{ and } 1\leq j\leq k_2:\;
    \braket{\alpha^1_i,\alpha^2_j}{x}
    \braket{\beta^1_i,\beta^2_j}{y} = 0.
  \end{equation}

  For an $m\times n$ matrix $A$, $\vec(A)$ is an $mn$--element column
  vector whose first $m$ elements are the first column of $A$, the next
  $m$ elements are the second column of $A$, and so on. Thus ``$\vec$''
  converts the matrix into a vector. In ``$\vec$'' notation, we have
  $\vec(ABC)=(C^{T}\otimes A)\vec(B)$. Applying this, we obtain
  \begin{subequations}
  \begin{align}
    0&=\braket[A_1A_2]{\alpha^1_i,\alpha^2_j}{x}
       \braket[B_1B_2]{\bar{y}}{\bar{\beta}^1_i,\bar{\beta}^2_j}\\
     &\begin{aligned}
       =&\bra[A_1]{\alpha_i^1}
         \bigl(\id_{A_1}\otimes\bra[A_2]{\alpha^2_j}\bigr)\\
        &\mspace{30mu}
         \bigl(\ket[A_1A_2]{x}\bra[B_1B_2]{\bar{y}}\bigr)
         \bigl(\id_{B_1}\otimes\ket[B_2]{\bar{\beta}^2_j}\bigr)
         \ket[B_1]{\bar{\beta}^1_i}
     \end{aligned}\\[0.5em]
     &\begin{aligned}
       =&\bigl(
         \ket[B_1]{\bar{\beta}^1_i}^T\otimes\bra[A_1]{\alpha^1_i}
        \bigr)\cdot\\
       &\mspace{30mu}
        \vec\Bigl[
          \bigl(\id_{A_1}\otimes\bra[A_2]{\alpha^2_j}\bigr)\cdot\\
          &\mspace{100mu}
          \bigl(\ket[A_1A_2]{x}\bra[B_1B_2]{\bar{y}}\bigr)\cdot
          \bigl(\id_{B_1}\otimes\ket[B_2]{\bar{\beta}^2_j}\bigr)
       \Bigr]
     \end{aligned}\raisetag{3.5em}\\[0.5em]
     % &=\bra[B_1]{\beta^1_i}\bra[A_1]{\alpha^1_i}
     %   \vec\left(
     %     (\id_{A_1}\otimes\bra[A_2]{\alpha^2_j})
     %     (\ket[A_1A_2]{x}\bra[B_1B_2]{\bar{y}})
     %     (\id_{B_1}\otimes\ket[B_2]{\bar{\beta}^2_j})
     %   \right)\\
    &\begin{aligned}
      =&\bra[B_1]{\beta^1_i}\bra[A_1]{\alpha^1_i}
       \Bigl(
         (\id_{B_1}\otimes\ket[B_2]{\bar{\beta}^2_j})^T\otimes
         (\id_{A_1}\otimes \bra[A_2]{\alpha^2_j})
       \Bigr)\cdot\\
      &\mspace{30mu}
       \vec\Bigl(\ket[A_1A_2]{x}\bra[B_1B_2]{\bar{y}}\Bigr)
    \end{aligned}\raisetag{1.5em}\\[0.5em]
    % &=\bra[B_1]{\beta^1_i}\bra[A_1]{\alpha^1_i}
    %   \left(
    %     \id_{B_1}\otimes\bra[B_2]{\beta^2_j}\otimes
    %     \id_{A_1}\otimes\bra[A_2]{\alpha^2_j}
    %   \right)
    %   \vec\left(\ket[A_1A_2]{x}\bra[B_1B_2]{\bar{y}}\right)\\
    &\begin{aligned}
      =&\bra[B_1]{\beta^1_i}\bra[A_1]{\alpha^1_i}
        \left(
          \id_{B_1}\otimes\bra[B_2]{\beta^2_j}\otimes
          \id_{A_1}\otimes\bra[A_2]{\alpha^2_j}
        \right)\cdot\\
       &\mspace{30mu}
        \left(\bra[B_1B_2]{\bar{y}}^T\otimes\ket[A_1A_2]{x}\right)
        \vec(\id)
    \end{aligned}\\[0.5em]
    % &=\bra[B_1]{\beta^1_i}\bra[A_1]{\alpha^1_i}
    %   \left(
    %     \id_{B_1}\otimes\bra[B_2]{\beta^2_j}\otimes
    %     \id_{A_1}\otimes\bra[A_2]{\alpha^2_j}
    %   \right)
    %   \ket[B_1B_2]{y}\ket[A_1A_2]{x}\\
    &\begin{aligned}
      =&\bra[B_1]{\beta^1_i}\bra[A_1]{\alpha^1_i}
        \Bigl[
          \Bigl(\id_{B_1}\otimes\bra[B_2]{\beta^2_j}\Bigr) \ket[B_1B_2]{y}
          \otimes\\
          &\mspace{170mu}
          \Bigl(\id_{A_1}\otimes\bra[A_2]{\alpha^2_j}\Bigr) \ket[A_1A_2]{x}
        \Bigr].
    \end{aligned}
  \end{align}
  \end{subequations}

  For any fixed $j$, the term in square brackets is either the zero
  vector, or a product state in $\HS_{A_1}\otimes\HS_{B_1}$ which is
  orthogonal to any
  $\ket[A_1B_1]{\psi_i^1}=\ket[A_1]{\alpha^1_i}\ket[B_1]{\beta^1_i}$.
  But $\{\ket[A_1B_1]{\psi_i^1}\}$ is an unextendible product basis, so
  if it is non-zero for some $j$ then we have a contradiction.

  Otherwise, $\bigl(\id\otimes\bra{\beta^2_j}\bigr) \ket[B_1B_2]{y}
  \otimes \bigl(\id\otimes\bra{\alpha^2_j}\bigr) \ket[A_1A_2]{x} = 0$ for
  any $j$. Let $\ket[B_1]{\gamma}$ and $\ket[A_1]{\delta}$ be two vectors
  such that $\bigl(\bra[B_1]{\gamma}\otimes
  \id_{B_2}\bigr)\ket[B_1B_2]{y}\neq 0$ and
  $\bigl(\bra[A_1]{\delta}\otimes \id_{A_2}\bigr)\ket[A_1A_2]{x}\neq
  0$. Then we have
  \begin{multline}
    \bra[B_2]{\beta^2_j}\bra[A_2]{\alpha^2_j}
    \Bigl[
      \bigl(\bra[B_1]{\gamma}\otimes \id_{B_2}\bigr) \ket[B_1B_2]{y}
      \otimes\\
      \bigl(\bra[A_1]{\delta}\otimes \id_{A_2}\bigr) \ket[A_1A_2]{x}
    \Bigr] = 0
  \end{multline}
  for any $j$. Here, the term in square brackets is a nonzero product
  state in $\HS_{A_2}\otimes\HS_{B_2}$ which is orthogonal to any
  $\ket[A_2B_2]{\psi_j^2}=\ket[A_2]{\alpha^2_j}\ket[B_2]{\beta^2_j}$. But
  $\{\ket[A_2B_2]{\psi_j^2}\}$ is also an unextendible product basis,
  which gives a contradiction as before.
\end{IEEEproof}

\Cref{lem:UPB_strong_unextendible} says that tensor products of
unextendible product bases are unextendible, which in particular implies
that all tensor powers of an unextendible product basis are unextendible,
i.e.\ unextendible product bases span strongly unextendible
subspaces. The following lemma giving the minimal dimension of a UPB was
proven in Ref.~\cite{Bhat04}:
\begin{lemma}\label{lem:UPB_strongly_unextendible}
  There exists a UPB of dimension $m$ in $\CC^{d_A}\otimes\CC^{d_B}$ for
  any $d_A + d_B - 1 \leq m \leq d_Ad_B$.
\end{lemma}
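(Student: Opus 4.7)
The plan is to handle the two ends of the range separately and then interpolate by a simple extension argument. The upper endpoint $m = d_Ad_B$ is immediate: any product basis $\{\ket[A]{i}\ket[B]{j}\}$ of $\CC^{d_A}\otimes\CC^{d_B}$ spans the full space, so its orthogonal complement is $\{0\}$ and vacuously contains no product state. The remaining work is to produce UPBs of every dimension $m$ with $d_A+d_B-1 \leq m < d_Ad_B$.

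The first step is to exhibit a minimal UPB of dimension exactly $d_A+d_B-1$. This is the technical heart of the lemma and is the content attributed to Ref.~\cite{Bhat04}; concretely, one picks an explicit family of product vectors $\{\ket[A]{\alpha_i}\ket[B]{\beta_i}\}$ (of the Tiles, Shifts, or GenTiles type) whose span has dimension $d_A+d_B-1$, and verifies that the system
\begin{equation}
\braket{\alpha_i}{x}\braket{\beta_i}{y} = 0 \quad \text{for all } i
\end{equation}
forces $\ket{x}\otimes\ket{y}=0$, so that the orthogonal complement contains no product state. The dimension $d_A+d_B-1$ is also best possible, since any subspace of $\CC^{d_A}\otimes\CC^{d_B}$ of dimension strictly greater than $(d_A-1)(d_B-1)$ must contain a product state.

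The second step is an elementary extension argument based on the monotonicity $V\subseteq V'\Longrightarrow (V')^\perp\subseteq V^\perp$. If $V^\perp$ contains no product state, then neither does any subspace of it; hence adjoining any further product vector to an existing UPB yields another UPB, of span-dimension one larger, as long as the new vector is linearly independent of the previous ones. Since the elementary products $\ket[A]{i}\ket[B]{j}$ span the whole of $\CC^{d_A}\otimes\CC^{d_B}$, whenever the current span has dimension less than $d_Ad_B$ at least one elementary product vector lies outside it and can be adjoined. Iterating this step, starting from a minimal UPB of dimension $d_A+d_B-1$, produces a UPB of every dimension $m$ in the required range.

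The only substantive obstacle is the minimal construction, which is what Bhat04 supplies. Once that ingredient is granted, the monotonicity-plus-extension step is purely linear-algebraic bookkeeping and requires no further machinery.
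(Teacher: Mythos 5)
The paper does not prove this lemma itself; it simply attributes it wholesale to Ref.~\cite{Bhat04} with the remark that the statement "was proven in" that reference. Your proof is therefore more detailed than what the paper supplies: you delegate only the hard endpoint $m=d_A+d_B-1$ to Bhat04, handle $m=d_Ad_B$ vacuously, and fill in the remaining dimensions by the monotonicity observation $V\subseteq V'\Rightarrow (V')^\perp\subseteq V^\perp$ together with the fact that a proper subspace of $\CC^{d_A}\otimes\CC^{d_B}$ cannot contain all elementary product vectors. Each of these steps is correct, and the interpolation argument is sound linear algebra. Your aside that $d_A+d_B-1$ is optimal (because any subspace of dimension $>(d_A-1)(d_B-1)$ meets the Segre variety) is not needed for the existence claim but is accurate. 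The one point worth being slightly careful about is that the paper's \emph{definition} says the dimension of a UPB is the number of product states, so strictly speaking the interpolation should add vectors that are linearly independent of the existing span — which you do say — so that "number of states'' and "span dimension'' coincide throughout. In short: correct, and arguably a more self-contained argument than the paper's bare citation, since Bhat's result as usually stated concerns the minimal case (equivalently, the maximal dimension of a completely entangled subspace), and your extension step is exactly what promotes that to the full interval.
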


We are now in a position to prove the existence of strongly unextendible
subspaces in $F_d$ (i.e.\ strongly unextendible subspaces obeying the
symmetry constraints of
\cref{eq:S_conjugate_symmetry2,eq:Sperp_conjugate_symmetry2} from
\cref{thm:S_strongly_unextendible_conditions}), for sufficiently large
dimension. (It turns out that 16 is ``sufficiently large'' enough.)

\begin{lemma}\label{lem:strongly_unextendible_existence}
  For $U = \id$, $V = X$, there exist strongly unextendible subspaces
  $S\in F_d(\CC, d_A)$ of dimension $d$ for any $4(2d_A-1) \leq d \leq
  d_A^2$.
\end{lemma}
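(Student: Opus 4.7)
The plan is to construct the required subspace in two stages: first symmetrise the span of an unextendible product basis to land in $F$, then enlarge it one complex dimension at a time to hit the exact target $d$ while preserving both the $F$-structure and strong unextendibility.

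In the first stage, we invoke \cref{lem:UPB_strongly_unextendible} to fix a UPB of dimension $m = 2d_A - 1$ in $\CC^{d_A}\otimes\CC^{d_A}$ and let $S_0$ denote its span. Iterating \cref{lem:UPB_strong_unextendible} on the bipartition $(A\,|\,A')$ shows that the $k$-fold tensor products of this UPB form a UPB of $S_0^{\otimes k}$ for every $k$, so $(S_0^{\otimes k})^\perp$ contains no product state, i.e.\ $S_0$ is strongly unextendible. We then apply the symmetrisation map $\mathcal{F}$ from \cref{lem:symmetrisation_blow-up} to obtain $T_0 := \mathcal{F}(S_0)\in F_{d_0}(\CC,d_A)$ with $d_0 \leq 4m = 4(2d_A-1)$. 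Because $T_0 \supseteq S_0$ and $S \subseteq S'$ implies $((S')^{\otimes k})^\perp \subseteq (S^{\otimes k})^\perp$, strong unextendibility passes from $S_0$ to $T_0$.

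In the second stage we enlarge $T_0$ one complex dimension at a time until its dimension reaches $d$. With $U=\id$ and $V=X$, membership in $F$ is equivalent to $\mat(\cdot)$ being closed under $M\mapsto M^\dagger$ and $M\mapsto XMX$. The Hermitian $d_A\times d_A$ matrices split as the direct sum of $X$-symmetric ($XHX=H$) and $X$-antisymmetric ($XHX=-H$) parts, with combined real dimension $d_A^2$. For any Hermitian $H$ satisfying $XHX=\pm H$ and not already in $\mat(T_0)$, the complex line $\CC\cdot H$ is closed under both $\dagger$ and $M\mapsto XMX$, so adjoining it keeps the enlarged space in $F$ and raises its complex dimension by exactly one. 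So long as the current complex dimension is strictly below $d_A^2$, the Hermitian part of the current matrix space fails to contain some Hermitian matrix $H$; writing $H = \tfrac{1}{2}(H+XHX) + \tfrac{1}{2}(H-XHX)$ and using that $\mat(T_0)$ is closed under $M\mapsto XMX$, at least one of the two summands (both Hermitian, one $X$-symmetric and one $X$-antisymmetric) provides an admissible choice. Iterating reaches any $d\leq d_A^2$, and strong unextendibility is inherited at every step because supersets of a strongly unextendible subspace remain strongly unextendible.

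The principal subtlety is the exact-dimension requirement: \cref{lem:symmetrisation_blow-up} only yields the upper bound $d_0 \leq 4(2d_A-1)$, so $T_0$ alone is not guaranteed to have dimension $d$. The enlargement step sidesteps this by gaining exactly one complex dimension per step via a $\pm X$-symmetric Hermitian adjunct, letting us tune the final dimension freely within the allowed window $[4(2d_A-1),\,d_A^2]$.
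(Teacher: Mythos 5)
Your proposal follows the same skeleton as the paper's proof: start from a minimal-dimension UPB (dimension $2d_A-1$), observe it spans a strongly unextendible subspace via \cref{lem:UPB_strong_unextendible}, symmetrise with $\mathcal{F}$ to land in $F_{d_0}(\CC,d_A)$ with $d_0 \leq 4(2d_A-1)$, and note that strong unextendibility passes to supersets. Where you genuinely add value is the second stage: the paper simply asserts ``the lemma follows from the fact that any extension $S'\supseteq S$ is strongly unextendible if $S$ is,'' leaving implicit that extensions of arbitrary dimension up to $d_A^2$ \emph{exist within} $F_d(\CC,d_A)$. Your argument fills that gap explicitly and correctly: identifying membership in $F$ with closure of $\mat(\cdot)$ under $M\mapsto M^\dagger$ and $M\mapsto XMX$; observing that a $\dagger$-closed complex subspace of complex dimension $k<d_A^2$ has Hermitian part of real dimension only $k$, so some Hermitian $H$ lies outside it; splitting $H=\tfrac12(H+XHX)+\tfrac12(H-XHX)$ and picking whichever $X$-(anti)symmetric summand is new; and noting that adjoining the complex line $\CC H$ preserves both closure conditions while raising the complex dimension by exactly one. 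As a bonus, your extension argument is intrinsic to the $F$-structure and does not rely on \cref{lem:Fd-structure}, so it works without the ``$d_A$ even'' hypothesis — consistent with the lemma's statement, which imposes no parity constraint.
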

\begin{IEEEproof}
  Let $S$ be a subspace spanned by a UPB with the minimal dimension $m =
  2d_A - 1$. \Cref{lem:UPB_strongly_unextendible} tells us that $S$ is
  strongly unextendible. By \cref{lem:symmetrisation_blow-up}, its
  symmetrisation $\mathcal{F}(S)$ has dimension at most $4m =
  4(2d_A-1)$. Also, since symmetrising can never shrink the subspace, we
  have $\mathcal{F}(S)^\perp\subseteq S^\perp$ so $\mathcal{F}(S)$ is
  also strongly unextendible.

  Thus $\mathcal{F}(S)$ is a strongly unextendible subspace of dimension
  at most $4(2d_A-1)$. The lemma follows from the fact that any extension
  $S'\supseteq S$ is strongly unextendible if $S$ is.
\end{IEEEproof}

Combining
\cref{thm:Ud_full_or_zero_measure,lem:strongly_unextendible_existence},
we have shown that:
\begin{corollary}\label{cor:Ud_full_measure}
  For $d \geq 4(2d_A-1)$ and $U = \id$, $V = X$, the set of strongly
  unextendible subspaces $U_d(\HS_A,\HS_{A'})\cap F_d(\CC, d_A)$ is full
  measure in $F_d(\CC, d_A)$.
\end{corollary}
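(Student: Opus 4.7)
The plan is to combine the two key results established just above: \cref{thm:Ud_full_or_zero_measure} gives a zero-one dichotomy (either $U_d \cap F_d$ is empty or full measure), and \cref{lem:strongly_unextendible_existence} supplies an explicit inhabitant of $U_d \cap F_d$ whenever $d \geq 4(2d_A-1)$. Once these are lined up, the corollary reduces to a single application of the dichotomy.

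First, I would fix $U = \id$ and $V = X$ and invoke \cref{lem:strongly_unextendible_existence}, which for any $4(2d_A-1) \leq d \leq d_A^2$ produces a strongly unextendible subspace $S \in F_d(\CC, d_A)$ of dimension exactly $d$ (constructed by symmetrising a UPB of minimal dimension $2d_A - 1$ via the map $\mathcal{F}$ from \cref{lem:symmetrisation_blow-up}). Since strong unextendibility means $S \in U_d(\HS_A, \HS_{A'})$, this exhibits an element of $U_d(\HS_A, \HS_{A'}) \cap F_d(\CC, d_A)$, so in particular this set is non-empty.

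Next, I would transfer this non-emptiness to the real Grassmannian picture used in \cref{thm:Ud_full_or_zero_measure}, via the isomorphism between $F_d(\CC, d_A)$ and $F_d(\RR, d_A)$ recorded in \cref{eq:Fd-real}: the image of our $S$ under this isomorphism lies in $U_d(\HS_A, \HS_{A'}) \cap F_d(\RR, d_A)$, so this set is also non-empty. Applying \cref{thm:Ud_full_or_zero_measure} then forces $U_d(\HS_A, \HS_{A'}) \cap F_d(\RR, d_A)$ to be full measure in $F_d(\RR, d_A)$, and pulling back through the isomorphism gives the corresponding full-measure statement for $F_d(\CC, d_A)$.

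I do not expect any serious obstacle here, since the real ingredients (the Zariski-closedness results of \cref{lem:Edk-closed,lem:Fd_Zariski_closed}, the dichotomy \cref{thm:Ud_full_or_zero_measure}, and the existence \cref{lem:strongly_unextendible_existence}) have already done the substantive work. The only minor subtlety is to check that the measure-theoretic statement really does transfer between the complex and real Grassmannian pictures; this is immediate because the isomorphism is a real-algebraic diffeomorphism onto its image, and in particular sends Haar-null sets to Haar-null sets and preserves Zariski-closed complements. Thus the corollary follows.
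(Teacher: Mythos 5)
Your proposal is correct and matches the paper's approach exactly: the paper proves the corollary simply by "combining" \cref{thm:Ud_full_or_zero_measure} and \cref{lem:strongly_unextendible_existence}, which is the dichotomy-plus-witness argument you describe. Your extra care in transferring between the $F_d(\CC,d_A)$ and $F_d(\RR,d_A)$ pictures is a point the paper leaves implicit (it simply notes they are isomorphic and uses whichever is convenient), and your justification of that transfer is sound.
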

This leads to the main theorem of this section.
\begin{theorem}\label{thm:conjugate_symmetric_strongly_unextendible}
  For $d_A \geq 16$, $U = \id$, $V = X$, and for a subspace
  $S\in\CC^{d_A}\otimes\CC^{d_A}$ of dimension $4(2d_A-1) \leq d \leq
  d_A^2 - 4(2d_A-1)$ chosen uniformly at random subject to the symmetry
  constraints $\flip(S)=S$ and $\flip(U\otimes V\cdot S) = U\otimes
  V\cdot S$, both $S$ and $S^\perp$ will almost-surely be strongly
  unextendible.
\end{theorem}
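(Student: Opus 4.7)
The plan is to apply Corollary~\ref{cor:Ud_full_measure} twice---once directly to $S$, and once to $S^\perp$ after observing that orthogonal complementation is a variety isomorphism $F_d(\CC,d_A)\to F_{d_A^2-d}(\CC,d_A)$---and then intersect the two resulting full-measure subsets of $F_d(\CC,d_A)$. Note first that the hypothesis $d_A\geq 16$ is exactly what is needed to ensure that the range $[4(2d_A-1),\,d_A^2-4(2d_A-1)]$ is non-empty, so that the claim is not vacuous.

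The first half is essentially free. Since $d\geq 4(2d_A-1)$ by hypothesis, Corollary~\ref{cor:Ud_full_measure} gives that $U_d(\HS_A,\HS_{A'})\cap F_d(\CC,d_A)$ is full measure in $F_d(\CC,d_A)$, so a uniformly random $S$ subject to the two symmetry constraints is almost-surely strongly unextendible.

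For the second half, I would first verify that if $S$ satisfies $\flip(S)=S$ and $\flip(U\otimes V\cdot S)=U\otimes V\cdot S$, then $S^\perp$ satisfies the same two constraints. This is immediate from the antiunitarity of $\flip$: since $\langle \flip(v),\flip(w)\rangle=\overline{\langle v,w\rangle}$, the map $\flip$ preserves orthogonal complements, and by conjugating with the unitary $U\otimes V$ one obtains the analogous statement for $\flip_{U\otimes V}$. Thus orthogonal complementation restricts to a bijection $\perp\colon F_d(\CC,d_A)\to F_{d_A^2-d}(\CC,d_A)$, which is an isomorphism of algebraic varieties (by the same P\"ucker-embedding argument used in the proof of Lemma~\ref{lem:Fd_Zariski_closed}, or simply by noting that $\perp$ is the standard duality on Grassmannians restricted to a Zariski-closed invariant subvariety). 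Because $d\leq d_A^2-4(2d_A-1)$, the target dimension satisfies $d_A^2-d\geq 4(2d_A-1)$, so a second application of Corollary~\ref{cor:Ud_full_measure} shows that $U_{d_A^2-d}\cap F_{d_A^2-d}(\CC,d_A)$ is full measure in $F_{d_A^2-d}(\CC,d_A)$. Pulling this statement back along the isomorphism $\perp$, the set $\{S\in F_d(\CC,d_A) : S^\perp \text{ is strongly unextendible}\}$ is the complement of a proper Zariski-closed subset of $F_d(\CC,d_A)$ (it is nonempty, since Lemma~\ref{lem:strongly_unextendible_existence} supplies a strongly unextendible $T\in F_{d_A^2-d}(\CC,d_A)$, whose complement $S=T^\perp\in F_d(\CC,d_A)$ witnesses the set), and hence is itself full measure by the Zariski dichotomy of Theorem~\ref{thm:Ud_full_or_zero_measure}.

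The intersection of two full-measure subsets of $F_d(\CC,d_A)$ is full measure, which gives the claim. The only substantive step is the compatibility check for orthogonal complementation---that it intertwines both flip constraints and is an algebraic isomorphism---and this is a short antiunitarity/Grassmannian argument; once it is in place, all of the real work has already been done in Corollary~\ref{cor:Ud_full_measure}, Lemma~\ref{lem:strongly_unextendible_existence}, and Theorem~\ref{thm:Ud_full_or_zero_measure}. I do not anticipate any serious obstacle beyond being careful that the ``uniform measure on $F_d(\CC,d_A)$'' is understood in the sense of the Zariski dichotomy, which is the same sense used in the preceding corollary.
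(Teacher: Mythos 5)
Your proof is correct and follows essentially the same route as the paper's: apply \cref{cor:Ud_full_measure} to $S$ (using $d\geq 4(2d_A-1)$), then to $S^\perp$ (using $d_A^2-d\geq 4(2d_A-1)$), and intersect the two almost-sure events, with $d_A\geq 16$ being exactly the condition for the admissible range of $d$ to be non-empty. The only difference is that you spell out the step the paper leaves implicit---that orthogonal complementation respects both flip constraints (via antiunitarity of $\flip$ and unitarity of $U\otimes V$) and defines a measure-compatible isomorphism $F_d(\CC,d_A)\to F_{d_A^2-d}(\CC,d_A)$---which is a reasonable bit of extra care, not a different argument.
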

\begin{IEEEproof}
  \Cref{cor:Ud_full_measure} implies that $S$ chosen in this way will
  almost-surely be strongly unextendible. But $S^\perp$ is then a random
  subspace subject to the same symmetry constraints, with dimension
  $4(2d_A-1) \leq d^\perp = d_A^2 - d \leq d_A^2 - 4(2d_A-1)$. Thus
  \cref{cor:Ud_full_measure} implies that $S^\perp$ will be almost-surely
  strongly unextendible. For there to exist a suitable $d$, we require
  $4(2d_A-1) \leq d_A^2 - 4(2d_A-1)$, or $d_A \geq 16$.
\end{IEEEproof}

\subsection{Positive-semidefinite conjugate-symmetric subspaces}
\Cref{thm:conjugate_symmetric_strongly_unextendible} shows that a random
subspace satisfying the symmetry constraints of
\cref{eq:S_conjugate_symmetry2,eq:Sperp_conjugate_symmetry2} from
\cref{thm:S_strongly_unextendible_conditions} will in fact also
almost-surely satisfy the strong unextendibility requirements of
\cref{eq:S_strongly_unextendible,eq:Sperp_strongly_unextendible}. It
remains to show that the positive-semidefiniteness requirements of
\cref{eq:S_positivity2,eq:Sperp_positivity2} can also be satisfied
simultaneously.

\begin{theorem}\label{thm:conjugate_symmetric_positive-semidefinite}
  If $d_A$ is even, $\lfloor d/2 \rfloor \leq d_A^2/2 - 1$ and $U=\id$,
  $V=X$, then the set
  \begin{equation}
    \begin{split}
      P_d(d_A)
      =\{&S\in F_d(\CC, d_A)\,|\\
      & S\text{ and } U\otimes V\cdot S^\perp
             \text{ positive semidefinite}\}
    \end{split}\raisetag{2.5em}
  \end{equation}
  has non-zero measure in $F_d(\CC, d_A)$.
\end{theorem}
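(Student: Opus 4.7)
The plan is to show that $P_d(d_A)$ is both open and non-empty in $F_d(\CC,d_A)$; being an open non-empty subset of a real-algebraic manifold, it will then automatically have non-zero Haar measure. The first step is to translate everything to the matrix picture. A subspace $S\in F_d(\CC,d_A)$ corresponds via $\mat(\cdot)$ to a complex subspace $\mat(S)\subseteq \CC^{d_A\times d_A}$ that is closed under the two antilinear maps $M\mapsto M^\dagger$ (coming from $\flip(S)=S$) and $M\mapsto XM^\dagger X$ (coming from $\flip(\id\otimes X\cdot S)=\id\otimes X\cdot S$, using that $(\id\otimes X)\ket{\psi}\leftrightarrow MX$ since $X=X^T$). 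Their composition shows $\mat(S)$ is invariant under the linear involution $\tau:M\mapsto XMX$. A standard complexification argument then gives $\mat(S)=R\oplus iR$, where $R$ is the real subspace of Hermitian matrices inside $\mat(S)$, of real dimension $d$, and $R$ splits as $R=R^+\oplus R^-$ with $R^\pm\subseteq\mathcal{M}^\pm$, the $\pm1$-eigenspaces of $\tau$ acting on the space $\mathcal{M}$ of $d_A\times d_A$ Hermitian matrices; for even $d_A$, both $\mathcal{M}^\pm$ have real dimension $d_A^2/2$.

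The key reduction is that $S$ is positive-semidefinite whenever $R$ contains a positive-definite matrix $P$, because for any Hermitian basis $\{H_i\}$ of $R$ the set $\{P,\lambda P+H_i\}$ is a PD (a fortiori PSD) basis for sufficiently large $\lambda$. A short calculation identifies the real Hermitian part of $\mat(\id\otimes X\cdot S^\perp)=\mat(S^\perp)\cdot X$ with $\{AX+iBX : A\in(R^+)^\perp\cap\mathcal{M}^+,\; B\in(R^-)^\perp\cap\mathcal{M}^-\}$, which contains $I=X\cdot X$ iff $X\in R^\perp$. Hence it suffices to exhibit a $\tau$-invariant real subspace $R$ of Hermitian matrices of dimension $d$ with $I\in R$ and $X\perp R$. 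Here the parity hypothesis on $d_A$ is essential, because $\tr(X)=0$ exactly when $d_A$ is even, which makes $I$ and $X$ linearly independent HS-orthogonal elements of $\mathcal{M}^+$. The hypothesis $\lfloor d/2\rfloor\leq d_A^2/2-1$ is equivalent to $d\leq d_A^2-1$, and under this assumption one picks a splitting $d=d^++d^-$ with $1\leq d^+\leq d_A^2/2-1$ and $0\leq d^-\leq d_A^2/2$, then chooses $R^+$ to be any $d^+$-dimensional subspace of $\mathcal{M}^+\cap\{X\}^\perp$ (which has real dimension $d_A^2/2-1$) containing $I$, and $R^-$ to be any $d^-$-dimensional subspace of $\mathcal{M}^-$, producing an explicit $S\in P_d(d_A)$.

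Finally, the property that $R$ contains a PD matrix is open on the Grassmannian of $\tau$-invariant real $d$-dimensional subspaces of $\mathcal{M}$: in a neighbourhood of any $R_0\ni P_0$ with $P_0$ PD, parametrising nearby subspaces as graphs and tracking $P_0$ yields a nearby matrix in $R$ that remains PD by openness of positive-definiteness in $\mathcal{M}$. The same openness argument applies to the condition that the real Hermitian part of $\mat(S^\perp)\cdot X$ contain a PD matrix. Therefore $P_d(d_A)$ is the intersection of two open subsets of $F_d(\CC,d_A)$, hence itself open, and being non-empty has non-zero measure. The main obstacle is really the matrix-space bookkeeping---in particular, correctly identifying the real Hermitian slice of $\mat(S^\perp)\cdot X$ and verifying the reduction to the existence of a PD element; once that is in hand, the numerical fact $\tr(X)=0$ for even $d_A$ makes the constraints $I\in R$ and $X\in R^\perp$ automatically compatible, and the construction is immediate.
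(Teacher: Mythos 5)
Your argument reproduces, in the matrix picture, essentially the same construction as the paper: your reduction of the symmetry constraints to a pair $(R^+,R^-)$ of real subspaces of the $\pm 1$ eigenspaces of $M\mapsto XMX$ is exactly \cref{lem:Fd-structure}, your witness ($I\in R$ and $X\perp R$, compatible because $\tr X=0$ for even $d_A$) is the paper's subspace containing $\ket{\omega}$ and orthogonal to $(\id\otimes X)\ket{\omega}$, and the perturbation step is the same openness-of-positive-definiteness argument. The genuine gap is in your final step, ``open and non-empty in a real-algebraic manifold, hence non-zero Haar measure.'' $F_d(\CC,d_A)$ is \emph{not} a manifold of pure dimension: as your own decomposition shows, it is a disjoint union, over the splitting $k=\dim R^+$, of pieces isomorphic to $\gr_k(\RR^{d_A^2/2})\times\gr_{d-k}(\RR^{d_A^2/2})$; these pieces are separate components of $F_d$ (both $\dim R^+$ and $\dim R^-$ are upper semicontinuous and sum to $d$, hence are locally constant), and their dimensions $k(d_A^2/2-k)+(d-k)(d_A^2/2-d+k)$ differ, being maximal only for $k=\lfloor d/2\rfloor$ or $\lceil d/2\rceil$. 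The measure the paper works with --- the one under which \cref{cor:Ud_full_measure} is a full-measure statement, and with which $P_d(d_A)$ must be intersected in \cref{sec:main-thm} --- assigns zero mass to the unbalanced components, precisely because they have strictly smaller dimension. Your construction leaves the splitting $d=d^++d^-$ arbitrary, so your witness, and the open neighbourhood around it, may sit entirely inside a measure-zero component; as written the proof does not establish non-zero measure in $F_d(\CC,d_A)$, nor would it combine correctly with the almost-sure strong unextendibility.

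The missing ingredient is exactly the first half of the paper's proof: the dimension count over the components of \cref{lem:Fd-structure} and the decision to build the witness inside the dominant component $\hat F_d(\CC,d_A)$. The repair fits your framework with no new ideas: take $d^+=\lfloor d/2\rfloor$ and $d^-=\lceil d/2\rceil$; the hypothesis $\lfloor d/2\rfloor\le d_A^2/2-1$ gives $1\le d^+\le d_A^2/2-1$ and $d^-\le d_A^2/2$, so your choice of $R^+\ni I$ inside $\mathcal{M}^+\cap\{X\}^\perp$ and arbitrary $R^-\subseteq\mathcal{M}^-$ goes through, and your openness argument, applied inside this maximal-dimension component, then yields the theorem with respect to the measure actually used in the rest of the paper. (Your remaining reductions --- $\mat(S)=R\oplus iR$, the identification of the Hermitian part of $\mat(S^\perp)X$, the criterion $I\in\mat((\id\otimes X)S^\perp)$ iff $X\perp R$, and the sufficiency of a single positive-definite element for a PSD basis --- are all correct.)
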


In order to prove
\cref{thm:conjugate_symmetric_positive-semidefinite}, we would like to
demonstrate a single $S\in F_d(\CC,d_A)$ that is strictly positive
definite, which would then imply that there is ball of nonzero measure
around it that is positive semidefinite.  This sort of argument can be
used in manifolds such as $\gr_d(\HS)$, but it is not clear that it
carries over to a more complicated set, such as $F_d(\CC,d_A)$.  Thus,
we will first need to determine the structure of $F_d(\CC,d_A)$ and
demonstrate that (when $d_A$ is even) it decomposes into a direct sum
of spaces which are simpler to analyze.  Later we will see that this
lets us apply the intuition from this paragraph to prove
\cref{thm:conjugate_symmetric_positive-semidefinite}.

\begin{lemma}\label{lem:Fd-structure}
  If $d_A$ is even and $U=\id$, $V=X$, then
  \begin{equation}
    F_d(\RR,d_A) \cong
    \bigsqcup_{k=0}^d \gr_k(\RR^{d_A^2/2}) \times \gr_{d-k}(\RR^{d_A^2/2}).
  \end{equation}
  The $\sqcup$ denotes disjoint union, meaning that an element of
  $F_d(\RR,d_A)$ can be uniquely identified by specifying an integer $0
  \leq k \leq d$ and elements of $\gr_k(\RR^{d_A^2/2})$ and
  $\gr_{d-k}(\RR^{d_A^2/2})$.
\end{lemma}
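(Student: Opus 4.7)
The plan is to pass from the real subspace $S$ in $\RR^2 \otimes \RR^{d_A} \otimes \RR^{d_A}$ to its image under $\mat$, then exploit the two symmetry constraints to reduce $S$ to independent data in two Grassmannians over $\RR^{d_A^2/2}$. First I would use $iS = S$ to view $\mat(S)$ as a complex subspace of $\CC^{d_A\times d_A}$ of complex dimension $d$. Invoking the matrix-level identities derived in the proof of \cref{lem:symmetrisation_blow-up}, namely $\flip(M) = M^\dagger$ and $\flip\circ\flip_{U\otimes V}(M) = XMX$, the constraint $\flip(S) = S$ means $\mat(S)$ is closed under Hermitian conjugation, so $\mat(S) = H \oplus iH$ for a real vector space $H$ of Hermitian matrices with $\dim_\RR H = d$.

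Next I would impose the second symmetry. Combined with $\flip(S) = S$, the constraint $\flip(U\otimes V\cdot S) = U\otimes V\cdot S$ forces $\mat(S)$ to be invariant under the composition $M \mapsto XMX$. This map preserves Hermiticity and squares to the identity, so it decomposes $H$ into its $\pm 1$ eigenspaces
\begin{equation*}
  H = H_+ \oplus H_-, \qquad H_\pm = \{M \in H : XMX = \pm M\}.
\end{equation*}
Conversely, any pair $(H_+, H_-)$ obtained as subspaces of the corresponding eigenspaces on the whole Hermitian matrix algebra reconstructs a subspace $\mat^{-1}((H_+ \oplus H_-) \oplus i(H_+ \oplus H_-))$ lying in $F_d(\RR, d_A)$, so $S$ is uniquely encoded by $(H_+, H_-)$ together with the integer $k = \dim_\RR H_+$.

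The key calculation is then to compute the real dimension of the ambient $\pm 1$ eigenspaces on all $d_A \times d_A$ Hermitian matrices. Since $d_A$ is even, $X$ has no fixed basis vector and both its $\pm 1$ eigenspaces in $\CC^{d_A}$ have dimension $d_A/2$. Diagonalising $X$ in an orthonormal basis and writing a Hermitian $M$ in the resulting block form
\begin{equation*}
  M = \begin{pmatrix} A & B \\ B^\dagger & C \end{pmatrix},
\end{equation*}
with $A,C$ Hermitian of size $(d_A/2)\times(d_A/2)$ and $B$ an arbitrary complex matrix of the same size, one checks that $XMX = M$ iff $B = 0$ and $XMX = -M$ iff $A = C = 0$. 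Counting real parameters gives dimension $2(d_A/2)^2 = d_A^2/2$ for each eigenspace, so $H_+ \in \gr_k(\RR^{d_A^2/2})$ and $H_- \in \gr_{d-k}(\RR^{d_A^2/2})$.

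Assembling these pieces, the map $S \mapsto (k, H_+, H_-)$ is the claimed bijection $F_d(\RR, d_A) \cong \bigoplus_{k=0}^d \gr_k(\RR^{d_A^2/2}) \times \gr_{d-k}(\RR^{d_A^2/2})$. The only non-routine input is the eigenspace dimension count above, and this is the one place where the hypothesis that $d_A$ is even is used in an essential way; if $d_A$ were odd, $X$ would have a fixed basis vector and the two eigenspaces would have unequal dimension, breaking the symmetric form of the decomposition.
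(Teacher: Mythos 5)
Your proof is correct and reaches the same decomposition as the paper, but via a different coordinate system. The paper works directly with rank-$2d$ projectors $\Pi$ on $\RR^2\otimes\RR^{d_A}\otimes\RR^{d_A}$, decomposing first along the $\pm 1$ eigenspaces of $\flip$ (using that $i$ and $\flip$ anticommute to reduce $\Pi$ to the datum $\Pi_+$), and then along the $\pm 1$ eigenspaces of $X\otimes X$; the dimensions $d_A^2/2$ are asserted without an explicit computation. You instead push everything through the $\mat$ isomorphism: $iS=S$ makes $\mat(S)$ a complex subspace, $\flip(S)=S$ gives $\mat(S)=H\oplus iH$ with $H$ Hermitian, and the combined symmetry $M\mapsto XMX$ splits $H=H_+\oplus H_-$. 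The upside of your route is that the final step becomes a concrete block-matrix calculation in the eigenbasis of $X$, which supplies the $d_A^2/2$ dimension count explicitly (and makes transparent exactly where evenness of $d_A$ is used — it is needed for $X$ to have eigenspaces of equal dimension $d_A/2$, and hence for $\ker(B)=0$ and $\ker(A,C)=0$ each to carve out a $d_A^2/2$-real-dimensional space of Hermitian matrices). The paper's projector language is slightly more uniform but hides this count; yours makes it visible. Both routes establish the same bijection $S \leftrightarrow (k,H_+,H_-)$.
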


\begin{IEEEproof}[Proof of \cref{lem:Fd-structure}]
  Elements of $F_d(\RR,d_A)$ are $2d$-dimensional real subspaces of
  $\RR^2 \otimes \RR^{d_A} \otimes \RR^{d_A}$. As such, they can be
  expressed as rank-$2d$ projectors. The constraints in \cref{eq:Fd-real}
  defining $F_d(\RR,d_A)$ can be expressed as symmetries of these
  projectors. In particular, $\Pi\in F_d(\RR,d_A)$ if and only if $\Pi$
  is a rank-$2d$ projector satisfying $i\,\Pi\,i^T = \Pi$,
  $\flip\,\Pi\,\flip^T=\Pi$ and $(X \otimes X)\Pi (X\otimes X)= \Pi$.

  Initially we will consider the $i$ and $\flip$ symmetries. Let
  $\flip_\pm$ denote the $\pm 1$ eigenspaces of $\flip$. Since $\Pi$
  commutes with $\flip$, it must be the sum of a projector onto a
  subspace of $\flip_+$ and a projector onto a subspace of $\flip_-$. In
  other words, $\Pi = \Pi_+ + \Pi_-$ where $\Pi_{\pm}\,\flip =
  \flip\,\Pi_{\pm} = \pm \Pi_{\pm}$.  Since $i$ and $\flip$ anticommute,
  $i$ must map $\flip_\pm$ to $\flip_{\mp}$. Thus $i\,\Pi_+ i^T$ is a
  projector onto $\flip_-$ and $i\,\Pi_- i^T$ is a projector onto
  $\flip_+$. Combined with the fact that $i\,\Pi\,i^T = \Pi$ we obtain
  that $i\,\Pi_\pm i^T = \Pi_\mp$. We can thus assume that $\Pi = \Pi_+ +
  i\,\Pi_+ i^T$ where $\Pi_+$ is a projector onto $\flip_+$. Since $\Pi$
  has rank $2d$, $\Pi_+$ must have rank $d$.

  Since $X \otimes X $ commutes with $\flip$ and $\Pi$, we have that
  $\Pi_+$ must also commute with $X \otimes X$. This means we can write
  $\Pi_+$ as $\Pi_{++} + \Pi_{+-}$, where $\Pi_{++}$ is a projector onto
  a subspace of the $+1$ eigenspace of $X\otimes X$ and $\Pi_{+-}$
  projects onto a subspace of the $-1$ eigenspace of $X \otimes X$.

  Working backwards we can see that if $\Pi_{++}$, $\Pi_{+-}$ are
  arbitrary projectors with the appropriate supports and with ranks
  summing to $d$, then $\Pi = (\Pi_{++} + \Pi_{+-}) + i(\Pi_{++} +
  \Pi_{+-})i^T$ projects onto a subspace in $F_d(\RR,d_A)$. If $\Pi_{++}$
  has rank $k$ then our choice of $\Pi$ is equivalent to choosing an
  element of $\gr_k(\RR^{d_A^2/2}) \times \gr_{d-k}(\RR^{d_A^2/2})$.
\end{IEEEproof}

\begin{IEEEproof}[Proof of
  \cref{thm:conjugate_symmetric_positive-semidefinite}]
  To understand what it means to have non-zero measure in $F_d(\CC,d_A)$,
  we use \cref{lem:Fd-structure} and the fact that
  $\dim\gr_k(\RR^{d_A^2/2}) = (d_A^2/2 - k)k$. Thus
  \begin{align*}
    &\dim\left(\gr_k(\RR^{d_A^2/2}) \times \gr_{d-k}(\RR^{d_A^2/2})\right)\\
    &\quad= \left(\frac{d_A^2}{2} - k\right)k \left(\frac{d_A^2}{2}
       - d+k\right)(d-k)\\
    &\quad= k(d-k) \left(\frac{d_A^2}{2}\left(\frac{d_A^2}{2}-d\right)
       - k (d-k)\right),
  \end{align*}
  which takes its maximum value at $k=d/2$ (for $d$ even) or $k=(d\pm
  1)/2$ (for $d$ odd). This means that all but a measure-zero subset of
  $F_d(\CC,d_A)$ is contained in these values of $k$. Indeed, if $k$ is
  even then the component of $F_d(\CC,d_A)$ corresponding to
  $\gr_{d/2}(\RR^{d_A^2/2}) \times \gr_{d/2}(\RR^{d_A^2/2})$ has measure
  one in $F_d(\CC,d_A)$. If $k$ is odd then the components corresponding
  to $\gr_{(d+1)/2}(\RR^{d_A^2/2}) \times \gr_{(d-1)/2}(\RR^{d_A^2/2})$
  and $\gr_{(d-1)/2}(\RR^{d_A^2/2}) \times \gr_{(d+1)/2}(\RR^{d_A^2/2})$
  each have measure $1/2$. For the rest of the proof we will take $k$ to
  be $d/2$ for $d$ even or \mbox{$(d-1)/2$ for $d$} odd. Let
  $\hat{F}_d(\CC,d_A)$ denote the part of $\FF(\CC,d_A)$ corresponding to
  $\gr_{d/2}(\RR^{d_A^2/2}) \times \gr_{d/2}(\RR^{d_A^2/2})$ if $d$ is
  even or $\gr_{(d+1)/2}(\RR^{d_A^2/2}) \times
  \gr_{(d-1)/2}(\RR^{d_A^2/2})$ if $d$ is odd.

  In either case, it suffices to show that $P_d(d_A) \cap \hat
  F_d(\CC,d_A)$ has positive measure in $\hat F_d(\CC,d_A)$. To do so,
  we first construct a positive-\emph{definite} subspace $S\in \hat
  F_d(\CC, d_A)$, meaning a subspace $S$ with a positive-definite
  basis. We would also like $(\id\otimes X)\cdot S^\perp$ to be
  positive definite. Our intuition is that since the set of
  positive-definite matrices is open, finding one matrix implies the
  existence of an open set (with positive measure) of
  positive-definite matrices around it. To rigorously extend this
  intuition to positive-definite subspaces, we need to define a
  continuous map $\eta: \hat F_d(\CC,d_A) \mapsto \cB(\CC^{d_A})$
  satisfying:
\begin{itemize}
  \item for any $S'$, $\eta(S')\in S'$; and
  \item $\eta(S)$ is a positive-definite operator on $\CC^{d_A}$.
\end{itemize}
  These properties will guarantee that every $S'\in \hat
  F_d(\CC,d_A)$ that is sufficiently close to $S$ will belong to
  $P_d(d_A)\cap \hat F_d(\CC,d_A)$, implying that this set has
  non-zero measure and proving the theorem.

  We construct $\eta$ by letting $M_k\in S$ be a positive-definite
  matrix, and extending it to an orthonormal basis for $S$ denoted
  $\{M_1,\ldots,M_k\}$, such that $S=M_1\wedge M_2 \wedge \cdots
  \wedge M_k$. Then we define $\eta := i_{M_{k-1}}i_{M_{k-2}}\cdots
  i_{M_1}$, where $i$ denotes the interior product. The definition of
  interior product guarantees the that $\eta(S')\in S'$ for any
  subspace $S'$, and that $\eta(M_k)=M_k$, which we have assumed is
  positive definite. It follows that $\eta$ maps some neighborhood of
  $S$ to positive-definite matrices, and that this neighborhood is
  therefore a set of positive-definite subspaces with nonzero measure.

  It remains only to construct the desired $S$. As we have observed in
  \cref{prop:necessary}, for $S$ to be positive definite, it is
  sufficient for $\mat(S)$ to contain a single positive-definite element.
  In particular, we will choose $S$ to contain $ \ket{\omega} =
  \sum_{i=1}^{d_A}\ket{i,i}$. We will also require that $S$ be orthogonal
  to $(\id \otimes X)\ket{\omega}$ so that $(\id \otimes X)S^\perp$ also
  contains $\ket\omega$ and is positive definite. Note that this only
  works if $d_A$ is even, otherwise $\ket{\omega}$ and $(\id\otimes
  X)\ket{\omega}$ are not orthogonal.

  Both $\ket\omega$ and $(\id \otimes X)\ket\omega$ belong to the $+1$
  eigenspace of $X \otimes X$. Thus to choose $S$ we need only choose an
  additional $k-1$ dimensions for $\Pi_{++}$ (from a space of dimension
  $d_A^2/2-2$) as well as an arbitrary rank-$(d-k)$ projector $\Pi_{+-}$
  whose support is contained within the $-1$ eigenspace of $X\otimes X$
  (with dimension $d_A^2/2$). This is possible as long as $k\leq d_A^2/2
  - 1$ and $d-k \leq d_A^2/2$. Substituting our choice of $k$, we find
  that it suffices to take $\lfloor d/2 \rfloor \leq d_A^2/2 - 1$.
\end{IEEEproof}

\subsection{Superactivation of the zero-error capacity}
\label{sec:main-thm}
\Cref{thm:conjugate_symmetric_strongly_unextendible} shows that, for
suitable dimensions, a subspace chosen at random subject to the symmetry
constraints of
\cref{eq:S_conjugate_symmetry2,eq:Sperp_conjugate_symmetry2} from
\cref{thm:S_strongly_unextendible_conditions} will, with probability~1,
satisfy the strong unextendibility conditions of
\cref{eq:S_strongly_unextendible,eq:Sperp_strongly_unextendible}. But
\cref{thm:conjugate_symmetric_positive-semidefinite} shows that there is
a non-zero probability that such a random subspace will satisfy the
positivity conditions of \cref{eq:S_positivity2,eq:Sperp_positivity2}.
Therefore, there must exist at least one subspace $S$ satisfying all the
conditions of \cref{thm:S_strongly_unextendible_conditions}. Finally, we
use \cref{prop:sufficient} to translate $S$ and $U\otimes V\cdot S^\perp$
into channels and complete the proof of superactivation of the zero-error
classical capacity of quantum channels, as stated in \cref{thm:main}
(\cref{sec:intro}), the main result of this paper.

``Suitable dimensions'' are any set of channel input and output
dimensions $d_A$ and $d_B$, together with a number of Kraus operators
$d_E$, that simultaneously satisfy all the dimension requirements of
\cref{thm:conjugate_symmetric_strongly_unextendible,thm:conjugate_symmetric_positive-semidefinite}.
Note that, from \cref{prop:sufficient}, $d_E$ is given by the dimension
of the subspace. In fact, the upper bound on the subspace dimension from
\cref{thm:conjugate_symmetric_positive-semidefinite} is always satisfied
if that of \cref{thm:conjugate_symmetric_strongly_unextendible} is. Also,
the requirement from \cref{thm:conjugate_symmetric_positive-semidefinite}
that $d_A$ be even merely implies that the input dimension to the channel
itself must be \emph{larger} than an even number, since we can always
embed a channel in a higher-dimensional input space. So the minimal
dimension requirements reduce to those stated in \cref{thm:main}.

\section{Conclusions}\label{sec:conclusions}
Smith and Yard's result~\cite{graeme+jon} showed that the capacity of
quantum channels to communicate quantum information behaves in the most
surprising way conceivable: two channels with zero capacity for
transmitting quantum information can nonetheless transmit quantum
information when used together (\keyword{superactivation}). On the other
hand, although it may well be non-additive~\cite{Hastings}, the usual
classical Shannon capacity of quantum channels cannot behave in this
extreme way.

However, in this work we have shown that the capacity of a quantum
channel for transmitting classical information \emph{perfectly}, the
zero-error classical capacity, exhibits the same surprising phenomenon as
the quantum capacity: two channels with zero capacity for perfect
transmission of classical information can nonetheless transmit classical
information perfectly when used together. This is, to our knowledge, the
first ever proven superactivation of a \emph{classical} capacity of a
standard quantum channel. (Note that although the zero-error capacity of
classical channels is non-additive, superactivation is impossible
classically.) It shows that this remarkable feature of quantum channels,
to allow communication when seemingly none should be possible, is not
restricted to quantum information but also occurs for classical
information.

How is this surprising behaviour possible? In the case of the quantum
capacity, superactivation is achieved \emph{without} the inputs to the
two channels needing to be entangled, and the intuition behind the
superactivation has more to do with local indistinguishability of
orthogonal quantum states~\cite{Oppenheim}. But entanglement \emph{is}
responsible for the superactivation of the zero-error capacity, just as
it is necessary if the standard classical Shannon capacity of quantum
channels is to be non-additive. So the fact that superactivation of the
zero-error classical capacity occurs for quantum but not for classical
channels can be attributed to the use of entangled inputs, which have no
classical analogue.

The results of \cref{sec:asymptotic} also resolve a number of other
questions. For one, they imply that the zero-error capacity of the
multi-sender/multi-receiver quantum channels of Duan and
Shi~\cite{Duan+Shi} can also be superactivated (extending their one-shot
result to the full asymptotic capacity). They also imply that even the
regularised version of the minimum output R\'enyi 0-entropy investigated
in Ref.~\cite{rank_additivity} is non-additive. In and of itself, this is
perhaps just a mathematical curiosity. But the same result for the
minimum output \emph{von Neumann} entropy (the R\'enyi 1-entropy) would
imply that the classical Shannon capacity of quantum channels really is
non-additive (i.e.\ that the capacity of two channels used together could
be greater than the sum of their individual capacities).

We close with an open question. Do there exist channels $\chan_1,
\chan_2$ with no zero-error classical capacity individually, but such
that $\chan_1 \otimes \chan_2$ has a positive zero-error \emph{quantum}
capacity?

% \enlargethispage{\baselineskip}
% \vspace{0.5em}
\textbf{Note Added:} Simultaneously with our results, Duan~\cite{Duan}
extended his previous work to prove that the one-shot zero-error capacity
can also be superactivated in the case of single-input, single-output
channels. He also proves that the zero-error capacity is strongly
non-additive in the following sense: a quantum channel that has no
zero-error classical capacity can boost the zero-error capacity of a
second channel, which however does have some zero-error capacity on its
own. Whilst non-additivity of the zero-error capacity occurs even for
classical channels, this stronger form of non-additivity is impossible
classically. Both these results are implied by our stronger result, which
proves full superactivation in the standard sense (i.e.\ \emph{both}
channels have zero capacity) for the \emph{asymptotic} capacity (i.e.\
even infinitely many copies of the individual channels have zero
capacity). However, interestingly Duan's techniques are different to
ours, and also prove a similar non-additivity of the \emph{quantum} zero
error capacity, which our paper does not address.

\section*{Acknowledgements}
We would like to thank Andreas Winter for very useful discussions about
this work, and for pointing out the implications of our results to
non-additivity of regularised R\'enyi entropies. We also thank Runyao
Duan for kindly sending us a version of his latest results prior to
publication.

\enlargethispage{-5in}
\bibliographystyle{IEEEtran}
\bibliography{IEEEabrv,zero-error}

% Generated by IEEEtran.bst, version: 1.13 (2008/09/30)
\begin{thebibliography}{10}
\providecommand{\url}[1]{#1}
\csname url@samestyle\endcsname
\providecommand{\newblock}{\relax}
\providecommand{\bibinfo}[2]{#2}
\providecommand{\BIBentrySTDinterwordspacing}{\spaceskip=0pt\relax}
\providecommand{\BIBentryALTinterwordstretchfactor}{4}
\providecommand{\BIBentryALTinterwordspacing}{\spaceskip=\fontdimen2\font plus
\BIBentryALTinterwordstretchfactor\fontdimen3\font minus
  \fontdimen4\font\relax}
\providecommand{\BIBforeignlanguage}[2]{{%
\expandafter\ifx\csname l@#1\endcsname\relax
\typeout{** WARNING: IEEEtran.bst: No hyphenation pattern has been}%
\typeout{** loaded for the language `#1'. Using the pattern for}%
\typeout{** the default language instead.}%
\else
\language=\csname l@#1\endcsname
\fi
#2}}
\providecommand{\BIBdecl}{\relax}
\BIBdecl

\bibitem{Holevo}
A.~S. Holevo, ``The capacity of the quantum channel with general signal
  states,'' \emph{IEEE Trans. Inform. Theory}, vol.~44, pp. 269--273, 1998,
  (arXiv:quant\nobreakdash-ph/9611023).

\bibitem{Schumacher+Westmoreland}
B.~Schumacher and M.~D. Westmoreland, ``Sending classical information via noisy
  quantum channels,'' \emph{Phys.\ Rev.\ A}, vol.~56, p. 131, 1997.

\bibitem{Devetak}
I.~Devetak, ``The private classical capacity and quantum capacity of a quantum
  channel,'' \emph{IEEE Trans. Inform. Theory}, vol.~51, p.~44, 2005,
  (arXiv:quant\nobreakdash-ph/0304127).

\bibitem{Shor}
P.~W. Shor, ``The quantum channel capacity and coherent information,'' MSRI
  seminar, November 2002.

\bibitem{Lloyd}
S.~Lloyd, ``Capacity of the noisy quantum channel,'' \emph{Phys.\ Rev.\ A},
  vol.~55, p. 1613, 1996.

\bibitem{DSS97}
D.~P. DiVincenzo, P.~W. Shor, and J.~A. Smolin, ``Quantum channel capacity of
  very noisy channels,'' \emph{Phys.\ Rev.\ A}, vol.~57, p. 830, 1998,
  (arXiv:quant\nobreakdash-ph/9706061).

\bibitem{Hastings}
M.~B. Hastings, ``A counterexample to additivity of minimum output entropy,''
  \emph{Nature Physics}, vol.~5, 2009, (arXiv:0809.3972
  [quant\nobreakdash-ph]).

\bibitem{Andreas+Patrick}
A.~J. Winter and P.~Hayden, ``Counterexamples to the maximal p-norm
  multiplicativity conjecture for all $p > 1$,'' \emph{Commun.\ Math.\ Phys.},
  vol. 284, no.~1, p. 263, 2008, (arXiv:0807.4753 [quant\nobreakdash-ph]).

\bibitem{rank_additivity}
T.~Cubitt, A.~W. Harrow, D.~Leung, A.~Montanaro, and A.~Winter,
  ``Counterexamples to additivity of minimum output p-r\'{e}nyi entropy for p
  close to 0,'' \emph{Commun.\ Math.\ Phys.}, vol. 284, p. 281, 2008,
  (arXiv:0712.3628 [quant\nobreakdash-ph]).

\bibitem{graeme+jon}
G.~Smith and J.~Yard, ``Quantum communication with zero-capacity channels,''
  \emph{Science}, vol. 321, p. 1812, 2008, (arXiv:0807.4935
  [quant\nobreakdash-ph]).

\bibitem{SST00}
P.~W. Shor, J.~A. Smolin, and A.~V. Thapliyal, ``Superactivation of bound
  entanglement,'' arXiv:quant\nobreakdash-ph/0005117, 2000.

\bibitem{private1}
K.~Li, A.~Winter, X.~Zou, and G.~Guo, ``Nonadditivity of the private classical
  capacity of a quantum channel,'' arXiv:0903.4308 [quant\nobreakdash-ph],
  2009.

\bibitem{private2}
G.~Smith and J.~Smolin, ``Extensive nonadditivity of privacy,'' arXiv:0904.4050
  [quant\nobreakdash-ph], 2009.

\bibitem{Shannon_zero-error}
C.~E. Shannon, ``The zero-error capacity of a noisy channel,'' \emph{IRE Trans.
  Inform. Theory}, vol. IT-2, p.~8, 1956.

\bibitem{zero-error_review}
J.~K{\"o}rner and A.~Orlitsky, ``Zero-error information theory,'' \emph{IEEE
  Trans. Inform. Theory}, vol.~44, no.~6, p. 2207, 1998.

\bibitem{MA05}
R.~A.~C. Medeiros and F.~M. de~Assis, ``Quantum zero-error capacity,''
  \emph{Int. J. Quant. Inf.}, vol.~3, p. 135, 2005.

\bibitem{BS07}
S.~Beigi and P.~W. Shor, ``On the complexity of computing zero-error and holevo
  capacity of quantum channels,'' arXiv:0709.2090 [quant\nobreakdash-ph], 2007.

\bibitem{Duan+Shi}
R.~Duan and Y.~Shi, ``Entanglement between two uses of a noisy multipartite
  quantum channel enables perfect transmission of classical information,''
  \emph{Phys.\ Rev.\ Lett.}, 2008, (arXiv:0712.3700 [quant\nobreakdash-ph]).

\bibitem{Jianxin_strong}
R.~Duan, J.~Chen, and Y.~Xin, ``Unambiguous and zero-error classical capacity
  of noisy quantum channels,'' (Manuscript in preparation).

\bibitem{Hartshorne77}
R.~Hartshorne, \emph{Algebraic Geometry}.\hskip 1em plus 0.5em minus
  0.4em\relax New York: Springer-Verlag, 1977.

\bibitem{Shafarevich94}
I.~R. Shafarevich, \emph{Basic algebraic geometry 1 (2nd, revised and expanded
  ed.)}.\hskip 1em plus 0.5em minus 0.4em\relax New York: Springer-Verlag,
  1994.

\bibitem{Schmidt_rank_subspace}
T.~Cubitt, A.~Montanaro, and A.~Winter, ``On the dimension of subspaces with
  bounded schmidt rank,'' \emph{J.\ Math. Phys.}, vol.~49, p. 022107, 2008,
  (arXiv:0706.0705 [quant\nobreakdash-ph]).

\bibitem{Harris}
J.~Harris, \emph{Algebraic Geometry}.\hskip 1em plus 0.5em minus 0.4em\relax
  Springer-Verlag, 1992.

\bibitem{upb03}
D.~DiVincenzo, T.~Mor, P.~W. Shor, J.~Smolin, and B.~Terhal, ``Unextendible
  product bases, uncompletable product bases and bound entanglement,''
  \emph{Commun.\ Math.\ Phys.}, vol. 238, no.~3, p. 379, 2003.

\bibitem{Bhat04}
R.~Bhat, ``A completely entangled subspace of maximal dimension,'' \emph{Int.
  J. Quant. Inf.}, vol.~4, no.~2, p. 325, 2006.

\bibitem{Oppenheim}
J.~Oppenheim, ``For quantum information, two wrongs can make a right,''
  \emph{Science Perspectives}, vol. 321, no. 5897, p. 1783, 2008.

\bibitem{Duan}
R.~Duan, ``Superactivation of zero-error capacity of noisy quantum channels,''
  arXiv:0906.2527 [quant\nobreakdash-ph], 2009.

\end{thebibliography}

\end{document}